\chuanmail\path{gohc@uoguelph.ca}
\chuanweb\path{http://www.chuangoh.org}
\begin{document}

\title{Rate-Optimal Estimation of the Intercept in a Semiparametric Sample-Selection Model\thanks{The author is grateful to 
Shakeeb Khan for several helpful conversations and to seminar participants at Louisiana State University and The 
University of Wisconsin-Madison for their comments.  The usual disclaimer applies.  The author also thanks Daniel 
Siercks for computing support.}} 	
\author{Chuan Goh\thanks{Department of Economics and Finance, University of Guelph, 50 Stone Road East, Guelph, 
ON, Canada N1G 2W1.  \chuanmail, \chuanweb.}}
\date{September 2018}
\maketitle

\newtheorem{definition}{Definition}\newtheorem{hypothesis}{Hypothesis}\newtheorem{lemma}{Lemma}\newtheorem{proposition}{Proposition}\newtheorem{theorem}{Theorem}\newtheorem{corollary}{Corollary}
\newtheorem{assumption}{Assumption}\newtheorem{remark}{Remark}\newtheorem{example}{Example}\newtheorem{condition}{Condition}

\begin{abstract}
This paper presents a new estimator of the intercept of a linear regression model in cases where the outcome 
varaible is observed subject to a selection rule.  The intercept is often in this context of inherent interest; for 
example, in a program evaluation context, the difference between the intercepts in outcome equations for 
participants and non-participants can be interpreted as the difference in average outcomes of participants and 
their counterfactual average outcomes if they had chosen not to participate.  The new 
estimator can under mild conditions exhibit a rate of convergence in probability equal to $n^{-p/(2p+1)}$, 
where $p\ge 2$ is an integer that indexes the strength of certain smoothness assumptions.  This rate of 
convergence is shown in this context to be the optimal rate of convergence for estimation of the 
intercept parameter in terms of a minimax criterion.  The new estimator, unlike other proposals in the 
literature, is under mild conditions consistent and asymptotically normal with a rate of convergence that is the same regardless of the degree to which selection depends on 
unobservables in the outcome equation.  Simulation evidence and an empirical example are included.
\end{abstract}

\noindent JEL {\it Classification:}  C14, C31, C34

\noindent {\it Keywords:}  sample selection, rate optimality, minimax efficiency, nearest neighbours  
\clearpage
\section{Introduction}

\noindent This paper considers a sample-selection model \citep[e.g.,][]{Heckman79} given by
\begin{eqnarray}
Y^* &=&\theta_0+\bm{X}^{\top}\bm{\beta}_0+U,\label{outcome}\\
D &=& 1\left\{\bm{Z}^{\top}\bm{\gamma}_0\ge V\right\},\label{selection}\\
Y &=& DY^*,\label{observation}
\end{eqnarray}
where $[\begin{array}{cccc} D & \bm{X}^{\top} & \bm{Z}^{\top} & Y\end{array}]$ is an observed random vector, 
where 
$[\begin{array}{cc} U & V \end{array}]$ is an unobserved random vector such that $E\left[U^2\right]<\infty$ and 
$E[U|\bm{X}]=0$ and $E[U|\bm{Z}]=E[U]$ almost surely.  Equations~(\ref{outcome}) and 
(\ref{selection}) are typically referred to as the {\it outcome} and {\it selection} equations, respectively.  
Variants of the model given in (\ref{outcome})--(\ref{observation}) have been considered at least since the 
contributions of \citet{Gronau74,Heckman74} and \citet{Lewis74}.  These authors were primarily concerned with the 
issue of selectivity bias in empirical analyses of individual labour-force participation decisions, particularly 
for women.  This bias, which is present in least-squares estimates of the parameter 
$[\begin{array}{cc} \theta_0 & \bm{\beta}^{\top}_0\end{array}]$ appearing above in (\ref{outcome}), arises from 
the assumption that the observed wages of workers are affected by the self selection of 
those workers into the workforce.  In particular, one can only observe a wage that exceeds the reservation wage 
of the individual in question.  In terms of the model given in (\ref{outcome})--(\ref{observation}), the wage 
offer is represented by the variable $Y^*$, while the observed wage is denoted by $Y$.  The employment status of 
the individual in question is denoted by the binary variable $D$, which takes a value of one if the unobserved 
difference $\bm{Z}^{\top}\bm{\gamma}_0-V$ between the wage offer and the reservation wage is positive; $D$ is 
otherwise equal to zero.  Observed variables influencing individual participation decisions are collected in the 
vector $\bm{Z}$, while observed determinants of individual wage offers are collected in $\bm{X}$.  Variants of the 
sample-selection model appearing in (\ref{outcome})--(\ref{observation}) have been found useful for a wide variety of 
applied problems apart from the analysis of indidivual labour supply decisions \citep[e.g.,][]{Vella98}.  Relatively 
recent economic applications include those of \citet{HelpmanMelitzRubinstein08}, \citet{MulliganRubinstein08} and 
\citet{JimenezOngenaPeydroSaurina14}.

This paper focuses on statistical inference regarding the intercept $\theta_0$ appearing above in (\ref{outcome}).  
The intercept in the outcome equation is often of inherent interest in various applications of the model given by 
(\ref{outcome})--(\ref{observation}).  For example, suppose that (\ref{selection}) accurately describes the 
selection of individuals into some treatment group.  In this case, the difference between the intercepts in the 
outcome equations for treated and non-treated individuals may be interpreted as the causal effect of treatment 
when selection to treatment is mean independent of the unobservable $U$ in the outcome equation \citep[e.g.,][p. 500]{AndrewsSchafgans98}.  The intercept in the outcome 
equation is similarly crucial in computing the average wage difference in problems where the sample-selection 
model given above is applied to the analysis of wage differences between workers in two different socioeconomic 
groups, or between unionized and non-unionized workers 
\citep[e.g.,][]{Oaxaca73,Lewis86,Heckman90}.  Finally, the intercept in the outcome equation permits the 
evaluation of the net benefits of a social program in terms of the differences between the observed 
outcomes of participants and their counterfactual expected outcomes had they chosen not to participate 
\citep[e.g.,][]{HeckmanRobb85}.  

Early applied work generally proceeded from the assumption that the unobservables 
$[\begin{array}{cc} U & V\end{array}]$ appearing above in (\ref{outcome})--(\ref{selection}) are bivariate normal 
mean-zero with an unknown covariance matrix and independent of 
$[\begin{array}{cc} \bm{X}^{\top} & \bm{Z}^{\top}\end{array}]$.  This assumption in turn allowed for the 
estimation of the parameters appearing in (\ref{outcome})--(\ref{selection}) via the method of maximum likelihood 
or the related two-step procedure of \citet{Heckman74,Heckman76}.  These estimates, however, are generally 
inconsistent under departures from the assumed bivariate normality of $[\begin{array}{cc} U & V\end{array}]$ 
\citep[e.g.,][]{ArabmazarSchmidt82,Goldberger83,Schafgans04}.  The desirability of not imposing a parametric 
specification on the joint distribution of the unobservables in the outcome and selection equations has led in turn 
to the development of distribution-free methods of estimating the parameters appearing above in 
(\ref{outcome})--(\ref{selection}).  Distribution-free methods of estimating the intercept $\theta_0$ in 
(\ref{outcome}) include the proposals of \citet{GallantNychka87,Heckman90,AndrewsSchafgans98} and \citet{Lewbel07}.  

Estimators of the intercept of the outcome equation implemented by distribution-free procedures have 
large-sample behaviours that vary depending on the extent of endogeneity in the selection mechanism, 
i.e., on the nature and extent of any dependence between the random variables $U$ and $V$ appearing above in (\ref{outcome}) 
and (\ref{selection}), respectively.  Given that these features of the joint distribution of 
$[\begin{array}{cc} U & V \end{array}]$ are typically unknown 
in empirical practice, the dependence of the asymptotic behaviour of intercept estimators on these features potentially 
complicates statistical inference regarding $\theta_0$.  This issue is easily and starkly illustrated in the case of 
ordinary least squares (OLS).  In particular, suppose that selection in the model is based strictly on observables, which 
is equivalent to assuming that the unobservable $U$ appearing in (\ref{outcome}) and the selection indicator $D$ are conditionally 
mean independent given $\bm{X}$ and $\bm{Z}$, i.e., $P\left[E[U|D=1,\bm{X},\bm{Z}]=E[U|D=0,\bm{X},\bm{Z}]\right]=1$.  In this case $\theta_0$ can be 
consistently estimated at a parametric rate with no additional assumptions 
imposed on the joint distribution of $[\begin{array}{ccc} U & V & \bm{Z}^{\top}\bm{\gamma}_0\end{array}]$ by applying OLS 
to the outcome equation using only those observations for which $D=1$.  On the other hand, the OLS estimate of $\theta_0$ 
obtained in this way is inconsistent if the difference $1-P\left[E[U|D=1,\bm{X},\bm{Z}]=E[U|D=0,\bm{X},\bm{Z}]\right]$ 
is positive, even if arbitrarily small.  It follows that OLS generates inferences regarding $\theta_0$ that vary drastically with respect 
to the degree to which $E[U|D=1,\bm{X},\bm{Z}]$ may differ from $E[U|D=0,\bm{X},\bm{Z}]$.

This paper develops a distribution-free estimator of the intercept $\theta_0$ in the outcome equation that is 
consistent and asymptotically normal with a rate of convergence that is 
the same regardless of the joint distribution of $[\begin{array}{ccc} U & V & \bm{Z}^{\top}\bm{\gamma}_0\end{array}]$.  I show that there exists an 
implementation of the proposed estimator of $\theta_0$ that converges uniformly at the rate $n^{-p/(2p+1)}$, where $n$ denotes the sample size and where $p\ge 2$ is an integer 
that indexes the strength of certain smoothness assumptions described below.  The uniformity of this convergence involves uniformity over a class of joint distributions of 
$[\begin{array}{ccc} U & V & \bm{Z}^{\top}\bm{\gamma}_0\end{array}]$ satisfying necessary conditions for the identification of $\theta_0$.   In other words, the estimator developed below is {\it adaptive} to the nature of 
selection in the model.  

This paper also shows that the uniform $n^{-p/(2p+1)}$-rate attainable by the proposed estimator is in fact 
the optimal rate of convergence of an estimator of $\theta_0$ in terms of a minimax criterion.  It follows that 
the proposed estimator may be implemented in such a way as to converge in probability to 
$\theta_0$ at the fastest possible minimax rate.
   
The estimator developed below differs from earlier proposals of \citet{Heckman90} and \citet{AndrewsSchafgans98} 
that involve the {\it rate-adaptive} estimation of the intercept in the outcome equation.  These proposals also 
involve estimators that are consistent and asymptotically normal regardless of the extent to which selection is 
endogenous, but converge to the limiting normal distribution at unknown rates; in particular, see \citet[Theorems~1--2]{SchafgansZindeWalsh02} for the 
estimator of \citet{Heckman90} and \citet[Theorems~2, 3, 5 and 5*]{AndrewsSchafgans98}.  

The estimator developed below also differs from estimators of $\theta_0$ that take the form of averages weighted by the 
reciprocal of an estimate of the density of the selection index $\bm{Z}^{\top}\bm{\gamma}_0$ appearing above in (\ref{selection}).  Such 
estimators \citep[e.g.,][]{Lewbel07}, in common with the estimators of \citet{Heckman90} and 
\citet{AndrewsSchafgans98}, are known to converge generically at unknown rates.  In addition, estimators 
taking the form of inverse density-weighted averages may have sampling distributions that are not even 
asymptotically normal \citep{KhanTamer10,KhanNekipelov13,ChaudhuriHill16}.  In general, the estimators of 
\citet{Heckman90} and \citet{AndrewsSchafgans98} and of estimators in the class of 
inverse density-weighted averages converge at rates that depend critically on conditions involving the relative tail 
thicknesses of the distributions of the selection index $\bm{Z}^{\top}\bm{\gamma}_0$ and of the latent selection 
variable $V$ appearing in (\ref{selection}) \citep{KhanTamer10}.  These conditions may be difficult to verify in 
applications.  The estimator developed below converges by contrast at a known rate under conditions implied by the 
identification of $\theta_0$ to a normal distribution uniformly over the underlying parameter space regardless of the 
relative tail behaviours of $\bm{Z}^{\top}\bm{\gamma}_0$ and $V$.  This facilitates statistical inference regarding $\theta_0$.

The remainder of this paper proceeds as follows.  The following section discusses identification of the 
intercept parameter in the outcome equation and presents the new estimator along with its first-order asymptotic 
properties.  Section~\ref{opt} derives the minimax rate optimality of the new estimator.  Section~\ref{mc} 
presents the results of simulation experiments that investigate the behaviour in finite samples of the proposed 
estimator in relation to other methods.  Section~\ref{ee} considers an application of the new 
estimator to the analysis of gender wage gaps in Malaysia.  Section~\ref{concl} concludes.  Proofs of 
all theoretical results are collected in the appendix.

\section{The New Estimator}\label{est}

\noindent This section presents the new estimator of the intercept $\theta_0$ appearing in (\ref{outcome}) 
and describes its asymptotic behaviour to first order.  Let $\hat{\bm{\beta}}_n$ and 
$\hat{\bm{\gamma}}_n$ denote $\sqrt{n}$-consistent estimators of the parameters $\bm{\beta}_0$ and 
$\bm{\gamma}_0$ appearing above in (\ref{outcome}) and (\ref{selection}), respectively, where $\bm{\gamma}_0$ is 
assumed to be identified up to a location and scale normalization.  The existence of such 
estimators has long been established; see e.g., the proposals of \citet{Han87,Robinson88,PowellStockStoker89,Andrews91,IchimuraLee91,Ichimura93,KleinSpady93,Powell01} or 
\citet{Newey09}.  In addition, suppose that
$\left\{[\begin{array}{cccc} D_i & \bm{X}_i^{\top} & Y_i & \bm{Z}_i^{\top} \end{array}]:\, i=1,\ldots,n\right\}$ are iid 
copies of the random vector $[\begin{array}{cccc} D & \bm{X}^{\top} & Y & \bm{Z}^{\top} \end{array}]$.  Let 
$\bm{z}\in\mathbb{R}^{l}$ be an arbitrary vector, and define 
\begin{equation}\label{etahatn}
\hat{\eta}_n(\bm{z})\equiv \frac{1}{n}\sum_{i=1}^n 1\left\{(\bm{Z}_i-\bm{z})^{\top}\hat{\bm{\gamma}}_n\le 0\right\}
\end{equation}

Let 
\begin{equation}\label{what}
\hat{W}_i\equiv D_i\left(Y_i-\bm{X}^{\top}_i\hat{\bm{\beta}}_n\right)
\end{equation}
for each $i\in\{1,\ldots,n\}$.  This paper proposes to estimate $\theta_0$ via a locally linear smoother of the form
\begin{equation}\label{mhat}
\hat{\theta}_n\equiv \bm{e}_1^{\top}\left(\sum_{i=1}^n \bm{S}_i K_i\bm{S}_i^{\top}\right)^{-1}\sum_{i=1}^n\bm{S}_i K_i\hat{W}_i,
\end{equation}
where $\bm{e}_1=[\begin{array}{cc} 1 & 0\end{array}]^{\top}$ and where 
\begin{equation}\label{si}
\bm{S}_i=[\begin{array}{cc} 1 & \hat{\eta}_n(\bm{Z}_i)-1\end{array}]^{\top}
\end{equation} 
and 
\begin{equation}\label{ki}
K_i= K\left(h_n^{-1}\left(\hat{\eta}_n(\bm{Z}_i)-1\right)\right)
\end{equation} 
for $i=1,\ldots,n$.  The quantity $h_n$ appearing in each 
$K_i$ denotes a bandwidth $h_n>0$ such that $h_n\to 0$ with $nh_n^{3}\to\infty$ as $n\to\infty$, while for some 
$p\ge 2$, $K(\cdot)$ denotes a smoothing kernel of order $p$, i.e., one where 
$\int_{-\infty}^{\infty} K(u) du=1$, 
$\int_{-\infty}^{\infty} u^r K(u)=0$ for all $r\in \{1,\ldots,p-1\}$ and 
$\int_{-\infty}^{\infty} u^{p} K(u)<\infty$.

Assume that the disturbance $U$ in the outcome equation (\ref{outcome}) satisfies $E[|U|]<\infty$ with 
$P\left[E[U|\bm{Z}]=0\right]=1$.  In addition, let the selection index $\bm{Z}^{\top}\bm{\gamma}_0$ be 
distributed with distribution $F_0$, assumed to be absolutely continuous.    The estimator of $\theta_0$ given in (\ref{mhat}) exploits 
the fact that identification of $\theta_0$ occurs ``at infinity'' \citep{Chamberlain86}, or in any case depends crucially on the 
selection index $\bm{Z}^{\top}\bm{\gamma}_0$ being able to take values sufficiently large so that the corresponding 
conditional probabilities of selection take values close to one.  In particular, $\theta_0$ is characterized by the equalities
\begin{eqnarray}
 & & E\left[\left.D\left(Y-\bm{X}^{\top}\bm{\beta}_0\right)\right|F_0\left(\bm{Z}^{\top}\bm{\gamma}_0\right)=1\right]\nonumber\\
 &=& E\left[\left.1\left\{F_0(V)\le 1\right\}\left(\theta_0+U\right)\right|F_0\left(\bm{Z}^{\top}\bm{\gamma}_0\right)=1\right]\nonumber\\
 &=&\theta_0\label{theta0_id}
\end{eqnarray} 
The proposed estimator of $\theta_0$ exploits the representation of the estimand in (\ref{theta0_id}), which suggests the estimation of $\theta_0$ by direct 
estimation of the quantity 
$E\left[\left.D\left(Y-\bm{X}^{\top}\bm{\beta}_0\right)\right|F_0\left(\bm{Z}^{\top}\bm{\gamma}_0\right)=1\right]$.

One can view the estimator given in (\ref{mhat}) as an extension of the Yang--Stute symmetrized nearest-neighbours 
(SNN) estimator of a conditional mean \citep{Yang81,Stute84} to the problem of estimating the intercept 
in the outcome equation (\ref{outcome}).  SNN estimators are characterized by asymptotic behaviours that 
are asymptotically ``design adaptive'' in the sense that their asymptotic normality can generally be established without technical conditions on the 
probability of the design variable taking values in regions of low density \citep{Stute84}.  In the present context, it is shown under certain conditions 
that the estimator $\hat{\theta}_n$ in (\ref{mhat}) is asymptotically normal with a 
rate of convergence that depends neither on the extent to which 
the unobservables $U$ and $V$ are dependent, nor on the relationship between the behaviours of the selection index
$\bm{Z}^{\top}\bm{\gamma}_0$ and the unobservable $V$ in the right tails of their respective marginal distributions.  
This is essentially accomplished by transforming the estimated indices $\bm{Z}^{\top}_i\hat{\bm{\gamma}}_n$ into random 
variables $\hat{\eta}_n(\bm{Z}_i)$ that are approximately uniformly distributed on $[0,1]$.  

It is worth noting that estimators of $\theta_0$ that take the form of inverse density-weighted averages 
\citep[e.g.,][]{Lewbel07} have rates of convergence that generally vary with the extent to which $U$ and $V$ are 
dependent, as well as with the relative right-tail behaviours of $\bm{Z}^{\top}\bm{\gamma}_0$ and $V$ 
\citep{KhanTamer10}.  These estimators rely on an alternative representation of (\ref{theta0_id}) and are consistent 
under additional regularity conditions.  In particular, suppose 
$m_0\left(\bm{z}^{\top}\bm{\gamma}_0\right)\equiv E\left[\left. D\left(Y-\bm{X}^{\top}\bm{\beta}_0\right)\right|\bm{Z}^{\top}\bm{\gamma}_0=\bm{z}^{\top}\bm{\gamma}_0\right]$ 
is everywhere differentiable in $\bm{z}^{\top}\bm{\gamma}_0$ with derivative given by 
$m_0^{(1)}\left(\bm{z}^{\top}\bm{\gamma}_0\right)\equiv \partial m_0\left(\bm{z}^{\top}\bm{\gamma}_0\right)/\partial\bm{z}^{\top}\bm{\gamma}_0$.  
Suppose in addition that the distribution of $\bm{Z}^{\top}\bm{\gamma}_0$ is absolutely continuous with density 
$f_0(\cdot)$ such that 
$E\left[\left|m_0^{(1)}\left(\bm{Z}^{\top}\bm{\gamma}_0\right)/f_0\left(\bm{Z}^{\top}\bm{\gamma}_0\right)\right|\right]<\infty$.  One 
can then write
\begin{eqnarray}
\theta_0 &=& \lim_{F_0(\bm{z}^{\top}\bm{\gamma}_0)\to 1} E\left[\left. D\left(Y-\bm{X}^{\top}\bm{\beta}_0\right)\right|\bm{Z}=\bm{z}\right]\nonumber\\
	&=&\int_{-\infty}^{\infty} m^{(1)}_0\left(\bm{z}^{\top}\bm{\gamma}_0\right) d\bm{z}^{\top}\bm{\gamma}_0\nonumber\\
	&=&\int_{-\infty}^{\infty} \frac{m_0^{(1)}\left(\bm{z}^{\top}\bm{\gamma}_0\right)}{f_0\left(\bm{z}^{\top}\bm{\gamma}_0\right)}\cdot f_0\left(\bm{z}^{\top}\bm{\gamma}_0\right)d\bm{z}^{\top}\bm{\gamma}_0\nonumber\\
	&=& E\left[\frac{m_0^{(1)}\left(\bm{Z}^{\top}\bm{\gamma}_0\right)}{f_0\left(\bm{Z}^{\top}\bm{\gamma}_0\right)}\right],\label{id5}
\end{eqnarray}
which suggests estimating $\theta_0$ via its approximate sample analogue in which $m_0^{(1)}(\cdot)$ and 
$f_0(\cdot)$ are replaced by suitable preliminary estimates, and in which the systematic trimming of observations 
corresponding to small values of $f_0\left(\bm{Z}^{\top}_i\bm{\gamma}_0\right)$ may be required.  This approach 
to estimating $\theta_0$ follows that proposed by \citet{Lewbel97} for estimating a binary choice model arising 
from a latent linear model in which a mean restriction is imposed on the latent error term, and is the 
approach to estimating $\theta_0$ considered in more recent work by \citet{Lewbel07,KhanTamer10} and 
\citet{KhanNekipelov13}.  Consistent estimators of $\theta_0$ that exploit (\ref{id5}) in this way naturally 
depend critically on the assumed finiteness of 
$E\left[\left|m_0^{(1)}\left(\bm{Z}^{\top}\bm{\gamma}_0\right)/f_0\left(\bm{Z}^{\top}\bm{\gamma}_0\right)\right|\right]$, 
an assumption that in turn leads the rates at which they converge to their limiting distributions to depend on the 
relative tail behaviours of the variables in the selection equation or the extent to which selection is endogenous 
\citep{KhanTamer10}.  The non-uniformity in the rate of convergence as one varies the relative tail behaviours of the 
determinants of selection or the dependence between the disturbance terms $U$ and $V$ is a feature that is also shared by 
estimators of $\theta_0$ that involve locally constant or polynomial regressions of $\hat{W}_i$ on the 
untransformed estimated selection indices 
$\bm{Z}^{\top}_i\hat{\bm{\gamma}}_n$ \citep[e.g.,][]{Heckman90,AndrewsSchafgans98}.   Results of this 
nature significantly complicate the task of statistical inference regarding $\theta_0$.  By way of contrast, the 
transformations to $\hat{\eta}_n(\bm{Z}_i)$ of the selection indices used in the locally linear 
SNN estimator given in (\ref{mhat}) permit the locally linear SNN estimator to enjoy asymptotic normality with a rate 
of convergence that varies neither with the endogeneity of selection nor with the relative tail behaviours of the 
variables appearing in the selection equation.

It should also be noted that the result given above in (\ref{theta0_id}), in which the estimand is identified as
$\theta_0=E\left[\left.D\left(Y-\bm{X}^{\top}\bm{\beta}_0\right)\right|F_0(\bm{Z}^{\top}\bm{\gamma}_0)=1\right]$, 
motivates the formulation of the estimator in (\ref{mhat}) as the intercept in a locally linear regression.  
One could as easily in this context use (\ref{theta0_id}) to motivate an estimator of $\theta_0$ as the corresponding 
variant of a Nadaraya--Watson (i.e., locally constant regression) estimator; see in particular the approach taken 
in \citet{StuteZhu05}.  The focus on a locally linear regression estimator of $\theta_0$ is purely to improve the 
rate at which the bias of the proposed estimator vanishes in large samples, given that locally linear regression estimators 
have biases that converge at the same rate regardless of whether the conditioning variable is evaluated at an interior or at a limit point of 
its support \citep[e.g.,][]{FanGijbels92}.  Nadaraya--Watson estimators, on the other hand, have biases that 
converge relatively slowly when the conditioning variable is evaluated at a limit point of its support.

Assumptions underlying the first-order asymptotic behaviour of the estimator given by $\hat{\theta}_n$ in 
(\ref{mhat}) are given as follows.

\begin{assumption}\label{a1}
\begin{enumerate}
\item\label{a1a} 
\begin{enumerate}
\item $\bm{X}$ is $k$-variate, with support not contained in any proper linear subspace of 
$\mathbb{R}^k$;
\item $E\left[\left\|\bm{X}\right\|\right]<\infty$.
\end{enumerate} 
\item\label{a1b}
\begin{enumerate} 
\item $\bm{Z}$ is $l$-variate;
\item the support of $\bm{Z}$ is not contained in any proper linear subspace of $\mathbb{R}^{l}$,;
\item the first component of $\bm{\gamma}_0$ is equal to one;
\item $\bm{Z}$ does not contain a non-stochastic component;
\item\label{a1b4} the distribution $F_0$ of the selection index $\bm{Z}^{\top}\bm{\gamma}_0$ is absolutely 
continuous, with a density function $f_0$ that is differentiable on the support of 
$\bm{Z}^{\top}\bm{\gamma}_0$.
\end{enumerate}
\item\label{a1c} The set $\left\{(D_i,\bm{X}^{\top}_i, \bm{Z}^{\top}_i, Y_i):\,i=1,\ldots,n\right\}$ consists of 
independent observations each with the same distribution as the random vector 
$(D, \bm{X}^{\top}, \bm{Z}^{\top}, Y)$, which is generated according to the model given above in 
(\ref{outcome})--(\ref{observation}), and where 
\begin{eqnarray*}
E\left[U^2\right] &<& \infty;\\
P\left[E[U|\bm{X}]=0\right] &=& 1;\\
P\left[E[U|\bm{Z}] = E[U]\right] &=& 1.
\end{eqnarray*} 
\item\label{a1d} 
\begin{enumerate}
\item The joint conditional distribution given 
$[\begin{array}{cc} \bm{X}^{\top} & \bm{Z}^{\top}\end{array}]=[\begin{array}{cc} \bm{x}^{\top} & \bm{z}^{\top}\end{array}]$ of the 
disturbances $[\begin{array}{cc} U & V\end{array}]$ appearing in (\ref{outcome}) and (\ref{selection}) is absolutely continuous for all 
$[\begin{array}{cc} \bm{x}^{\top} & \bm{z}^{\top}\end{array}]$ in the support of $[\begin{array}{cc} \bm{X}^{\top} & \bm{Z}^{\top}\end{array}]$; 
moreover the corresponding joint density $g_{U,V|\bm{x},\bm{z}}(\cdot,\cdot)$ is continuously differentiable in both arguments almost everywhere on $\mathbb{R}^2$.
\item\label{a1d2} The conditional distribution given $[\begin{array}{cc} \bm{X}^{\top} & \bm{Z}^{\top}\end{array}]=[\begin{array}{cc} \bm{x}^{\top} & \bm{z}^{\top}\end{array}]$ 
of $V$ is absolutely continuous for all points $[\begin{array}{cc} \bm{x}^{\top} & \bm{z}^{\top}\end{array}]$ in the support of 
$[\begin{array}{cc} \bm{X}^{\top} & \bm{Z}^{\top}\end{array}]$; the corresponding density $g_{V|\bm{x},\bm{z}}(\cdot)$ is
differentiable almost everywhere on $\mathbb{R}$.
\end{enumerate}   
\end{enumerate}
\end{assumption} 

\begin{assumption}\label{a2}
\begin{enumerate}
\item\label{a2a} There exist estimators $\hat{\bm{\beta}}_n$ and $\hat{\bm{\gamma}}_n$ such that 
$\left\|\hat{\bm{\beta}}_n-\bm{\beta}_0\right\|=O_p\left(n^{-1/2}\right)$ 
and $\left\|\hat{\bm{\gamma}}_n-\bm{\gamma}_0\right\|=O_p\left(n^{-1/2}\right)$.  
\item\label{a2b} The smoothing kernel $K(\cdot)$ is bounded and twice continuously differentiable with 
$K(u)>0$ on $[0,1]$, $K(u)=0$ for all $u\not\in [0,1]$, with $\int K(u) du=1$ and where $\int K^2(u) du<\infty$.
\item\label{a2c} The bandwidth sequence $\left\{h_n\right\}$ satisfies $h_n>0$ with $h_n\to 0$ as $n\to\infty$, 
and $nh_n^3\to\infty$.
\item\label{a2d}  
\begin{enumerate}
\item \label{a2d1} There exists $p\ge 2$ such that $\int u^j K(u) du=0$ for all $j\in\{1,\ldots,p-1\}$ and $\int u^p K(u) du <\infty$.
\item\label{a2d11} The following hold for $p^*$ equal to the smallest odd integer greater than or equal to the constant $p+1$ specified 
in part~\ref{a2d1} of this 
assumption: 
\begin{enumerate}
\item\label{a2d2} The joint conditional density $g_{U,V|\bm{x},\bm{z}}(\cdot,\cdot)$ specified in Assumption~\ref{a1}.\ref{a1d} is $p^*$-times 
continuously differentiable in both arguments almost everywhere on $\mathbb{R}$ for all 
$[\begin{array}{cc} \bm{x}^{\top} & \bm{z}^{\top}\end{array}]$ in the support of $[\begin{array}{cc} \bm{X}^{\top} & \bm{Z}^{\top}\end{array}]$.
\item\label{a2d3} Similarly, the conditional density $g_{V|\bm{x},\bm{z}}(\cdot)$ specified in Assumption~\ref{a1}.\ref{a1d} is $p^*$-times 
continuously differentiable almost everywhere on $\mathbb{R}$ for all $[\begin{array}{cc} \bm{x}^{\top} & \bm{z}^{\top}\end{array}]$ in the support of 
$[\begin{array}{cc} \bm{X}^{\top} & \bm{Z}^{\top}\end{array}]$.
\end{enumerate}
\end{enumerate}
\end{enumerate}
\end{assumption}

\noindent The conditions of Assumption~\ref{a1} are largely standard and notably suffice for the selection parameter $\bm{\gamma}_0$ 
to be identified up to the particular location and scale normalization imposed by Assumption~\ref{a1}.\ref{a1b}.  
Assumption~\ref{a1} also does not restrict the components $V$ and $\bm{Z}^{\top}\bm{\gamma}_0$ of the selection equation to be 
independent. 

Assumption~\ref{a1} plays a crucial role in controlling the asymptotic bias of the proposed estimator $\hat{\theta}_n$.  In 
particular, identification of $\bm{\gamma}_0$ subject to the conditions of Assumption~\ref{a1}, along with the 
differentiability conditions of Assumption~\ref{a2}.\ref{a2d2}--\ref{a2}.\ref{a2d3}, imply a smoothness restriction on 
the conditional mean function
\begin{equation}\label{m0F}
m_{F_0}(q)\equiv E\left[D\left(Y-\bm{X}^{\top}\bm{\beta}_0\right)\left|F_0\left(\bm{Z}^{\top}\bm{\gamma}_0\right)=q\right.\right],
\end{equation}
This smoothness restriction takes the form of differentiability of $m_{F_0}(q)$ for $q\in (0,1)$ up to order no less than $p$, where $p\ge 2$ is the 
constant specified in Assumption~\ref{a2}.\ref{a2d}, along with finiteness of the left-hand limit of 
$\left(d^p/dq^p\right) E\left[D\left(Y-\bm{X}^{\top}\bm{\beta}_0\right)\left|F_0\left(\bm{Z}^{\top}\bm{\gamma}_0\right)=q\right.\right]$ 
at $q=1$.  This smoothness restriction, in other words, corresponds to a standard assumption in the literature on kernel estimation of 
conditional mean functions.  On the other hand, the differentiability to $p^*$-order in the second argument of 
$g_{U,V|\bm{x},\bm{z}}(\cdot,\cdot)$ is slightly stronger than the usual assumption of differentiability to 
order $p$.  This slight strengthening of the standard differentiability condition is used in the rate optimality arguments developed below in 
Section~\ref{opt}.  Details are contained in the proof of Theorem~\ref{ubthm} below. 

The smoothness restriction on $m_{F_0}(q)$ given in (\ref{m0F}) can also be seen to be implied by the identification of 
$\bm{\gamma}_0$ up to a location and scale normalization and by the smoothness conditions imposed in Assumption~\ref{a2}.\ref{a2d} on 
the conditional densities $g_{U,V|\bm{x},\bm{z}}(\cdot,\cdot)$ and $g_{V|\bm{x},\bm{z}}(\cdot,\cdot)$ for any 
$[\begin{array}{cc} \bm{x}^{\top} & \bm{z}^{\top}\end{array}]$ in the support of $[\begin{array}{cc} \bm{X}^{\top} & \bm{Z}^{\top}\end{array}]$.  In 
particular, one can show that under the conditions of Assumption~\ref{a1} and \ref{a2}.\ref{a2d}, $U$ has a conditional distribution given 
$F_0(V)=q$ for any $q\in [0,1]$ that is absolutely continuous with density
\begin{equation}\label{condl_dens}
r_{U|Q}(u|q)\equiv\frac{g_{UV}\left(u,F_0^{-1}(q)\right)}{g_V\left(F^{-1}_0(q)\right)},
\end{equation}
where $g_{UV}(\cdot,\cdot)$ and $g_V(\cdot)$ are respectively the joint density of $[\begin{array}{cc} U & V\end{array}]$ and the marginal density of 
$V$.  The conditional density $r_{U|Q}(u|q)$ is, given the absolute continuity of $F_0$ and the differentiability conditions on $g_{UV}$ and $g_V$ 
implied by Assumption~\ref{a2}.\ref{a2d}, $(p+1)$-times differentiable in $q$ on $(0,1)$ for any $u\in\mathbb{R}$.  The $(p+1)$-times 
differentiability of $r_{U|Q}(u|q)$ in $q$ on $(0,1)$ in turn implies the finiteness of 
$\left.\left(\partial^p/\partial q^p\right) r_{U|Q}(u|q)\right|_{q=1}$ for any $u\in\mathbb{R}$.  It is the finiteness of 
$\left.\left(\partial^p/\partial q^p\right) r_{U|Q}(u|q)\right|_{q=1}$ that implies the smoothness restrictions on $m_{F_0}(q)$ mentioned above.  
Further details are supplied below in Appendix~\ref{lemmabound}.

Let $\sigma^2_{U|F_0(\bm{Z}^{\top}\bm{\gamma}_0)}(q)\equiv E\left[U^2\left|F_0\left(\bm{Z}^{\top}\bm{\gamma}_0\right)=q\right.\right]$, where $U$ is 
the disturbance in the outcome equation (\ref{outcome}).  The following result summarizes the large-sample behaviour to first order of the proposed 
estimator:

\begin{theorem}\label{mainthm}
Under the conditions of Assumptions~\ref{a1} and \ref{a2}, the estimator $\hat{\theta}_n$ given above in 
(\ref{mhat}) satisfies
\[
\sqrt{nh_n}\left(\hat{\theta}_n-\theta_0-\frac{h_n^p}{p!}\int u^p K(u) du\cdot m^{(p)}_{F_0}(1)\right)\stackrel{d}{\to} N\left(0,\sigma^2_{U|F_0(\bm{Z}^{\top}\bm{\gamma}_0)}(1)\int K^2(u) du\right)
\]
as $n\to\infty$, where $m^{(p)}_{F_0}(1)=\lim_{q\uparrow 1}\left.\left(d^p/(dq^{\prime p}\right) m_{F_0}(q^{\prime})\right|_{q^{\prime}=q}$ for $m_{F_0}(\cdot)$ as given above in (\ref{m0F}).
\end{theorem}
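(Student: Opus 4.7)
The plan is to establish Theorem~\ref{mainthm} by rewriting $\hat{\theta}_n$ in its equivalent weighted-average form $\hat{\theta}_n = \sum_i \omega_{n,i}\hat{W}_i$, where the $\omega_{n,i}$ are the effective locally linear weights at the boundary $q=1$ built from $(\bm{S}_i,K_i)$ and satisfy $\sum_i \omega_{n,i} = 1$ and $\sum_i \omega_{n,i}(\hat{\eta}_n(\bm{Z}_i)-1) = 0$. Substituting the decomposition $\hat{W}_i = D_i(\theta_0 + U_i) + D_i \bm{X}_i^\top(\bm{\beta}_0 - \hat{\bm{\beta}}_n)$ separates $\theta_0$, a leading stochastic term $\sum_i \omega_{n,i}D_iU_i$, the locally linear bias driven by $m_{F_0}$, and a remainder involving $\hat{\bm{\beta}}_n$. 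The strategy is then to reduce to an oracle version in which $\bm{\beta}_0$ and $\bm{\gamma}_0$ are known, and to invoke standard boundary theory for locally linear regression.

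The $\hat{\bm{\beta}}_n$ remainder is $O_p(n^{-1/2})$ by Assumption~\ref{a2}.\ref{a2a} together with stochastic boundedness of $\sum_i \omega_{n,i}D_i\bm{X}_i^\top$, hence $o_p((nh_n)^{-1/2})$ since $h_n\to 0$. The $\hat{\bm{\gamma}}_n$ piece is more delicate, since it enters the indicator-valued $\hat{\eta}_n$ and then the rescaled kernel. I would first establish the uniform expansion $\hat{\eta}_n(\bm{z}) = F_0(\bm{z}^\top\bm{\gamma}_0) + O_p(n^{-1/2})$ by combining an empirical-process bound for the empirical CDF of $\{\bm{Z}_j^\top\bm{\gamma}_0\}$ with a mean-value expansion in $\bm{\gamma}$ that uses the differentiability of $f_0$ in Assumption~\ref{a1}.\ref{a1b4} and the $\sqrt{n}$-consistency of $\hat{\bm{\gamma}}_n$. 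Plugging this expansion into the twice continuously differentiable compactly supported $K$, Taylor-expanding in the kernel argument, and tracking the propagation through the weighted-average representation, with the bandwidth condition $nh_n^3\to\infty$ controlling the $h_n^{-1}$ factor introduced by the kernel derivative, shows that $\hat{\eta}_n(\bm{Z}_i)$ may be replaced inside $K_i$ by the oracle quantity $Q_i \equiv F_0(\bm{Z}_i^\top\bm{\gamma}_0)$ at the cost of an $o_p((nh_n)^{-1/2})$ term.

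With the oracle reduction in place, $\hat{\theta}_n$ is a locally linear smoother of $D_i(\theta_0+U_i)$ against the uniform design $Q_i$ at the boundary $q=1$. The bias is a standard Taylor expansion of $m_{F_0}$ about $q=1$: the $p$-th order kernel moment conditions in Assumption~\ref{a2}.\ref{a2d1} annihilate the first $p-1$ derivatives, while the one-sided $p$-th differentiability of $m_{F_0}$ at $q=1$ supplied by Assumption~\ref{a2}.\ref{a2d11} via the conditional-density identity (\ref{condl_dens}) yields the leading boundary bias $(h_n^p/p!)\int u^p K(u)\,du \cdot m_{F_0}^{(p)}(1)$. The asymptotic variance comes from computing the conditional variance of $\sum_i \omega_{n,i}D_iU_i$ given $\{\bm{Z}_j^\top\bm{\gamma}_0\}$, using uniformity of the design near the boundary and the fact that $D_i=1$ with conditional probability tending to one as $Q_i\uparrow 1$; the standard SNN boundary calculation delivers the limit $(nh_n)^{-1}\sigma^2_{U|F_0(\bm{Z}^\top\bm{\gamma}_0)}(1)\int K^2(u)\,du$. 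Asymptotic normality then follows from a Lindeberg--Feller CLT applied to the centred sum, whose condition is verified from boundedness and compact support of $K$ together with $E[U^2]<\infty$.

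The main obstacle is the second step above: showing that plugging $\hat{\bm{\gamma}}_n$ into the indicator-valued $\hat{\eta}_n$ and then into the rescaled kernel has a negligible effect at the rate $\sqrt{nh_n}$. This requires splitting the $\hat{\eta}_n - F_0$ discrepancy into an empirical-CDF piece and a $\hat{\bm{\gamma}}_n$-plug-in piece, a stochastic equicontinuity argument to handle the non-smooth indicator, and a Taylor expansion of $K$ whose $h_n^{-1}$ factor is absorbed by the $\sqrt{n}$-consistency of $\hat{\bm{\gamma}}_n$ in conjunction with the bandwidth condition $nh_n^3\to\infty$. Once the design has been effectively replaced by a uniform one, everything else is standard boundary locally linear asymptotics.
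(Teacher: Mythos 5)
Your proposal is correct and follows essentially the same route as the paper's proof: the same decomposition into a $\hat{\bm{\beta}}_n$ remainder of order $O_p(n^{-1/2})$, a splitting of $\hat{\eta}_n-\eta_0$ into an empirical-process (Donsker/equicontinuity) piece and a smooth plug-in piece controlled by the $\sqrt{n}$-consistency of $\hat{\bm{\gamma}}_n$, a Taylor expansion of the kernel whose derivative terms are absorbed using $nh_n^3\to\infty$, and then standard boundary locally linear bias and variance calculations with a CLT for the leading term. The only cosmetic difference is that you phrase the estimator via effective weights $\omega_{n,i}$ while the paper works directly with the matrix form $\bm{e}_1^{\top}(\sum_i\bm{S}_iK_i\bm{S}_i^{\top})^{-1}\sum_i\bm{S}_iK_i\hat{W}_i$.
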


It follows from Theorem~\ref{mainthm} that the rate of convergence of $\hat{\theta}_n$ to its limiting normal distribution 
is unaffected by the dependence, if any, between the disturbance terms $U$ and $V$ in (\ref{outcome}) and (\ref{selection}), 
respectively.  The rate of convergence of $\hat{\theta}_n$ is also unaffected 
by the relative upper tail thicknesses of the distributions of $V$ and of the selection index 
$\bm{Z}^{\top}\bm{\gamma}_0$.  

We also have from the statement of Theorem~\ref{mainthm} that a necessary condition for the consistency of 
$\hat{\theta}_n$ is the finiteness of the derivative $m^{(p)}_{F_0}(1)$.  
The finiteness of $m^{(p)}_{F_0}(1)$, as discussed above, is implied by the identification of $\bm{\gamma}_0$ up to a location 
and scale normalization as well as by the differentiability conditions 
specified in Assumptions~\ref{a2}.\ref{a2d2}--\ref{a2}.\ref{a2d3}.  From this it follows that the consistency of the 
proposed estimator is implied by natural restrictions on the joint distribution of 
$[\begin{array}{ccc} U & V & \bm{Z}^{\top}\bm{\gamma}_0\end{array}]$.  These distributional restrictions correspond 
collectively to a standard smoothness restriction in the literature on kernel estimation of conditional mean functions.

The presence of $m^{(p)}_{F_0}(1)$ in the bias term appearing in Theorem~\ref{mainthm}, however, 
indicates that the approximate large-sample bias of $\hat{\theta}_n$ does depend on the extent to which $U$ is mean 
dependent on $V$.  In particular, the conditional mean derivative $m^{(p)}_{F_0}(1)$ depends on the smoothness of the conditional 
mean $E\left[U\left|F_0(V)=q\right.\right]$ as a function of $q$ for values of $q$ near one; Appendix~\ref{lemmabound} below
contains further discussion.  It is worth noting in this connection that $m^{(p)}_{F_0}(1)=0$ when the selection mechanism is exogenous to the extent 
that $U$ is mean independent of $V$, i.e., when $P\left[E[U|V]= E[U]\right]=1$.  

The dependence of the approximate large-sample bias of $\hat{\theta}_n$ on the joint distribution of 
$[\begin{array}{ccc} U & V & \bm{Z}^{\top}\bm{\gamma}_0\end{array}]$ through the conditional mean derivative 
$m^{(p)}_{F_0}(1)$ can be ameliorated in practice by a judicious choice of variable
bandwidth; see Corollary~\ref{cormain} below and the corresponding discussion and simulation 
evidence presented in Section~\ref{mc}.  
Theorem~\ref{mainthm} in any case indicates that the approximate large-sample bias, but not the variance, of the proposed estimator 
depends on the joint distribution of $[\begin{array}{ccc} U & V & \bm{Z}^{\top}\bm{\gamma}_0\end{array}]$.  
Theorem~\ref{mainthm} as such distinguishes the asymptotic behaviour of $\hat{\theta}_n$ from those of existing estimators 
of $\theta_0$ \citep[e.g.,][]{Heckman90,Lewbel07} whose biases and variances both depend on the joint distribution 
of $[\begin{array}{ccc} U & V & \bm{Z}^{\top}\bm{\gamma}_0\end{array}]$.

The following corollary is immediate from Theorem~\ref{mainthm}:

\begin{corollary}\label{cormain}
The following hold under the conditions of Theorem~\ref{mainthm}:
\begin{enumerate}
\item\label{cor1} If the additional condition that $nh_n^{2p+1}\to 0$ holds, we have
\[
\sqrt{nh_n}\left(\hat{\theta}_n-\theta_0\right)\stackrel{d}{\to} N\left(0,\sigma^2_{U|F_0\left(\bm{Z}^{\top}\bm{\gamma}_0\right)}(1)\int K^2(u) du\right)
\]
as $n\to\infty$.
\item\label{cor2} The theoretical bandwidth $h^*_n$ minimizing the asymptotic mean-squared error of $\hat{\theta}_n$ is given by
\[
h^*_n=\left[\frac{(p!)^2\sigma^2_{U|F_0\left(\bm{Z}^{\top}\bm{\gamma}_0\right)}(1)\int K^2(u) du}{2p\left(\int u^p K(u) du\right)^2\left(m^{(p)}_{F_0}(1)\right)^2 \cdot n}\right]^{\frac{1}{2p+1}}.
\]
\end{enumerate}
\end{corollary}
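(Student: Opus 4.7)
The plan is to derive both claims as essentially algebraic consequences of the distributional statement of Theorem~\ref{mainthm}, with Slutsky's theorem handling the bias term in part~\ref{cor1} and elementary calculus producing the AMSE-minimizing bandwidth in part~\ref{cor2}.

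For part~\ref{cor1} the first step is to observe that the scaling factor multiplying the bias in the conclusion of Theorem~\ref{mainthm} can be rewritten as
\[
\sqrt{nh_n}\cdot\frac{h_n^p}{p!} = \frac{\left(nh_n^{2p+1}\right)^{1/2}}{p!}.
\]
The theorem can then be restated as $\sqrt{nh_n}\bigl(\hat{\theta}_n-\theta_0\bigr) - b_n \stackrel{d}{\to} N\!\left(0,\sigma^2_{U|F_0(\bm{Z}^{\top}\bm{\gamma}_0)}(1)\int K^2(u)\,du\right)$, where
\[
b_n \equiv \frac{\left(nh_n^{2p+1}\right)^{1/2}}{p!}\int u^p K(u)\,du\cdot m^{(p)}_{F_0}(1)
\]
is a deterministic sequence. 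The extra assumption $nh_n^{2p+1}\to 0$ together with the finiteness of $m^{(p)}_{F_0}(1)$ (guaranteed by the smoothness conditions of Assumptions~\ref{a1}--\ref{a2}, as noted in the discussion preceding the corollary) yields $b_n\to 0$, so Slutsky's theorem delivers the stated limit.

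For part~\ref{cor2} I would form the approximate mean-squared error implied by Theorem~\ref{mainthm} by squaring the bias term and adding the leading variance contribution, obtaining the function of $h$ given by
\[
\mathrm{AMSE}(h) = \frac{h^{2p}}{(p!)^2}\left(\int u^p K(u)\,du\right)^{\!2}\!\left(m^{(p)}_{F_0}(1)\right)^{\!2} + \frac{\sigma^2_{U|F_0(\bm{Z}^{\top}\bm{\gamma}_0)}(1)\int K^2(u)\,du}{nh}.
\]
Differentiating with respect to $h$ and setting the derivative to zero gives the first-order condition
\[
\frac{2p\,h^{2p-1}}{(p!)^2}\left(\int u^p K(u)\,du\right)^{\!2}\!\left(m^{(p)}_{F_0}(1)\right)^{\!2} = \frac{\sigma^2_{U|F_0(\bm{Z}^{\top}\bm{\gamma}_0)}(1)\int K^2(u)\,du}{nh^2},
\]
which rearranges immediately to yield $h^{2p+1}$ equal to the bracketed expression in the statement; taking the $(2p+1)$-th root produces the claimed $h_n^*$. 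A second-derivative check, or the observation that the AMSE is the sum of a positive multiple of $h^{2p}$ and a positive multiple of $h^{-1}$, confirms that this critical point is the unique interior minimizer.

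There is no genuine obstacle here, only a consistency check worth flagging: at the optimum one has $h_n^*\asymp n^{-1/(2p+1)}$, so $nh_n^{*3}\asymp n^{(2p-2)/(2p+1)}\to\infty$ for every integer $p\ge 2$, confirming that $h_n^*$ is compatible with Assumption~\ref{a2}.\ref{a2c} and hence that Theorem~\ref{mainthm} remains applicable. Note also that $nh_n^{*\,2p+1}\asymp 1$ rather than tending to zero, so the undersmoothing hypothesis of part~\ref{cor1} is incompatible with the AMSE-optimal rate, as expected from the usual bias-variance trade-off in nonparametric estimation.
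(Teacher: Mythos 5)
Your proposal is correct and matches the paper's intent: the paper offers no separate argument, stating only that the corollary is immediate from Theorem~\ref{mainthm}, and your Slutsky step for part~\ref{cor1} and the first-order condition $h^{2p+1}=(p!)^2\sigma^2_{U|F_0(\bm{Z}^{\top}\bm{\gamma}_0)}(1)\int K^2(u)\,du\big/\bigl[2p\bigl(\int u^pK(u)\,du\bigr)^2\bigl(m^{(p)}_{F_0}(1)\bigr)^2 n\bigr]$ for part~\ref{cor2} are exactly the standard computations being alluded to. The compatibility checks you add (that $h_n^*$ satisfies $nh_n^{3}\to\infty$ and that undersmoothing is inconsistent with the AMSE-optimal rate) are correct and harmless extras.
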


\section{Rate Optimality}\label{opt}

\noindent Continue to let $p\ge 2$ be as specified above in Assumption~\ref{a2} in the 
previous section, and $n$ the sample size.  This section shows that under the conditions of Assumptions~\ref{a1} 
and \ref{a2}, the rate $n^{-p/(2p+1)}$ is the fastest achievable, or {\it optimal}, rate of convergence of an 
estimator of the intercept $\theta_0$ in (\ref{outcome}).  The optimality in question is relative to the 
convergence rates of all other estimators of $\theta_0$ and excludes by definition those estimators that are asymptotically 
superefficient at particular points in the underlying parameter space.  The exclusion of 
superefficient estimators in this context notably rules out estimators that converge at the parametric 
rate of $n^{-1/2}$ under conditions stronger than those given above in Assumptions~\ref{a1} and \ref{a2}.  For 
example, the OLS estimator of $\theta_0$ based on observations for which $D_i=1$ is superefficient for 
specifications of (\ref{outcome})--(\ref{observation}) over submodels in which the disturbance $U$ and the selection indicator $D$ are conditionally 
mean independent given $\bm{X}$ and $\bm{Z}$, i.e., models where $P\left[E[U|D=1,\bm{X},\bm{Z}]=E[U|D=0,\bm{X},\bm{Z}]=0\right]=1$.  As noted in the 
Introduction, the OLS estimator of $\theta_0$ converges at the standard rate of $n^{-1/2}$ when $U$ and $D$ are conditionally mean independent given 
$\bm{X}$ and $\bm{Z}$ but is otherwise inconsistent.  

The approach to optimality taken here follows that of \citet{Horowitz93}, which was in turn based on the 
approach of \citet{Stone80}.  In particular, let 
$\left\{\Psi_n:\,n=1,2,3,\ldots\right\}$ denote a sequence of sets of the form 
\begin{equation}\label{Psin}
\Psi_n=\left\{\psi:\,\psi=(\bm{\psi}_1,g)\right\},
\end{equation} 
where $\bm{\psi}_1=(\theta,\bm{\beta}^{\top},\bm{\gamma}^{\top})^{\top}\in\mathbb{R}^{1+k+l}$ and where 
$g$ denotes the joint conditional density given $\bm{X}$ and $\bm{Z}$ of the disturbances $U$ and $V$ appearing above in (\ref{outcome}) 
and (\ref{selection}), respectively.  The quantity $\psi$ may depend generically on $n$. 

Consider the observable random variables $D$, $Y$, $\bm{X}$ and $\bm{Z}$ appearing above in 
(\ref{outcome})--(\ref{observation}).  Suppose that for each $n$, the joint conditional distribution of the 
vector $(D,Y)$ given $\bm{X}$ and $\bm{Z}$ is indexed by some $\psi\in\Psi_n$.  Let 
$P_{\psi}[\cdot]\equiv P_{(\bm{\psi}_1,g)}[\cdot]$ denote the corresponding probability measure.  Following 
\citet{Stone80}, one may in this context define a constant $\rho>0$ to be an {\it upper bound} on the rate of 
convergence of estimators of the intercept parameter $\theta_0$ if for every estimator sequence 
$\left\{\theta_n\right\}$,
\begin{equation}\label{ub1}
\liminf_{n\to\infty}\sup_{\psi\in\Psi_n} P_{\psi}\left[\left|\theta_n-\theta\right|> sn^{-\rho}\right]>0
\end{equation}
for all $s>0$, and if 
\begin{equation}\label{ub2}
\lim_{s\to 0}\liminf_{n\to\infty}\sup_{\psi\in\Psi_n} P_{\psi}\left[\left|\theta_n-\theta\right|>sn^{-\rho}\right]=1,
\end{equation}
where $\theta$ as it appears in (\ref{ub1}) and (\ref{ub2}) refers to the first component of the 
finite-dimensional component $\bm{\psi}_1$ of $\psi$.  

In addition, define $\rho>0$ to be an {\it achievable} rate of convergence for the intercept parameter if there 
exists an estimator sequence $\left\{\theta_n\right\}$ such that
\begin{equation}\label{ach}
\lim_{s\to\infty}\limsup_{n\to\infty}\sup_{\psi\in\Psi_n} P_{\psi}\left[\left|\theta_n-\theta\right|> sn^{-\rho}\right]=0.
\end{equation}
One calls $\rho>0$ the {\it optimal} rate of convergence for estimation of the intercept parameter if it is both 
an upper bound on the rate of convergence and achievable.  In what follows, I first show that for large $n$, 
$p/(2p+1)$ is an upper bound on the rate of convergence.  I then show that there exists an implementation of 
the estimator given in (\ref{mhat}) that attains the $n^{-p/(2p+1)}$-rate of convergence uniformly 
over $\Psi_n$ as $n\to\infty$. 

The approach taken first involves the specification for each $n$ of a subset $\Psi^*_n$ of the parameter set 
$\Psi_n$ in which the finite-dimensional component $\bm{\psi}_1\equiv\bm{\psi}_{1n}$ lies in a shrinking 
neighbourhood $\Psi^*_{1n}$ of some point 
$[\begin{array}{ccc} \theta_0 & \bm{\beta}_0^{\top} & \bm{\gamma}_0^{\top}\end{array}]^{\top}\in\mathbb{R}^{1+k+l}$.  In addition, the 
infinite-dimensional component $g$ is embedded in a curve (i.e., parametrization) indexed by a scalar $\psi_{2n}$ on a shrinking 
neighbourhood $\Psi^*_{2n}$ of a bivariate density function $g_0$ satisfying all relevant conditions of Assumptions~\ref{a1} and \ref{a2} for a 
conditional density of $U$ and $V$ given $\bm{X}$ and $\bm{Z}$.  

In particular, consider a parametrization of the conditional joint density $g$ of $(U,V)$ given $\bm{X}$ and $\bm{Z}$ given by $g_{\psi_{2n}}$ for 
$\psi_2\in\Psi^*_{2n}$, where for some $\psi_{2n0}\in\Psi^*_{2n}$, we have $g_{\psi_{2n0}}(u,v|\bm{x},\bm{z})=g_0(u,v|\bm{x},\bm{z})$ for each 
$[\begin{array}{cccc} u & v & \bm{x}^{\top} & \bm{z}^{\top}\end{array}]\in\mathbb{R}^{2+k+l}$; i.e., the curve on $\Psi^*_{2n}$ given by 
$\psi_{2n}\to g_{\psi_{2n}}$ passes through the true conditional joint density $g_0$ at some point $\psi_{2n0}\in\Psi^*_{2n}$.

Now let $\Psi^*_n\equiv \Psi^*_{1n}\times\Psi^*_{2n}$.  Let $s>0$ be arbitrary, and let $\left\{\theta_n\right\}$ denote an arbitrary sequence of estimators of 
$\theta_0$.  Consider that if
\begin{equation}\label{ub1b}
\liminf_{n\to\infty}\sup_{\psi_{n}\in\Psi^*_{n}} P_{\psi_{n}}\left[n^{\frac{p}{2p+1}}\left|\theta_n-\theta\right|>s\right]>0,
\end{equation}
then (\ref{ub1}) holds with $\rho=p/(2p+1)$.  This is because the set $\Psi_n$ in (\ref{ub1}) contains the 
set over which the supremum is taken in (\ref{ub1b}).  Similarly, if 
\begin{equation}\label{ub2b}
\lim_{s\to 0}\liminf_{n\to\infty}\sup_{\psi_{n}\in\Psi^*_{n}} P_{\psi_{n}}\left[n^{\frac{p}{2p+1}}\left|\theta_n-\theta\right|>s\right]=1
\end{equation}
holds, then so does (\ref{ub2}).

It follows that proving (\ref{ub1b}) and (\ref{ub2b}) suffices to show that $p/(2p+1)$ is an upper bound on 
the rate of convergence; the key step in the proof is the specification of a suitable parametrization 
$\psi_{2n}\to g_{\psi_{2n}}$ for $\psi_{2n}\in\Psi^*_{2n}$.  This is in fact the approach taken 
in Appendix~\ref{ubthmpf}, which contains a proof of the following result:

\begin{theorem}\label{ubthm}

Under the conditions of Assumptions~\ref{a1} and \ref{a2}, (\ref{ub1b}) and (\ref{ub2b}) hold.
\end{theorem}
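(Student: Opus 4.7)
The plan is to carry out the two-point Le Cam style lower-bound argument adapted by \citet{Stone80} and \citet{Horowitz93} to non-parametric estimation at a boundary point. I would fix a reference tuple $(\theta_0, \bm{\beta}_0, \bm{\gamma}_0, g_0) \in \Psi^*_n$ and exhibit a one-parameter curve $\psi_{2n} \mapsto (\theta(\psi_{2n}), \bm{\beta}_0, \bm{\gamma}_0, g_{\psi_{2n}})$ inside $\Psi^*_n$, passing through the reference at $\psi_{2n0}$, with two properties: (i) the joint law of $(D, Y, \bm{X}, \bm{Z})$ under the perturbed tuple is Hellinger-close to its reference value, and (ii) $|\theta(\psi_{2n}) - \theta_0|$ is of order $n^{-p/(2p+1)}$ at the boundary of $\Psi^*_{2n}$. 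Given (i) and (ii), Le Cam's two-point inequality forces any estimator sequence $\{\theta_n\}$ to satisfy
\[
\sup_{\psi_n \in \Psi^*_n} P_{\psi_n}\bigl[n^{p/(2p+1)}|\theta_n - \theta| > s\bigr] > 0
\]
uniformly in $n$ for every fixed $s > 0$, which is exactly (\ref{ub1b}); (\ref{ub2b}) then follows from the same construction after driving the perturbation amplitude, and with it $s$, to zero.

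To build the perturbation I would take a compactly supported $p^*$-times continuously differentiable bump $\phi:[0,1]\to[0,\infty)$ with $\phi(0) > 0$, set $h_n = c_1 n^{-1/(2p+1)}$, and consider a multiplicative perturbation
\[
g_{\psi_{2n}}(u,v \mid \bm{x}, \bm{z}) = g_0(u,v \mid \bm{x}, \bm{z})\Bigl(1 + \psi_{2n} h_n^{p}\, \xi(u \mid v, \bm{x}, \bm{z})\, \phi\bigl(h_n^{-1}(1 - F_0(v))\bigr)\Bigr)
\]
coupled with a compensating shift $\theta(\psi_{2n}) - \theta_0 = \psi_{2n} h_n^p \cdot c_3 + o(h_n^p)$, with $\xi$ bounded and smooth and $c_3$ an explicit constant. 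The pair $(\xi, c_3)$ is chosen so that (a) $g_{\psi_{2n}}$ is nonnegative, integrates to one, and has $V$-marginal equal to that of $g_0$ so that $F_0$ is unchanged, (b) the identifying restrictions $P[E[U\mid \bm{X}]=0] = P[E[U\mid \bm{Z}] = E[U]] = 1$ of Assumption~\ref{a1}.\ref{a1c} are preserved, and (c) the value of $m_{F_0}(1)$ induced by $(\theta(\psi_{2n}), g_{\psi_{2n}})$ equals $\theta(\psi_{2n})$, so that the tuple truly lies in $\Psi^*_n$. A natural starting point is $\xi(u\mid v, \bm{x}, \bm{z}) \propto u - E_{g_0}[U \mid V = v, \bm{X}, \bm{Z}]$, which injects a small amount of $(U,V)$-dependence strictly inside the window $F_0(V) \in [1 - h_n, 1]$; the $p^*$-differentiability of $\phi$ together with the baseline smoothness of $g_0$ guarantees that $g_{\psi_{2n}}$ inherits the full differentiability required by Assumption~\ref{a2}.\ref{a2d11}.

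Two computations then finish the argument. Substituting $g_{\psi_{2n}}$ into the identification formula~(\ref{theta0_id}) and changing variables by $v = F_0^{-1}(q)$ pins down the leading order of $\theta(\psi_{2n}) - \theta_0$ at $\psi_{2n} h_n^p \cdot c_3$; a direct Hellinger computation, exploiting the fact that the perturbation has $V$-support of measure $O(h_n)$, gives
\[
H^2\bigl(g_{\psi_{2n}}, g_0\bigr) \le C\, \psi_{2n}^2 h_n^{2p} \cdot h_n = C\, \psi_{2n}^2 h_n^{2p+1},
\]
so by tensorisation $H^2\bigl(P_{\psi_{2n}}^{\otimes n}, P_{\psi_{2n0}}^{\otimes n}\bigr) \le Cn\psi_{2n}^2 h_n^{2p+1} = C\psi_{2n}^2$ for $h_n = c_1 n^{-1/(2p+1)}$. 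Fix $s > 0$ and choose $\psi_{2n}$ at the boundary of $\Psi^*_{2n}$ so that $|\theta(\psi_{2n}) - \theta_0| = s\, n^{-p/(2p+1)}(1 + o(1))$; then the total-variation distance $\mathrm{TV}\bigl(P_{\psi_{2n0}}^{\otimes n}, P_{\psi_{2n}}^{\otimes n}\bigr)$ stays strictly below $1$ uniformly in $n$, and Le Cam's two-point lemma delivers (\ref{ub1b}). A standard refinement of the two-point construction, in which $s \to 0$ and $\psi_{2n}$ shrinks accordingly so that the total variation vanishes, drives the probability in (\ref{ub2b}) to $1$.

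I expect the main technical obstacle to be clause (c) above: the perturbed pair $(\theta(\psi_{2n}), g_{\psi_{2n}})$ must simultaneously respect the nonnegativity, normalization, and unchanged $V$-marginal of $g_{\psi_{2n}}$, the conditional-mean restrictions of Assumption~\ref{a1}.\ref{a1c}, and the $p^*$-fold differentiability of Assumption~\ref{a2}.\ref{a2d11}, while still producing a nonzero leading-order shift of $\theta$ of exact magnitude $\psi_{2n} h_n^p$. The role of the slightly-stronger-than-$p$ smoothness hypothesis is precisely to give $\phi$ (and hence $g_{\psi_{2n}}$) enough regularity that $\Psi^*_n$-membership and the leading-order expansion of $\theta(\psi_{2n}) - \theta_0$ both go through at the optimal scale $h_n \sim n^{-1/(2p+1)}$.
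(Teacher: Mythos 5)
Your route (a two-point Le Cam argument with a Hellinger computation) is genuinely different from the paper's, which instead embeds $g_0$ in a smooth parametric curve (\ref{ratestarp1}) constructed so that the conditional log-likelihood (\ref{sublike}) has identically vanishing derivatives in $\delta_1=\theta-\theta_0$ up to order $p^*-1$, establishes local asymptotic normality of that submodel via \citet[Corollary~1]{RotnitzkyCoxBottaiRobins00}, and then invokes the H\'{a}jek--Le Cam asymptotic minimax theorem \citep[inequality (II.12.18)]{IbragimovHasminskii81}. The difference matters for (\ref{ub2b}) in particular: a plain two-point lemma with total variation tending to zero only forces the minimax error probability up to $1/2$, not to $1$, so even a repaired version of your construction would need either a many-point/Bayesian refinement or the asymptotic minimax machinery the paper uses to get the limit in (\ref{ub2b}) equal to one.

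The more serious gap is in the construction itself. You pair a \emph{global} shift $\theta(\psi_{2n})-\theta_0=\psi_{2n}h_n^p c_3$ with a density perturbation that is supported only on the window $F_0(V)\in[1-h_n,1]$, and you bound the Hellinger distance between the \emph{latent} densities $g_{\psi_{2n}}$ and $g_0$. But the relevant distance is between the laws of the observables, and for every $(\bm{x},\bm{z})$ with $F_0(\bm{z}^{\top}\bm{\gamma}_0)$ bounded away from one the conditional density of $Y$ given $D=1$ is shifted by the full amount $\psi_{2n}h_n^p$ with \emph{no} compensation from your localized $\xi\cdot\phi$ term (which contributes nothing to $\int_{-\infty}^{\bm{z}^{\top}\bm{\gamma}_0}g\,dv$ there). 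The per-observation squared Hellinger distance is therefore of order $h_n^{2p}$ on a set of $(\bm{X},\bm{Z})$ of probability bounded away from zero, so $nH^2\asymp n^{1/(2p+1)}\to\infty$ and the two models are asymptotically perfectly distinguishable: the two-point bound collapses. The construction that works must make the perturbed model observationally equivalent to the reference on the bulk --- e.g.\ by shifting the conditional law of $U$ given $V=v$ by $-\delta$ for $F_0(v)\le 1-h_n$ and undoing that shift only inside the window so as to restore $E[U]=0$ --- after which the local perturbation has amplitude $\delta/h_n$ rather than $\delta$, the observable squared Hellinger distance becomes $O(\delta^2 h_n)$, and the balance $n\delta^2h_n=O(1)$ delivers $\delta\asymp n^{-p/(2p+1)}$ at $h_n\asymp n^{-1/(2p+1)}$. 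Your clause (c) correctly flags that membership of the perturbed tuple in $\Psi^*_n$ is the delicate point, but the mean-zero and smoothness constraints have to be verified for \emph{this} bulk-shifted perturbation, whose $p$-th $q$-derivative is of order $\delta h_n^{-(p+1)}$, not for the $O(h_n^p)$-amplitude bump you propose; as written the amplitude, the Hellinger bound, and the induced shift in $\theta$ are not mutually consistent.
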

Theorem~\ref{ubthm} implies that $p/(2p+1)$ is an upper bound on the rate of convergence of an estimator 
sequence $\left\{\theta_n\right\}$ in the minimax sense of (\ref{ub1}) and (\ref{ub2}) above.

Next, it is shown that $p/(2p+1)$ is an achievable rate of convergence in the sense of (\ref{ach}) by 
exhibiting an estimator sequence $\left\{\theta_n\right\}$ such that (\ref{ach}) holds with $\rho=p/(2p+1)$.  
In this connection, let $\hat{\theta}_n^*$ denote the proposed estimator given above in (\ref{mhat}) implemented 
with a bandwidth $h_n^*=cn^{-1/(2p+1)}$ for some constant $c>0$.  In this case, (\ref{ach}) is satisfied with 
$\theta_n=\hat{\theta}^*_n$ and $\rho=p/(2p+1)$:

\begin{theorem}\label{achthm}

Suppose Assumptions~\ref{a1} and \ref{a2} hold.  Then (\ref{ach}) holds with $\theta_n=\hat{\theta}_n^*$ and 
$\rho=p/(2p+1)$, where $\hat{\theta}_n^*$ denotes the estimator given above in (\ref{mhat}) implemented with a bandwidth 
$h_n^*=c n^{-1/(2p+1)}$ for some constant $c>0$. 
\end{theorem}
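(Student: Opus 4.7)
The plan is to leverage Theorem~\ref{mainthm}, which already contains essentially all of the analytical work needed; the only additional task is to verify that the bias/variance balance implied by the bandwidth choice $h_n^*=cn^{-1/(2p+1)}$ translates into the uniform probability statement~(\ref{ach}) over the class $\Psi_n$. Substituting $h_n^*=cn^{-1/(2p+1)}$ into the conclusion of Theorem~\ref{mainthm} yields a deterministic bias term of order
\[
\frac{(h_n^*)^p}{p!}\int u^p K(u)\,du\cdot m^{(p)}_{F_0}(1)=O\!\left(n^{-\frac{p}{2p+1}}\right),
\]
while the asymptotic-normality conclusion yields $\hat{\theta}_n^*-\theta_0-(\text{bias})=O_p\!\left((nh_n^*)^{-1/2}\right)=O_p\!\left(n^{-p/(2p+1)}\right)$. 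Hence $n^{p/(2p+1)}(\hat{\theta}_n^*-\theta_0)$ is bounded in probability, which is exactly the content of~(\ref{ach}) with $\rho=p/(2p+1)$.

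Concretely, I would write out the decomposition $\hat{\theta}_n^*-\theta_0=B_n+V_n$, where $B_n$ collects the kernel-smoothing and plug-in bias terms analyzed in the proof of Theorem~\ref{mainthm} and $V_n$ is the stochastic centered component. Two statements must be established: $\sup_{\psi\in\Psi_n}|B_n|=O\!\left(n^{-p/(2p+1)}\right)$ and $\sup_{\psi\in\Psi_n}P_\psi\!\left[\,|V_n|>s\,n^{-p/(2p+1)}\right]\to 0$ as $s\to\infty$. The first follows from a Taylor expansion of $m_{F_0}(\cdot)$ at $q=1$ up to order $p$, using the $p^*$-smoothness of $g_{U,V|\bm{x},\bm{z}}$ and $g_{V|\bm{x},\bm{z}}$ guaranteed by Assumption~\ref{a2}.\ref{a2d11} together with the representation~(\ref{condl_dens}) to bound $|m^{(p)}_{F_0}(1)|$ by a constant depending only on the smoothness bounds that define $\Psi_n$. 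The $\sqrt{n}$-consistency of $\hat{\bm{\beta}}_n$ and $\hat{\bm{\gamma}}_n$ in Assumption~\ref{a2}.\ref{a2a} ensures that the plug-in contributions to $B_n$ are $O_p(n^{-1/2})$, which is $o\!\left(n^{-p/(2p+1)}\right)$ since $p\ge 2$.

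For the stochastic term, Chebyshev's inequality applied to $V_n$ gives
\[
P_\psi\!\left[\,|V_n|>s\,n^{-p/(2p+1)}\right]\le s^{-2}\cdot(nh_n^*)\,\mathrm{Var}_\psi(V_n)\cdot n^{\,2p/(2p+1)}\cdot n^{-1}(h_n^*)^{-1},
\]
and since the proof of Theorem~\ref{mainthm} identifies $(nh_n^*)\,\mathrm{Var}_\psi(V_n)\to\sigma^2_{U|F_0(\bm{Z}^{\top}\bm{\gamma}_0)}(1)\int K^2(u)\,du$, the right-hand side vanishes as $s\to\infty$ provided the conditional variance at $q=1$ is bounded uniformly on $\Psi_n$. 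This uniform bound is immediate from $E[U^2]<\infty$ in Assumption~\ref{a1}.\ref{a1c} together with the smoothness of the conditional density~(\ref{condl_dens}).

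The principal obstacle is the transition from pointwise to uniform-in-$\psi$ control: Theorem~\ref{mainthm} as stated delivers only pointwise asymptotic normality, so the real technical work in Theorem~\ref{achthm} is re-examining the proof of Theorem~\ref{mainthm} to verify that each $o_p$ and $O_p$ remainder appearing there holds uniformly over $\Psi_n$. In practice this reduces to expressing each remainder as a U-statistic or kernel sum whose moments are bounded by quantities depending on $\psi$ only through the smoothness and moment constants that define $\Psi_n$, and then applying Markov-type inequalities termwise. Once this uniformity is verified, the statement~(\ref{ach}) with $\theta_n=\hat{\theta}_n^*$ and $\rho=p/(2p+1)$ follows by combining the uniform bias bound with the uniform Chebyshev bound on $V_n$.
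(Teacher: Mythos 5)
Your proposal follows essentially the same route as the paper: the paper likewise reduces (\ref{ach}) to a uniform bias bound $\sqrt{nh_n^*}\sup_{\psi_n}\left|E_{\psi_n}[\hat{\theta}_n^*]-\theta\right|<\infty$ and a uniform variance bound $nh_n^*\sup_{\psi_n}Var_{\psi_n}[\hat{\theta}_n^*]<\infty$ combined via Chebyshev's inequality, and establishes both by retracing the decomposition from the proof of Theorem~\ref{mainthm} and checking that each remainder is controlled by quantities (boundedness of $\hat{\eta}_n$, of $K$, of $m^{(j)}_{F_0}$, and the moment conditions) that are uniform over $\Psi_n$. Your identification of the pointwise-to-uniform transition as the real technical content matches the structure of the paper's argument.
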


Theorems~\ref{ubthm} and \ref{achthm} jointly imply that $p/(2p+1)$ is the optimal rate of convergence for 
estimation of $\theta_0$.

\section{Numerical Evidence}\label{mc}

\noindent This section reports the results of simulation experiments that compare the finite-sample 
behaviour of the estimator in (\ref{mhat}) to the behaviours of alternative estimators.  The simulations involved:
\begin{itemize}
\item variation in the correlation between the unobservable terms in the outcome and selection equations;
\item variation in the relative upper tail thicknesses of the selection index and the unobservable term in the 
selection equation, thus implying variation in the degree to which the parameter of interest is identified;
\item and the imposition of two different parametric families for the joint distribution of 
$[\begin{array}{ccc} U & V & \bm{Z}^{\top}\bm{\gamma}_0\end{array}]$, where $U$ is the unobservable term in the 
outcome equation, $V$ is the unobservable term in the selection equation and $\bm{Z}^{\top}\bm{\gamma}_0$ is the 
selection index.
\end{itemize}
Each simulation experiment involved 1000 replicated samples of sizes $n\in\{100,400\}$ from the model given above in 
(\ref{outcome})--(\ref{observation}), where the parameter of interest was fixed at $\theta_0=1$, the variance 
of the unobservable term in the selection equation was fixed at $Var[V]=1$ and where for some constant 
$\rho\in [-1,1]$, the unobservable term in the outcome equation was specified as $U=\rho V+E$ for a random variable 
$E$ independent of $V$, where $E\sim N\left(0,1-\rho^2\right)$.  The parameter $\rho$ in this case is by construction 
the correlation coefficient between $U$ and $V$.  The simulations considered the settings 
$\rho\in\{0,.25,.50,.75,.95\}$. 

In addition, the vector $\bm{Z}$ of observable predictors of selection was taken to be $l$-variate with 
$\bm{Z}=[\begin{array}{ccc} Z_1 & \cdots & Z_l\end{array}]^{\top}$, 
while the vector $\bm{X}$ of outcome predictors was specified to be $k$-variate with $k<l$ and 
$\bm{X}=[\begin{array}{ccc} Z_1 & \cdots & Z_k\end{array}]^{\top}$.  The coefficient 
vector attached to $\bm{X}$ was set to $\bm{\beta}_0=\bm{\iota}_k$, i.e., the $k$-dimensional 
unit vector.  The simulations imposed the settings $l=7$, $k=4$, which were intentionally set to equal the 
dimensions of the corresponding vectors appearing in the empirical example used below in Section~\ref{ee}.

The selection index $\bm{Z}^{\top}\bm{\gamma}_0$ and the selection-equation disturbance term were simulated from two 
data-generating processes (DGPs), considered in turn.  The distributions of $\bm{Z}^{\top}\bm{\gamma}_0$ under both 
DGPs, as required by Assumption~\ref{a1}, are 
absolutely continuous.  In addition, the parameter $\alpha>0$ used in the specification of 
both models is defined so as to index the degree to which the parameter of interest $\theta_0$ is identified.  In particular, $\alpha\ge 1$ can be 
seen in this context to be a necessary condition for the identification of $\theta_0$, with $\alpha\ge 1$ corresponding to the case where the 
transformation $F_0(V)$ has an absolutely continuous distribution with density given by the ratio $r_Q(q)$ defined in (\ref{marg}) below.  Values of 
$\alpha<1$, on the other hand, correspond to the non-identifiability of $\theta_0$, with $\alpha\in (0,1)$ in the context of 
either of the following two DGPs implying failure of the condition that $F_0(V)$ have an absolutely continuous 
distribution supported on $[0,1]$.  In particular, $\alpha\in (0,1)$ in the following two DGPs implies that the quantity 
$r_Q(q)$ given in (\ref{marg}) below has the property that $r_Q(1)=\infty$:
 
\begin{itemize}
\item (DGP1) I take
\[
\left[\begin{array}{cc}  \bm{Z} & \bm{0} \\ \bm{0} & V \end{array}\right]\sim N\left(\bm{0},\bm{I}_{l+1}\right).
\]
In addition, the selection parameter $\bm{\gamma}_0$ is set to $\bm{\gamma}_0=[\begin{array}{ccc} \sqrt{\alpha/l} & \cdots & \sqrt{\alpha/l}\end{array}]^{\top}$ for 
a constant $\alpha>0$.  In this way we have 
\[
\left[\begin{array}{c}\bm{Z}^{\top}\bm{\gamma}_0 \\ V\end{array}\right]\sim N\left(\bm{0},\left[\begin{array}{cc} \alpha & 0 \\ 0 & 1\end{array}\right]\right).
\] 
\item (DGP2)  $Z_1,\ldots,Z_l$ are iid standard Cauchy and jointly mutually independent of $V$, while 
$V$ is absolutely continuous on $[1,\infty)$ with Pareto type-I density given by 
\[
g_{0V}(v)=\alpha v^{-\alpha-1},
\] 
where $\alpha>0$ is a constant.  In addition, the selection parameter $\bm{\gamma}_0$ is set to 
$\bm{\gamma}_0=[\begin{array}{cc} \bm{0}_{l-1}^{\top} & 1\end{array}]^{\top}$.  In this way the selection index 
$\bm{Z}^{\top}\bm{\gamma}_0$ is standard Cauchy and independent of $V$.
\end{itemize}
The simulations under both DGPs considered the settings $\alpha\in\{2.00,1.50,1.25,1.00\}$.  The effect of variation in 
the correlation coefficient $\rho$ and the parameter $\alpha$ on estimation of the intercept was a primary focus of 
these simulations.  The outcome-equation nuisance parameter $\bm{\beta}_0$ and the selection parameter $\bm{\gamma}_0$ 
were accordingly fixed at their true values in these simulations in order to provide a clearer picture of the effects of 
variation in $\rho$ and $\alpha$ on the behaviour of the various intercept estimators considered.

The proposed intercept estimator given above in (\ref{mhat}) was implemented with a standard 
(i.e., second-order) Epanechnikov kernel.  The bandwidth used to implement $\hat{\theta}_n$ was initially 
set to the sample analogue of the theoretical asymptotic MSE-optimal bandwidth $h^*_n$ specified above in 
Corollary~\ref{cormain}.  In particular, the simulations involved the bandwidth $\hat{h}^*_n$, where $\hat{h}^*_n$ was taken to be 
the sample analogue of $h^*_n$.  As such, $\hat{h}^*_n$ was set 
to decay at the MSE-optimal rate of $n^{-1/5}$ corresponding to the order of kernel employed (i.e., $p=2$), while its 
leading constant was specified as the sample analogue of the leading constant appearing in $h^*_n$.  The unknown 
parameters appearing in the leading constant of $h^*_n$ were estimated via auxiliary locally cubic regressions as 
described in \citet[\S 4.3]{FanGijbels96}.  The sensitivity of the proposed estimator's sampling behaviour to the choice 
of bandwidth was also assessed by considering implementations using the bandwidth settings $h_n=(2/3)\hat{h}_{n}^*$ and 
$h_n=(3/2)\hat{h}_{n}^*$.  

Comparisons of the corresponding sampling behaviours in terms of squared bias, standard deviation and root mean-squared 
error (RMSE) over 1000 Monte Carlo replications for values of 
$(\rho,\alpha)\in\{0,.25,.50,.75,.95\}\times\{2.00,1.50,1.25,1.00\}$  are presented below for samples of size $n=100$ in 
Tables~\ref{dgp1n100mhat1.p2} and \ref{dgp2n100mhat1.p2} for DGP1 and DGP2, respectively.  The corresponding results for samples of size $n=400$ 
appear in Tables~\ref{dgp1n400mhat1.p2} and \ref{dgp2n400mhat1.p2}.  The RMSE figures displayed 
in these tables are multiplied by $\sqrt{n}$ in order to provide a clearer indication of 
the rate of convergence of the proposed estimator.  The increases in $\sqrt{n}\times$ RMSE as one moves from simulated 
samples of size $n=100$ to those of $n=400$ indicate the slower-than-parametric rate of convergence of the proposed 
estimator regardless of the precise setting of $(\rho,\alpha)$.  It is also clear that the rate of convergence of 
the proposed estimator is slower for settings of $(\rho,\alpha)$ with one or both of $\rho$ and $\alpha$ close to one.  
In addition, and as predicted by Theorem~\ref{mainthm} above, one can see that the effect of variation in 
$(\rho,\alpha)$ on the squared bias of the proposed estimator is more pronounced than the corresponding effect 
on the variance; indeed the standard deviation of the proposed estimator tends to be relatively stable  
over the various settings of $(\rho,\alpha)$ used in the simulations, particularly for values of 
$(\rho,\alpha)\in\{.25,.50,.75\}\times\{2.00,1.50,1.25\}$.  Finally, the tabulated results indicate that the sampling performance of $\hat{\theta}_n$ 
is not sensitive to moderate variations in bandwidth.  

\FloatBarrier
\begin{table}[H]
\pagenumbering{gobble}
{\scriptsize
\centering
\caption{DGP1 (bivariate normal), $n=100$, 1000 replications. Proposed estimator with second-order Epanechnikov kernel ($p=2$).  (RMSE is multiplied by $\sqrt{n}$.)}\label{dgp1n100mhat1.p2}
\begin{tabular}{ccccccccccccc}
  \hline
\multirow{2}{*}{$\rho$} &\multicolumn{3}{c}{$\alpha=2.00$} &\multicolumn{3}{c}{$\alpha=1.50$} &\multicolumn{3}{c}{$\alpha=1.25$} &\multicolumn{3}{c}{$\alpha=1.00$}\\  
                        & sq bias   & sd     & RMSE            & sq bias   & sd     & RMSE            & sq bias   & sd     & RMSE            & sq bias   & sd     & RMSE \\ 
  \hline
\multicolumn{13}{c}{(optimal bandwidth)} \\
0.0000 & 0.0081 & 0.1859 & 2.0645 & 0.0018 & 0.1872 & 1.9196 & 0.0008 & 0.1951 & 1.9723 & 0.0006 & 0.1900 & 1.9154 \\ 
0.2500 & 0.0003 & 0.1890 & 1.8985 & 0.0004 & 0.1857 & 1.8674 & 0.0023 & 0.1911 & 1.9695 & 0.0064 & 0.1802 & 1.9724 \\ 
0.5000 & 0.0014 & 0.1812 & 1.8503 & 0.0071 & 0.1861 & 2.0432 & 0.0127 & 0.1802 & 2.1253 & 0.0227 & 0.1779 & 2.3324 \\ 
0.7500 & 0.0092 & 0.1768 & 2.0125 & 0.0216 & 0.1784 & 2.3125 & 0.0353 & 0.1761 & 2.5750 & 0.0459 & 0.1687 & 2.7272 \\ 
0.9500 & 0.0215 & 0.1821 & 2.3373 & 0.0404 & 0.1665 & 2.6105 & 0.0548 & 0.1730 & 2.9114 & 0.0812 & 0.1626 & 3.2811 \\ 
\multicolumn{13}{c}{($2/3\times$ optimal bandwidth)} \\
0.0000 & 0.0081 & 0.1860 & 2.0657 & 0.0018 & 0.1873 & 1.9206 & 0.0008 & 0.1952 & 1.9733 & 0.0006 & 0.1901 & 1.9164 \\ 
0.2500 & 0.0003 & 0.1891 & 1.8996 & 0.0004 & 0.1858 & 1.8682 & 0.0023 & 0.1912 & 1.9702 & 0.0064 & 0.1803 & 1.9727 \\ 
0.5000 & 0.0014 & 0.1813 & 1.8507 & 0.0071 & 0.1862 & 2.0430 & 0.0126 & 0.1803 & 2.1243 & 0.0227 & 0.1780 & 2.3314 \\ 
0.7500 & 0.0091 & 0.1770 & 2.0109 & 0.0215 & 0.1786 & 2.3100 & 0.0351 & 0.1762 & 2.5724 & 0.0457 & 0.1689 & 2.7242 \\ 
0.9500 & 0.0212 & 0.1822 & 2.3336 & 0.0401 & 0.1666 & 2.6058 & 0.0545 & 0.1732 & 2.9068 & 0.0809 & 0.1628 & 3.2764 \\ 
\multicolumn{13}{c}{($3/2\times$ optimal bandwidth)} \\
0.0000 & 0.0081 & 0.1858 & 2.0640 & 0.0018 & 0.1872 & 1.9192 & 0.0008 & 0.1951 & 1.9719 & 0.0006 & 0.1900 & 1.9150 \\ 
0.2500 & 0.0003 & 0.1890 & 1.8980 & 0.0004 & 0.1857 & 1.8670 & 0.0023 & 0.1910 & 1.9692 & 0.0064 & 0.1802 & 1.9723 \\ 
0.5000 & 0.0014 & 0.1811 & 1.8502 & 0.0071 & 0.1861 & 2.0433 & 0.0127 & 0.1802 & 2.1257 & 0.0228 & 0.1779 & 2.3328 \\ 
0.7500 & 0.0093 & 0.1768 & 2.0132 & 0.0217 & 0.1784 & 2.3136 & 0.0354 & 0.1760 & 2.5762 & 0.0460 & 0.1687 & 2.7285 \\ 
0.9500 & 0.0216 & 0.1820 & 2.3389 & 0.0406 & 0.1664 & 2.6126 & 0.0550 & 0.1729 & 2.9134 & 0.0814 & 0.1625 & 3.2831 \\    
\hline
\end{tabular}
} 
\end{table}

\begin{table}[H]
\pagenumbering{gobble}
{\scriptsize
\centering
\caption{DGP2 (non-normal), $n=100$, 1000 replications. Proposed estimator with second-order Epanechnikov kernel ($p=2$) and optimal bandwidth.  (RMSE is multiplied by $\sqrt{n}$.)}\label{dgp2n100mhat1.p2}
\begin{tabular}{ccccccccccccc}
  \hline
\multirow{2}{*}{$\rho$} &\multicolumn{3}{c}{$\alpha=2.00$} &\multicolumn{3}{c}{$\alpha=1.50$} &\multicolumn{3}{c}{$\alpha=1.25$} &\multicolumn{3}{c}{$\alpha=1.00$}\\  
                        & sq bias   & sd     & RMSE            & sq bias   & sd     & RMSE            & sq bias   & sd     & RMSE            & sq bias   & sd     & RMSE \\ 
  \hline
\multicolumn{13}{c}{(optimal bandwidth)} \\
0.0000 & 0.1541 & 0.1719 & 4.2854 & 0.1892 & 0.1805 & 4.7097 & 0.2268 & 0.1708 & 5.0593 & 0.2714 & 0.1612 & 5.4535 \\ 
0.2500 & 0.0202 & 0.2119 & 2.5523 & 0.0394 & 0.2070 & 2.8669 & 0.0445 & 0.2229 & 3.0684 & 0.0636 & 0.2280 & 3.3994 \\ 
0.5000 & 0.0079 & 0.2391 & 2.5503 & 0.0071 & 0.2741 & 2.8687 & 0.0006 & 0.2942 & 2.9521 & 0.0000 & 0.3010 & 3.0102 \\ 
0.7500 & 0.1113 & 0.2796 & 4.3536 & 0.0972 & 0.3386 & 4.6030 & 0.0735 & 0.3288 & 4.2615 & 0.0585 & 0.4767 & 5.3448 \\ 
0.9500 & 0.2747 & 0.3174 & 6.1273 & 0.2510 & 0.3605 & 6.1721 & 0.2317 & 0.4509 & 6.5956 & 0.1950 & 0.4976 & 6.6528 \\ 
\multicolumn{13}{c}{($2/3\times$ optimal bandwidth)} \\
0.0000 & 0.1522 & 0.1725 & 4.2652 & 0.1871 & 0.1813 & 4.6906 & 0.2247 & 0.1715 & 5.0405 & 0.2693 & 0.1619 & 5.4357 \\ 
0.2500 & 0.0190 & 0.2125 & 2.5336 & 0.0376 & 0.2078 & 2.8422 & 0.0426 & 0.2242 & 3.0466 & 0.0613 & 0.2293 & 3.3752 \\ 
0.5000 & 0.0091 & 0.2396 & 2.5796 & 0.0084 & 0.2754 & 2.9025 & 0.0010 & 0.2960 & 2.9769 & 0.0000 & 0.3032 & 3.0319 \\ 
0.7500 & 0.1178 & 0.2799 & 4.4284 & 0.1036 & 0.3404 & 4.6850 & 0.0789 & 0.3307 & 4.3392 & 0.0632 & 0.4805 & 5.4233 \\ 
0.9500 & 0.2874 & 0.3178 & 6.2324 & 0.2633 & 0.3623 & 6.2816 & 0.2435 & 0.4536 & 6.7023 & 0.2055 & 0.5015 & 6.7598 \\
\multicolumn{13}{c}{($3/2\times$ optimal bandwidth)} \\ 
0.0000 & 0.1550 & 0.1716 & 4.2942 & 0.1902 & 0.1801 & 4.7180 & 0.2277 & 0.1705 & 5.0675 & 0.2724 & 0.1609 & 5.4612 \\ 
0.2500 & 0.0208 & 0.2116 & 2.5607 & 0.0401 & 0.2066 & 2.8777 & 0.0453 & 0.2224 & 3.0780 & 0.0646 & 0.2274 & 3.4100 \\ 
0.5000 & 0.0073 & 0.2389 & 2.5382 & 0.0066 & 0.2736 & 2.8547 & 0.0005 & 0.2934 & 2.9420 & 0.0001 & 0.3000 & 3.0015 \\ 
0.7500 & 0.1086 & 0.2795 & 4.3217 & 0.0945 & 0.3379 & 4.5683 & 0.0712 & 0.3280 & 4.2285 & 0.0565 & 0.4750 & 5.3115 \\ 
0.9500 & 0.2693 & 0.3173 & 6.0825 & 0.2458 & 0.3598 & 6.1255 & 0.2268 & 0.4497 & 6.5502 & 0.1906 & 0.4960 & 6.6072 \\ 
   \hline
\end{tabular}
} 
\end{table}

\FloatBarrier
\begin{table}[H]
\pagenumbering{gobble}
{\scriptsize
\centering
\caption{DGP1 (bivariate normal), $n=400$, 1000 replications. Proposed estimator with second-order Epanechnikov kernel ($p=2$).  (RMSE is multiplied by $\sqrt{n}$.)}\label{dgp1n400mhat1.p2}
\begin{tabular}{ccccccccccccc}
  \hline
\multirow{2}{*}{$\rho$} &\multicolumn{3}{c}{$\alpha=2.00$} &\multicolumn{3}{c}{$\alpha=1.50$} &\multicolumn{3}{c}{$\alpha=1.25$} &\multicolumn{3}{c}{$\alpha=1.00$}\\  
                        & sq bias   & sd     & RMSE            & sq bias   & sd     & RMSE            & sq bias   & sd     & RMSE            & sq bias   & sd     & RMSE \\ 
  \hline
\multicolumn{13}{c}{(optimal bandwidth)} \\
0.0000 & 0.0073 & 0.0972 & 2.5875 & 0.0021 & 0.0917 & 2.0524 & 0.0007 & 0.0942 & 1.9547 & 0.0000 & 0.0933 & 1.8683 \\ 
0.2500 & 0.0007 & 0.0924 & 1.9180 & 0.0001 & 0.0966 & 1.9371 & 0.0017 & 0.0916 & 2.0082 & 0.0052 & 0.0920 & 2.3367 \\ 
0.5000 & 0.0011 & 0.0962 & 2.0330 & 0.0057 & 0.0908 & 2.3585 & 0.0112 & 0.0959 & 2.8557 & 0.0199 & 0.0917 & 3.3640 \\ 
0.7500 & 0.0082 & 0.0919 & 2.5766 & 0.0194 & 0.0924 & 3.3416 & 0.0293 & 0.0894 & 3.8594 & 0.0460 & 0.0889 & 4.6434 \\ 
0.9500 & 0.0178 & 0.0922 & 3.2404 & 0.0345 & 0.0878 & 4.1101 & 0.0505 & 0.0840 & 4.7995 & 0.0731 & 0.0845 & 5.6663 \\ 
\multicolumn{13}{c}{($2/3\times$ optimal bandwidth)} \\
0.0000 & 0.0073 & 0.0972 & 2.5890 & 0.0021 & 0.0918 & 2.0535 & 0.0007 & 0.0942 & 1.9557 & 0.0000 & 0.0934 & 1.8694 \\ 
0.2500 & 0.0007 & 0.0924 & 1.9204 & 0.0001 & 0.0966 & 1.9382 & 0.0017 & 0.0916 & 2.0077 & 0.0052 & 0.0921 & 2.3354 \\ 
0.5000 & 0.0011 & 0.0963 & 2.0315 & 0.0056 & 0.0908 & 2.3546 & 0.0111 & 0.0960 & 2.8519 & 0.0198 & 0.0918 & 3.3594 \\ 
0.7500 & 0.0080 & 0.0920 & 2.5692 & 0.0192 & 0.0924 & 3.3330 & 0.0291 & 0.0894 & 3.8510 & 0.0458 & 0.0891 & 4.6352 \\ 
0.9500 & 0.0175 & 0.0923 & 3.2283 & 0.0342 & 0.0879 & 4.0973 & 0.0502 & 0.0841 & 4.7867 & 0.0727 & 0.0846 & 5.6532 \\ 
\multicolumn{13}{c}{($3/2\times$ optimal bandwidth)} \\ 
0.0000 & 0.0073 & 0.0971 & 2.5869 & 0.0021 & 0.0917 & 2.0520 & 0.0007 & 0.0941 & 1.9542 & 0.0000 & 0.0933 & 1.8678 \\ 
0.2500 & 0.0007 & 0.0924 & 1.9170 & 0.0001 & 0.0965 & 1.9367 & 0.0017 & 0.0916 & 2.0084 & 0.0052 & 0.0920 & 2.3372 \\ 
0.5000 & 0.0011 & 0.0962 & 2.0337 & 0.0057 & 0.0908 & 2.3602 & 0.0112 & 0.0959 & 2.8574 & 0.0199 & 0.0917 & 3.3660 \\ 
0.7500 & 0.0082 & 0.0919 & 2.5799 & 0.0195 & 0.0923 & 3.3453 & 0.0293 & 0.0893 & 3.8632 & 0.0461 & 0.0889 & 4.6471 \\ 
0.9500 & 0.0179 & 0.0921 & 3.2458 & 0.0346 & 0.0878 & 4.1158 & 0.0507 & 0.0840 & 4.8052 & 0.0733 & 0.0844 & 5.6721 \\ 
\hline
\end{tabular}
} 
\end{table}

\begin{table}[H]
\pagenumbering{gobble}
{\scriptsize
\centering
\caption{DGP2 (non-normal), $n=400$, 1000 replications. Proposed estimator with second-order Epanechnikov kernel ($p=2$).  (RMSE is multiplied by $\sqrt{n}$.)}\label{dgp2n400mhat1.p2}
\begin{tabular}{ccccccccccccc}
  \hline
\multirow{2}{*}{$\rho$} &\multicolumn{3}{c}{$\alpha=2.00$} &\multicolumn{3}{c}{$\alpha=1.50$} &\multicolumn{3}{c}{$\alpha=1.25$} &\multicolumn{3}{c}{$\alpha=1.00$}\\  
                        & sq bias   & sd     & RMSE            & sq bias   & sd     & RMSE            & sq bias   & sd     & RMSE            & sq bias   & sd     & RMSE \\ 
  \hline
\multicolumn{13}{c}{(optimal bandwidth)} \\
0.0000 & 0.1460 & 0.0908 & 7.8539 & 0.1864 & 0.0884 & 8.8133 & 0.2179 & 0.0881 & 9.5014 & 0.2658 & 0.0831 & 10.4445 \\ 
0.2500 & 0.0184 & 0.1045 & 3.4218 & 0.0317 & 0.1053 & 4.1369 & 0.0426 & 0.1153 & 4.7277 & 0.0651 & 0.1158 & 5.6037 \\ 
0.5000 & 0.0095 & 0.1255 & 3.1783 & 0.0048 & 0.1261 & 2.8790 & 0.0025 & 0.1498 & 3.1582 & 0.0004 & 0.1898 & 3.8157 \\ 
0.7500 & 0.1296 & 0.1433 & 7.7487 & 0.1058 & 0.1769 & 7.4044 & 0.0912 & 0.1956 & 7.1956 & 0.0686 & 0.2002 & 6.5943 \\ 
0.9500 & 0.2897 & 0.1551 & 11.2027 & 0.2707 & 0.1828 & 11.0287 & 0.2632 & 0.2245 & 11.1992 & 0.2507 & 0.2913 & 11.5847 \\ 
\multicolumn{13}{c}{($2/3\times$ optimal bandwidth)} \\
0.0000 & 0.1428 & 0.0913 & 7.7763 & 0.1829 & 0.0891 & 8.7369 & 0.2143 & 0.0888 & 9.4263 & 0.2621 & 0.0837 & 10.3754 \\ 
0.2500 & 0.0163 & 0.1050 & 3.3076 & 0.0289 & 0.1061 & 4.0066 & 0.0392 & 0.1164 & 4.5947 & 0.0611 & 0.1170 & 5.4683 \\ 
0.5000 & 0.0120 & 0.1259 & 3.3398 & 0.0068 & 0.1271 & 3.0284 & 0.0040 & 0.1515 & 3.2815 & 0.0010 & 0.1918 & 3.8896 \\ 
0.7500 & 0.1426 & 0.1434 & 8.0783 & 0.1180 & 0.1784 & 7.7409 & 0.1022 & 0.1968 & 7.5083 & 0.0780 & 0.2022 & 6.8948 \\ 
0.9500 & 0.3146 & 0.1552 & 11.6389 & 0.2946 & 0.1837 & 11.4600 & 0.2865 & 0.2260 & 11.6209 & 0.2723 & 0.2939 & 11.9783 \\ 
\multicolumn{13}{c}{($3/2\times$ optimal bandwidth)} \\ 
0.0000 & 0.1473 & 0.0905 & 7.8874 & 0.1879 & 0.0882 & 8.8462 & 0.2195 & 0.0878 & 9.5336 & 0.2674 & 0.0828 & 10.4742 \\ 
0.2500 & 0.0193 & 0.1042 & 3.4714 & 0.0329 & 0.1049 & 4.1929 & 0.0440 & 0.1149 & 4.7847 & 0.0668 & 0.1153 & 5.6615 \\ 
0.5000 & 0.0085 & 0.1254 & 3.1136 & 0.0041 & 0.1256 & 2.8209 & 0.0020 & 0.1491 & 3.1118 & 0.0002 & 0.1890 & 3.7895 \\ 
0.7500 & 0.1243 & 0.1432 & 7.6116 & 0.1009 & 0.1763 & 7.2650 & 0.0868 & 0.1950 & 7.0661 & 0.0649 & 0.1993 & 6.4701 \\ 
0.9500 & 0.2797 & 0.1551 & 11.0222 & 0.2610 & 0.1824 & 10.8502 & 0.2537 & 0.2239 & 11.0246 & 0.2419 & 0.2902 & 11.4212 \\ 
   \hline
\end{tabular}
} 
\end{table}

I next consider the simulated performances over 1000 Monte Carlo replications across DGPs, sample sizes and settings 
of $(\rho,\alpha)$ of several alternative estimators of the intercept $\theta_0$.  The results for samples of size $n=100$ are summarized below in 
Tables~\ref{dgp1n100balt} and \ref{dgp2n100balt} for DGPs 1 and 2, respectively.  The corresponding results for samples of size $n=400$ 
appear in Tables~\ref{dgp1n400balt} and \ref{dgp2n400balt}.  The standard Heckman 2-step estimator is found under DGP1 to have a performance in terms of RMSE that is comparable to that of the 
proposed estimator in (\ref{mhat}).  The proposed estimator under DGP2, on the other hand, is found to dominate in terms of RMSE the performance 
of the following alternative estimators under all combinations of $(\rho,\alpha)$ considered:  

\begin{itemize}
\item (OLS)  The ordinary least squares estimator of the intercept parameter using only those 
observations for which $D=1$.  

These results are consistent with well established theory.  In particular, Tables~\ref{dgp1n100balt}--\ref{dgp2n100balt} indicate the good performance 
of OLS when $\rho=0$ and the poor performance of OLS when $\rho>0$.  In addition, the decrease in $\sqrt{n}\times$ RMSE for the OLS estimator when $\rho=0$ as 
one moves from $n=100$ to $n=400$ is suggestive of superefficiency, while at the same time the increases in $\sqrt{n}\times$RMSE 
when $\rho>0$ is consistent with OLS being inconsistent under $\rho>0$.

\item (2-step)  The estimator of the intercept based on the well known procedure of 
\citet{Heckman76,Heckman79}, which is known to be $\sqrt{n}$-consistent if $[\begin{array}{cc} U & V\end{array}]$ is 
bivariate normal (i.e., generated according to DGP1).  

The results for DGP1 given in Table~\ref{dgp1n100balt} below are consistent with expectations; in particular, the 
2-step procedure exhibits an RMSE that is stable across the various configurations of $(\rho,\alpha)$ that were tried.  In addition, a comparison 
of the relevant sections of Table~\ref{dgp1n100balt} and Table~\ref{dgp1n400balt} highlights the stability 
of $\sqrt{n}\times$ RMSE as one moves from $n=100$ to $n=400$, which is consistent with the $\sqrt{n}$-consistency of the procedure under DGP1.

The results for DGP2 appearing in Table~\ref{dgp2n100balt}, on the other hand, show that the performance of 
the 2-step procedure can vary dramatically with $(\rho,\alpha)$.  A comparison of the $\sqrt{n}\times$RMSE figures in 
Table~\ref{dgp2n400balt} with those in Table~\ref{dgp2n100balt} also suggests that the 2-step procedure under DGP2 is 
superefficient at $\rho=0$ and converges at a slower-than-parametric rate for model specifications with $\rho>0$.

\item (H90) The intercept estimator suggested by \citet{Heckman90}, which in the context of the model specified in (\ref{outcome})--(\ref{observation}) 
has the form
\begin{equation}\label{h90}
\hat{\theta}_{H90}\equiv\frac{\sum_{i=1}^n D_i\left(Y_i-\bm{X}^{\top}_i\hat{\bm{\beta}}\right)1\left\{\bm{Z}_i\hat{\bm{\gamma}}>b_n\right\}}{\sum_{i=1}^n D_i 1\left\{\bm{Z}_i\hat{\bm{\gamma}}>b_n\right\}}
\end{equation}
for some sequence of positive constants $\{b_n\}$ with $b_n\to\infty$ as $n\to\infty$.  I present the results of 
simulations in which the nuisance-parameter estimators $\hat{\bm{\beta}}$ and $\hat{\bm{\gamma}}$ are fixed at 
the true values of the corresponding estimands.  These results appear below in Tables~\ref{dgp1n100balt}, \ref{dgp2n100balt}, \ref{dgp1n400balt} and 
\ref{dgp2n400balt} for $b_n$ equal to the sample .95-quantile of $\bm{Z}^{\top}_i\bm{\gamma}_0$.  

Table~\ref{dgp1n100balt} below indicates that the performance of $\hat{\theta}_{H90}$ is comparable to that of the 2-step procedure under DGP1 in that its 
RMSE is stable over changes in $(\rho,\alpha)$.  The stability in the $\sqrt{n}\times$RMSE figures in these tables as one moves from $n=100$ to 
$n=400$, which is evident from a comparison of Table~\ref{dgp1n100balt} with Table~\ref{dgp1n400balt}, also suggests that 
$\hat{\theta}_{H90}$ may be $\sqrt{n}$-consistent under DGP1.

Table~\ref{dgp2n100balt}, on the other hand, shows that the performance of $\hat{\theta}_{H90}$ can deteriorate dramatically as $\rho$ moves away from 
zero, although its performance under DGP2 appears to be unaffected by variation in $\alpha$ for any given value of $\rho$.  The $\sqrt{n}\times$RMSE 
figures in Table~\ref{dgp2n100balt} and Table~\ref{dgp2n400balt} indicate that $\hat{\theta}_{H90}$ has a 
slower-than-parametric rate of convergence under DGP2 that is highly sensitive to variation in $\rho$ but relatively insensitive to variation in 
$\alpha$.

\item (AS98)  The intercept estimator developed by \citet{AndrewsSchafgans98} as a generalization of the procedure of 
\citet{Heckman90}.  The AS98 estimator in the context of the 
model given above in (\ref{outcome})--(\ref{observation}) has the form
\begin{equation}\label{as98}
\hat{\theta}_{AS98}\equiv\frac{\sum_{i=1}^n D_i\left(Y_i-\bm{X}^{\top}_i\hat{\bm{\beta}}\right) s\left(\bm{Z}_i^{\top}\hat{\bm{\gamma}}-b_n\right)}{\sum_{i=1}^n D_i s\left(\bm{Z}_i^{\top}\hat{\bm{\gamma}}_n-b_n\right)},
\end{equation} 
where, following \citet[eq. (4.1)]{AndrewsSchafgans98}, I set
\begin{equation}\label{as98s}
s(u)=\left\{\begin{array}{ccc} 1-\exp\left(-\frac{u}{\tau-u}\right) &,&\,x\in (0,\tau)\\
								0 &,&\,x\le 0\\
								1 &,&\,x\ge \tau\end{array}.\right.
\end{equation}
Note that the setting $\tau=0$ reduces $\hat{\theta}_{AS98}$ to $\hat{\theta}_{H90}$ as given earlier in (\ref{h90}).  
In addition, the tuning parameter $b_n$ in (\ref{as98}), as it does for $\hat{\theta}_{H90}$ in (\ref{h90}) above, refers 
to a sequence of positive constants with $b_n\to\infty$ as $n\to\infty$.

I present, in common with other simulations reported here, results for $\hat{\theta}_{AS98}$ in which the 
nuisance-parameter estimators $\hat{\bm{\beta}}$ and $\hat{\bm{\gamma}}$ are fixed at the true values of the corresponding 
estimands.  These simulations also involve setting the nuisance parameter $\tau$ in (\ref{as98s}) to the sample 
median of $\bm{Z}^{\top}_i\bm{\gamma}_0$ and the smoothing parameter $b_n$ in (\ref{as98}) to the sample .95-quantile 
of $\bm{Z}^{\top}_i\bm{\gamma}_0$.  The corresponding results appear below in 
Tables~\ref{dgp1n100balt}--\ref{dgp2n100balt} and also in Tables~\ref{dgp1n400balt}--\ref{dgp2n400balt}.

It is clear from Tables~\ref{dgp1n100balt}--\ref{dgp2n100balt} below that $\hat{\theta}_{AS}$ is numerically 
unstable under DGP1 but numerically stable under DGP2.  Table~\ref{dgp2n100balt} also indicates the 
sensitivity of the performance of $\hat{\theta}_{AS}$ to variation in $(\rho,\alpha)$.  A comparison of Table~\ref{dgp2n100balt} with 
Table~\ref{dgp2n400balt} also underscores the slower-than-parametric rate of convergence of $\hat{\theta}_{AS98}$.
\end{itemize}

\FloatBarrier
\begin{table}[H]
\pagenumbering{gobble}
{\scriptsize
\centering
\caption{DGP1 (bivariate normal), $n=100$, 1000 replications.  Alternative estimators.  (RMSE is multiplied by $\sqrt{n}$.)}\label{dgp1n100balt}
\begin{tabular}{ccccccccccccc}
  \hline
\multirow{2}{*}{$\rho$} &\multicolumn{3}{c}{$\alpha=2.00$} &\multicolumn{3}{c}{$\alpha=1.50$} &\multicolumn{3}{c}{$\alpha=1.25$} &\multicolumn{3}{c}{$\alpha=1.00$}\\  
                        & sq bias   & sd     & RMSE            & sq bias   & sd     & RMSE            & sq bias   & sd     & RMSE            & sq bias   & sd     & RMSE \\ 
  \hline
\multicolumn{13}{c}{(OLS)} \\
0.0000 & 0.0000 & 0.1788 & 1.7882 & 0.0000 & 0.1693 & 1.6936 & 0.0000 & 0.1728 & 1.7280 & 0.0001 & 0.1746 & 1.7495 \\ 
0.2500 & 0.0219 & 0.1727 & 2.2745 & 0.0256 & 0.1698 & 2.3326 & 0.0265 & 0.1795 & 2.4239 & 0.0306 & 0.1651 & 2.4053 \\ 
0.5000 & 0.0936 & 0.1656 & 3.4796 & 0.1029 & 0.1593 & 3.5820 & 0.1152 & 0.1605 & 3.7550 & 0.1218 & 0.1565 & 3.8253 \\ 
0.7500 & 0.2162 & 0.1582 & 4.9117 & 0.2378 & 0.1609 & 5.1352 & 0.2559 & 0.1464 & 5.2658 & 0.2712 & 0.1431 & 5.4004 \\ 
0.9500 & 0.3443 & 0.1458 & 6.0461 & 0.3887 & 0.1362 & 6.3815 & 0.4124 & 0.1362 & 6.5651 & 0.4361 & 0.1325 & 6.7356 \\ 
\multicolumn{13}{c}{(Heckman 2-step)} \\
0.0000 & 0.0000 & 0.3007 & 3.0068 & 0.0001 & 0.3299 & 3.3007 & 0.0000 & 0.3503 & 3.5025 & 0.0003 & 0.3886 & 3.8898 \\ 
0.2500 & 0.0011 & 0.3143 & 3.1596 & 0.0023 & 0.3265 & 3.2994 & 0.0009 & 0.3564 & 3.5757 & 0.0016 & 0.3650 & 3.6722 \\ 
0.5000 & 0.0031 & 0.2963 & 3.0150 & 0.0046 & 0.3320 & 3.3892 & 0.0058 & 0.3338 & 3.4248 & 0.0064 & 0.3712 & 3.7973 \\ 
0.7500 & 0.0083 & 0.2969 & 3.1056 & 0.0106 & 0.3148 & 3.3116 & 0.0111 & 0.3270 & 3.4354 & 0.0111 & 0.3509 & 3.6635 \\ 
0.9500 & 0.0171 & 0.2678 & 2.9806 & 0.0169 & 0.2952 & 3.2264 & 0.0131 & 0.3221 & 3.4179 & 0.0207 & 0.3052 & 3.3747 \\ 
\multicolumn{13}{c}{(\citet{Heckman90} ($b_n=\hat{F}^{-1}_{\bm{Z}^{\top}\bm{\gamma}_0}(.95)$))} \\
0.0000 & 0.0003 & 0.4509 & 4.5124 & 0.0002 & 0.4524 & 4.5261 & 0.0004 & 0.4557 & 4.5611 & 0.0001 & 0.4689 & 4.6895 \\ 
0.2500 & 0.0004 & 0.4488 & 4.4929 & 0.0006 & 0.4507 & 4.5135 & 0.0003 & 0.4488 & 4.4918 & 0.0005 & 0.4513 & 4.5180 \\ 
0.5000 & 0.0000 & 0.4481 & 4.4815 & 0.0011 & 0.4454 & 4.4661 & 0.0015 & 0.4423 & 4.4403 & 0.0002 & 0.4342 & 4.3443 \\ 
0.7500 & 0.0011 & 0.4516 & 4.5283 & 0.0001 & 0.4372 & 4.3732 & 0.0025 & 0.4379 & 4.4076 & 0.0040 & 0.4315 & 4.3615 \\ 
0.9500 & 0.0030 & 0.4531 & 4.5641 & 0.0020 & 0.4363 & 4.3863 & 0.0030 & 0.4326 & 4.3604 & 0.0060 & 0.4327 & 4.3953 \\ 
\multicolumn{13}{c}{(\citet{AndrewsSchafgans98} ($b_n=\hat{F}^{-1}_{\bm{Z}^{\top}\bm{\gamma}_0}(.95)$))} \\
0.0000 & Inf    & Inf    & Inf    & Inf    & Inf    & Inf    & Inf    & Inf    & Inf    & Inf    & Inf    & Inf \\ 
0.2500 & Inf    & Inf    & Inf    & Inf    & Inf    & Inf    & Inf    & Inf    & Inf    & Inf    & Inf    & Inf \\ 
0.5000 & 0.0000 & 0.5650 & 5.6499 & Inf    & Inf    & Inf    & Inf    & Inf    & Inf    & 0.0000 & 0.5579 & 5.5796 \\ 
0.7500 & 0.0019 & 0.5757 & 5.7739 & Inf    & Inf    & Inf    & 0.0025 & 0.5465 & 5.4883 & Inf    & Inf    & Inf \\ 
0.9500 & Inf    & Inf    & Inf    & 0.0017 & 0.5772 & 5.7865 & 0.0007 & 0.5559 & 5.5656 & 0.0036 & 0.5365 & 5.3982 \\ 
   \hline
\end{tabular}
} 
\end{table}

\begin{table}[H]
\pagenumbering{gobble}
{\scriptsize
\centering
\caption{DGP2 (non-normal), $n=100$, 1000 replications. Alternative estimators.  (RMSE is multiplied by $\sqrt{n}$.)}\label{dgp2n100balt}
\begin{tabular}{ccccccccccccc}
  \hline
\multirow{2}{*}{$\rho$} &\multicolumn{3}{c}{$\alpha=2.00$} &\multicolumn{3}{c}{$\alpha=1.50$} &\multicolumn{3}{c}{$\alpha=1.25$} &\multicolumn{3}{c}{$\alpha=1.00$}\\  
                        & sq bias   & sd     & RMSE            & sq bias   & sd     & RMSE            & sq bias   & sd     & RMSE            & sq bias   & sd     & RMSE \\ 
  \hline
\multicolumn{13}{c}{(OLS)} \\
0.0000 & 0.0001 & 0.4821 & 4.8221 & 0.0000 & 0.3909 & 3.9091 & 0.0014 & 1.0831 & 10.8368 & 0.0005 & 0.6892 & 6.8959 \\ 
0.2500 & 0.1510 & 0.3288 & 5.0909 & 0.1736 & 0.4121 & 5.8603 & 0.2394 & 0.7670 & 9.0977 & 0.2633 & 2.1317 & 21.9257 \\ 
0.5000 & 0.5682 & 0.3763 & 8.4250 & 0.7799 & 0.7149 & 11.3620 & 0.8475 & 0.4896 & 10.4266 & 1.1118 & 1.1903 & 15.9015 \\ 
0.7500 & 1.2924 & 0.2869 & 11.7248 & 1.7087 & 0.4380 & 13.7860 & 1.8913 & 1.8977 & 23.4365 & 2.3993 & 1.4258 & 21.0529 \\ 
0.9500 & 2.1300 & 0.3085 & 14.9169 & 2.6775 & 0.5903 & 17.3954 & 3.2066 & 1.2800 & 22.0111 & 3.6936 & 4.8277 & 51.9616 \\ 
\multicolumn{13}{c}{(Heckman 2-step)} \\
0.0000 & 0.0000 & 0.5162 & 5.1620 & 0.0002 & 0.5989 & 5.9913 & 0.0020 & 1.6850 & 16.8557 & 0.0055 & 1.1861 & 11.8844 \\ 
0.2500 & 0.1849 & 0.5086 & 6.6596 & 0.2433 & 0.6251 & 7.9627 & 0.3461 & 0.7657 & 9.6563 & 0.4278 & 1.4736 & 16.1227 \\ 
0.5000 & 0.7718 & 0.6221 & 10.7649 & 1.2359 & 1.9152 & 22.1445 & 1.3865 & 0.9240 & 14.9673 & 2.3721 & 2.9882 & 33.6172 \\ 
0.7500 & 1.6186 & 0.5040 & 13.6844 & 1.0515 & 16.7408 & 167.7214 & 3.2286 & 2.1692 & 28.1672 & 4.6267 & 2.7391 & 34.8276 \\ 
0.9500 & 2.7113 & 0.5127 & 17.2457 & 3.8929 & 0.9294 & 21.8101 & 5.4307 & 3.6028 & 42.9077 & 8.1794 & 4.7219 & 55.2045 \\
\multicolumn{13}{c}{(\citet{Heckman90} ($b_n=\hat{F}^{-1}_{\bm{Z}^{\top}\bm{\gamma}_0}(.95)$))} \\
0.0000 & 0.0001 & 0.4388 & 4.3889 & 0.0001 & 0.4741 & 4.7418 & 0.0004 & 0.4621 & 4.6254 & 0.0000 & 0.4743 & 4.7428 \\ 
0.2500 & 0.2121 & 0.4677 & 6.5640 & 0.3145 & 0.5099 & 7.5793 & 0.4265 & 0.6050 & 8.9023 & 0.5743 & 0.6945 & 10.2790 \\ 
0.5000 & 0.8606 & 0.4819 & 10.4540 & 1.3574 & 0.8578 & 14.4677 & 1.5614 & 0.8493 & 15.1087 & 2.1035 & 0.9589 & 17.3869 \\ 
0.7500 & 1.8488 & 0.5264 & 14.5806 & 2.9232 & 1.0078 & 19.8465 & 3.4719 & 0.9811 & 21.0580 & 5.0749 & 1.8515 & 29.1596 \\ 
0.9500 & 3.0237 & 0.7279 & 18.8508 & 4.3979 & 1.0647 & 23.5192 & 5.6603 & 1.6073 & 28.7121 & 7.8382 & 1.9172 & 33.9323 \\
\multicolumn{13}{c}{(\citet{AndrewsSchafgans98} ($b_n=\hat{F}^{-1}_{\bm{Z}^{\top}\bm{\gamma}_0}(.95)$))} \\
0.0000 & 0.0000 & 0.9234 & 9.2336 & 0.0015 & 1.0125 & 10.1326 & 0.0017 & 1.0037 & 10.0452 & 0.0001 & 0.9807 & 9.8079 \\ 
0.2500 & 0.2035 & 1.0138 & 11.0963 & 0.3754 & 1.1914 & 13.3967 & 0.6937 & 1.8209 & 20.0236 & 1.0490 & 2.1058 & 23.4170 \\ 
0.5000 & 0.8978 & 1.0810 & 14.3748 & 2.0336 & 2.9706 & 32.9515 & 2.3156 & 3.0879 & 34.4245 & 3.2494 & 2.6734 & 32.2432 \\ 
0.7500 & 1.8968 & 1.1576 & 17.9912 & Inf & Inf & Inf & 4.9692 & 2.9053 & 36.6200 & 10.1655 & 7.9701 & 85.8419 \\ 
0.9500 & 3.3558 & 2.1412 & 28.1792 & 5.0293 & 2.8442 & 36.2197 & 7.8162 & 3.3294 & 43.4752 & 13.5506 & 7.1477 & 80.3995 \\ 
   \hline
\end{tabular}
} 
\end{table}

\begin{table}[H]
\pagenumbering{gobble}
{\scriptsize
\centering
\caption{DGP1 (bivariate normal), $n=400$, 1000 replications. Alternative estimators.  (RMSE is multiplied by $\sqrt{n}$.)}\label{dgp1n400balt}
\begin{tabular}{ccccccccccccc}
  \hline
\multirow{2}{*}{$\rho$} &\multicolumn{3}{c}{$\alpha=2.00$} &\multicolumn{3}{c}{$\alpha=1.50$} &\multicolumn{3}{c}{$\alpha=1.25$} &\multicolumn{3}{c}{$\alpha=1.00$}\\  
                        & sq bias   & sd     & RMSE            & sq bias   & sd     & RMSE            & sq bias   & sd     & RMSE            & sq bias   & sd     & RMSE \\ 
  \hline
\multicolumn{13}{c}{(OLS)} \\
0.0000 & 0.0000 & 0.0866 & 1.7318 & 0.0000 & 0.0809 & 1.6194 & 0.0000 & 0.0833 & 1.6676 & 0.0000 & 0.0756 & 1.5130 \\ 
0.2500 & 0.0232 & 0.0840 & 3.4805 & 0.0258 & 0.0823 & 3.6111 & 0.0289 & 0.0829 & 3.7818 & 0.0295 & 0.0779 & 3.7712 \\ 
0.5000 & 0.0935 & 0.0805 & 6.3227 & 0.1071 & 0.0772 & 6.7232 & 0.1149 & 0.0771 & 6.9534 & 0.1211 & 0.0750 & 7.1190 \\ 
0.7500 & 0.2163 & 0.0758 & 9.4248 & 0.2390 & 0.0712 & 9.8808 & 0.2530 & 0.0712 & 10.1604 & 0.2719 & 0.0684 & 10.5179 \\ 
0.9500 & 0.3417 & 0.0695 & 11.7732 & 0.3830 & 0.0664 & 12.4488 & 0.4087 & 0.0634 & 12.8487 & 0.4366 & 0.0624 & 13.2739 \\ 
\multicolumn{13}{c}{(Heckman 2-step)} \\
0.0000 & 0.0000 & 0.1552 & 3.1042 & 0.0000 & 0.1640 & 3.2810 & 0.0000 & 0.1784 & 3.5672 & 0.0000 & 0.1832 & 3.6645 \\ 
0.2500 & 0.0000 & 0.1522 & 3.0456 & 0.0001 & 0.1591 & 3.1876 & 0.0001 & 0.1720 & 3.4472 & 0.0001 & 0.1797 & 3.6009 \\ 
0.5000 & 0.0002 & 0.1525 & 3.0627 & 0.0005 & 0.1646 & 3.3198 & 0.0003 & 0.1709 & 3.4357 & 0.0002 & 0.1770 & 3.5502 \\ 
0.7500 & 0.0009 & 0.1435 & 2.9301 & 0.0006 & 0.1559 & 3.1585 & 0.0003 & 0.1641 & 3.2977 & 0.0011 & 0.1708 & 3.4785 \\ 
0.9500 & 0.0007 & 0.1368 & 2.7857 & 0.0008 & 0.1468 & 2.9915 & 0.0012 & 0.1558 & 3.1939 & 0.0010 & 0.1614 & 3.2917 \\ 
\multicolumn{13}{c}{(\citet{Heckman90} ($b_n=\hat{F}^{-1}_{\bm{Z}^{\top}\bm{\gamma}_0}(.95)$))} \\
0.0000 & 0.0000 & 0.1623 & 3.2457 & 0.0000 & 0.1627 & 3.2548 & 0.0000 & 0.1653 & 3.3058 & 0.0000 & 0.1608 & 3.2164 \\ 
0.2500 & 0.0001 & 0.1585 & 3.1769 & 0.0001 & 0.1594 & 3.1950 & 0.0002 & 0.1566 & 3.1470 & 0.0005 & 0.1557 & 3.1449 \\ 
0.5000 & 0.0005 & 0.1582 & 3.1931 & 0.0010 & 0.1556 & 3.1755 & 0.0016 & 0.1647 & 3.3916 & 0.0028 & 0.1696 & 3.5550 \\ 
0.7500 & 0.0007 & 0.1569 & 3.1826 & 0.0016 & 0.1607 & 3.3140 & 0.0032 & 0.1557 & 3.3125 & 0.0070 & 0.1502 & 3.4401 \\ 
0.9500 & 0.0009 & 0.1539 & 3.1347 & 0.0027 & 0.1458 & 3.0967 & 0.0052 & 0.1487 & 3.3033 & 0.0098 & 0.1515 & 3.6168 \\ 
\multicolumn{13}{c}{(\citet{AndrewsSchafgans98} ($b_n=\hat{F}^{-1}_{\bm{Z}^{\top}\bm{\gamma}_0}(.95)$))} \\
0.0000 & Inf & Inf & Inf & 0.0000 & 0.3118 & 6.2369 & Inf & Inf & Inf & 0.0000 & 0.2991 & 5.9817 \\ 
0.2500 & 0.0000 & 0.3043 & 6.0860 & 0.0000 & 0.3056 & 6.1114 & Inf & Inf & Inf & Inf & Inf & Inf \\ 
0.5000 & Inf & Inf & Inf & 0.0001 & 0.2973 & 5.9497 & Inf & Inf & Inf & 0.0004 & 0.3065 & 6.1438 \\ 
0.7500 & 0.0001 & 0.2888 & 5.7792 & 0.0000 & 0.3041 & 6.0817 & Inf & Inf & Inf & Inf & Inf & Inf \\ 
0.9500 & Inf & Inf & Inf & Inf & Inf & Inf & Inf & Inf & Inf & 0.0015 & 0.3021 & 6.0908 \\ 
   \hline
\end{tabular}
} 
\end{table}

\begin{table}[H]
\pagenumbering{gobble}
{\scriptsize
\centering
\caption{DGP2 (non-normal), $n=400$, 1000 replications. Alternative estimators.  (RMSE is multiplied by $\sqrt{n}$.)}\label{dgp2n400balt}
\begin{tabular}{ccccccccccccc}
  \hline
\multirow{2}{*}{$\rho$} &\multicolumn{3}{c}{$\alpha=2.00$} &\multicolumn{3}{c}{$\alpha=1.50$} &\multicolumn{3}{c}{$\alpha=1.25$} &\multicolumn{3}{c}{$\alpha=1.00$}\\  
                        & sq bias   & sd     & RMSE            & sq bias   & sd     & RMSE            & sq bias   & sd     & RMSE            & sq bias   & sd     & RMSE \\ 
  \hline
\multicolumn{13}{c}{(OLS)} \\
0.0000 & 0.0000 & 0.1225 & 2.4496 & 0.0000 & 0.1295 & 2.5902 & 0.0000 & 0.1356 & 2.7151 & 0.0000 & 0.1423 & 2.8469 \\ 
0.2500 & 0.1464 & 0.1253 & 8.0533 & 0.1855 & 0.1377 & 9.0428 & 0.2246 & 0.1549 & 9.9707 & 0.2653 & 0.1651 & 10.8174 \\ 
0.5000 & 0.5746 & 0.1254 & 15.3660 & 0.7346 & 0.1398 & 17.3680 & 0.8895 & 0.2137 & 19.3406 & 1.0931 & 0.2617 & 21.5551 \\ 
0.7500 & 1.3338 & 0.1182 & 23.2184 & 1.6652 & 0.1752 & 26.0457 & 1.9547 & 0.2239 & 28.3186 & 2.3994 & 0.2910 & 31.5215 \\ 
0.9500 & 2.1328 & 0.1128 & 29.2955 & 2.6396 & 0.1762 & 32.6839 & 3.1089 & 0.2503 & 35.6178 & 4.0168 & 0.4762 & 41.1997 \\ 
\multicolumn{13}{c}{(Heckman 2-step)} \\
0.0000 & 0.0000 & 0.1900 & 3.8004 & 0.0000 & 0.2107 & 4.2133 & 0.0002 & 0.2368 & 4.7430 & 0.0000 & 0.2641 & 5.2828 \\ 
0.2500 & 0.1966 & 0.1881 & 9.6325 & 0.2873 & 0.2342 & 11.6982 & 0.4031 & 0.3386 & 14.3913 & 0.6158 & 0.4610 & 18.2023 \\ 
0.5000 & 0.7901 & 0.1930 & 18.1916 & 1.1622 & 0.2667 & 22.2114 & 1.6124 & 0.5121 & 27.3834 & 2.4130 & 0.6840 & 33.9460 \\ 
0.7500 & 1.8144 & 0.2124 & 27.2728 & 2.6411 & 0.4291 & 33.6168 & 3.4999 & 0.5283 & 38.8794 & 5.4115 & 0.8450 & 49.4998 \\ 
0.9500 & 2.8707 & 0.2172 & 34.1632 & 4.1789 & 0.4178 & 41.7300 & 5.6244 & 0.6344 & 49.0992 & 9.1834 & 1.2123 & 65.2785 \\ 
\multicolumn{13}{c}{(\citet{Heckman90} ($b_n=\hat{F}^{-1}_{\bm{Z}^{\top}\bm{\gamma}_0}(.95)$))} \\
0.0000 & 0.0000 & 0.1625 & 3.2499 & 0.0000 & 0.1668 & 3.3373 & 0.0001 & 0.1698 & 3.4017 & 0.0000 & 0.1703 & 3.4069 \\ 
0.2500 & 0.1881 & 0.1647 & 9.2789 & 0.2496 & 0.1838 & 10.6470 & 0.3165 & 0.2183 & 12.0686 & 0.4023 & 0.2231 & 13.4475 \\ 
0.5000 & 0.7417 & 0.1732 & 17.5695 & 0.9928 & 0.2013 & 20.3307 & 1.2631 & 0.3301 & 23.4275 & 1.6212 & 0.4082 & 26.7415 \\ 
0.7500 & 1.7112 & 0.1877 & 26.4303 & 2.2915 & 0.3216 & 30.9514 & 2.7843 & 0.3655 & 34.1634 & 3.5582 & 0.4415 & 38.7456 \\ 
0.9500 & 2.7126 & 0.1885 & 33.1547 & 3.6019 & 0.2978 & 38.4220 & 4.4380 & 0.4261 & 42.9862 & 6.0276 & 0.7106 & 51.1176 \\  
\multicolumn{13}{c}{(\citet{AndrewsSchafgans98} ($b_n=\hat{F}^{-1}_{\bm{Z}^{\top}\bm{\gamma}_0}(.95)$))} \\
0.0000 & 0.0000 & 0.4867 & 9.7345 & 0.0001 & 0.5337 & 10.6766 & 0.0001 & 0.5335 & 10.6730 & 0.0004 & 0.5566 & 11.1378 \\ 
0.2500 & 0.2272 & 0.5277 & 14.2224 & 0.4256 & 0.7807 & 20.3484 & 0.7890 & 1.5216 & 35.2385 & 1.1932 & 1.4466 & 36.2534 \\ 
0.5000 & 0.9464 & 0.7039 & 24.0155 & 1.6530 & 1.0790 & 33.5688 & 2.9148 & 2.8347 & 66.1824 & 5.3105 & 3.1842 & 78.6128 \\ 
0.7500 & 2.2403 & 0.9842 & 35.8267 & 3.9629 & 2.3050 & 60.9133 & 6.0330 & 2.7449 & 73.6684 & 10.2806 & 3.3502 & 92.7458 \\ 
0.9500 & 3.5332 & 0.9977 & 42.5610 & 5.7880 & 1.8579 & 60.7943 & 9.2207 & 3.1294 & 87.2098 & 21.6148 & 5.9786 & 151.4702 \\  
   \hline
\end{tabular}
} 
\end{table}

In summary, the simulations presented here show the potential of the proposed estimator to exhibit good performance in 
terms of RMSE across two different parametric families of data-generating process, across variation in the degree to 
which the errors $U$ and $V$ are dependent and across variation in the extent to which the parameter of interest is identified.  This assessment is 
unaffected by moderate variation in the estimated MSE-optimal bandwidth used to 
implement the proposed estimator.  Tables~\ref{dgp1n100mhat1.p2} and \ref{dgp2n100mhat1.p2} also support the conclusion of Theorem~\ref{mainthm}  
in indicating the sensitivity of the bias of the proposed estimator to variation in the parameter $(\rho,\alpha)$ 
under both DGP1 and DGP2.  These results also show that the estimated MSE-optimal bandwidth used to implement the proposed 
estimator was effective in limiting the extent to which the RMSE of the proposed estimator was sensitive to variation in 
$(\rho,\alpha)$.  In particular, the RMSE of the proposed estimator was found under DGP2 to dominate those of the other 
estimators considered.

\section{Empirical Example}\label{ee}

\noindent  This section reconsiders individual labour-market data from Malaysia that were originally 
analyzed by \citet{Schafgans00}.  The estimator developed above is applied to the problem of 
estimating the extent of plausible gender wage discrimination in Malaysia using data from 
the Second Malaysian Family Life Survey (MFLS2) conducted between August 1988 and January 1989.  Inferences 
available from the proposed estimator are compared with those obtained via 
the same alternative estimators considered in Section~\ref{mc}.  The proposed estimator was found to 
generate inferences that differ significantly from those obtainable via the alternative estimators used in 
the simulation experiments described in Section~\ref{mc}.  All estimators applied to the MFLS2 data considered 
in this section were implemented in precisely the same way in which they were implemented in the simulation 
experiments presented earlier.

I consider a decomposition of the female--male log-wage difference for ethnic Malay workers.  The 
consideration of gender wage gaps for Malaysian workers of the same ethnicity is potentially important because of the differential treatment of 
Malays in the labour force after 1970 \citep[see e.g.,][and references cited]{Schafgans00}.  I follow \citet{Schafgans00} by analyzing gender wage gaps in the MFLS2 using the basic decomposition 
technique of \citet{Oaxaca73}.  In particular, suppose that the generic model given above in 
(\ref{outcome})--(\ref{observation}) holds for both men and women, 
i.e., 
\begin{eqnarray}
Y^*_j &=& \theta_{j}+\bm{X}^{\top}_j\bm{\beta}_{j}+U_j,\label{outcomej}\\
D_j &=& 1\left\{\bm{Z}^{\top}_j\bm{\gamma}_{j}\ge V_j\right\},\label{selectionj}\\
Y_j &=& DY^*_j\label{observationj}
\end{eqnarray}
where $Y^*_j$ is the natural logarithm of the offered average hourly wage, and where the index $j\in\{0,1\}$ denotes a given gender.  For $j\in\{0,1\}$ let 
$\bar{Y}_j\equiv E\left[\left.Y_j\right|D_j=1\right]$, and let $\bar{\bm{X}}_j$ denote the average 
``endowments'' of wage-determining attributes for workers of gender $j$.  The observed log-wage gap 
$\bar{Y}_1-\bar{Y}_0$ between the two genders can then be decomposed as
\begin{eqnarray}
\bar{Y}_1-\bar{Y}_0 &=&\left[\left(\theta_1-\theta_0\right)+\bar{\bm{X}}^{\top}_0\left(\bm{\beta}_1-\bm{\beta}_0\right)\right]+(\bar{\bm{X}}_1-\bar{\bm{X}}_0)^{\top}\bm{\beta}_1\nonumber\\
	& &+\left(E\left[\left.U_1\right|D_1=1\right]-E\left[\left.U_0\right|D_0=1\right]\right)\label{decomp1}\\
	&=&\left[\left(\theta_1-\theta_0\right)+\bar{\bm{X}}^{\top}_1\left(\bm{\beta}_1-\bm{\beta}_0\right)\right]+(\bar{\bm{X}}_1-\bar{\bm{X}}_0)^{\top}\bm{\beta}_0\nonumber\\
	& &+\left(E\left[\left.U_1\right|D_1=1\right]-E\left[\left.U_0\right|D_0=1\right]\right)\label{decomp2}\\
	&\equiv & A+B+C,\label{decomp}
\end{eqnarray}
where $A$ is that part of the gap due to differences in wage structures between genders;
$B$ is due to observable differences between men and women in wage-determining 
characteristics and $C$ is the contribution of differential self-selection into the labour force.  Following 
\citet{Schafgans00} the quantity $\bar{Y}_1-\bar{Y}_0-C=A+B$ is referred to as the 
{\it selection-corrected log-wage gap}.

Wage discrimination in favor of members of gender $j=1$ is empirically plausible if the overall log-wage gap 
$\bar{Y}_1-\bar{Y}_0$ cannot be entirely explained by differential self-selection into paid work, differences in 
observed endowments or by differing returns to those endowments.  Moreover, given the definitions of the 
quantities $A$ and $B$ appearing above in (\ref{decomp}), the extent of plausible wage discrimination favoring gender 1 may be equated with the difference in 
intercepts $\theta_1-\theta_0$.  

The analysis that follows considers a subset of the sample taken from the MFLS2 of 1988--89 that 
was analyzed by \citet{Schafgans00}.  This particular dataset is publicly available from the {\it Journal of Applied Econometrics} Data Archive 
at \url{http://qed.econ.queensu.ca/jae/1998-v13.5/schafgans/}.  Each observation in this sample corresponds to a member of the labour force.  I specifically consider ethnic 
Malays residing in non-urban settings who were observed to have some level of unearned household income in terms of dividends, interests or rents, 
and who were also observed to have passed the highest level of schooling (i.e., primary on the one hand, or secondary or above) corresponding to the 
number of years of schooling observed.  This subset of the MFLS2 consisted of 965 women and 878 men.

I also use the same variable specifications used by \citet{Schafgans00}.  In particular, the outcome variable $Y^*_j$ is LWAGE, the log hourly real 
wage in the local currency deflated using the 1985 consumer price index.  The selection variable $D_j$ is the indicator PAIDWORK for whether the 
individual in question is in fact a wage worker.  The exogenous variables appearing in the selection equations for each gender include UNEARN, a 
measure of household unearned income in terms of dividends, interest and rents; HOUSEH, the value of household 
real estate owned, computed as the product of an indicator variable for house ownership and the cost of the house 
owned; and AMTLAND, the extent of household landholding in hundreds of acres.  In addition, selection into wage 
work is also assumed to be determined by AGE, in years; AGESQ, the square of AGE divided by 100; YPRIM, years of 
primary schooling and YSEC, years of schooling at the secondary level or above.  The variables appearing on the 
right-hand side of the outcome equations for each gender or ethnic group are AGE, AGESQ, YPRIM and YSEC.  
\citet[Section~4]{Schafgans00} contains further details regarding variable definitions.

For each gender $j\in\{\mbox{female, male}\}$, the proposed estimator and that of \citet{Heckman90} and \citet{AndrewsSchafgans98} rely on 
the preliminary procedure described in \citet[p. 484--487]{Schafgans98} to estimate the nuisance parameters 
$\bm{\beta}_j$ and $\bm{\gamma}_j$ appearing in (\ref{outcomej}) and (\ref{selectionj}), respectively.  This 
involves estimating the selection equation for each group via the method of \citet{KleinSpady93} and estimating 
the slope parameters in each outcome equation using the method of \citet{Robinson88}.  This is followed by 
estimation of the intercept parameter in each outcome equation via the proposed estimator.  Standard errors are 
calculated by bootstrapping with replacement with $B=200$ replications.  

Estimates of the outcome-equation 
parameters obtained via the proposed estimator are given in which the proposed estimator is implemented using the same 
kernel and estimated MSE-optimal bandwidth $\hat{h}^*_n$ used in the simulations reported in Section~\ref{mc}.  In 
common with the results given earlier in Section~\ref{mc}, I also considered implementations of the proposed 
estimator in which the bandwidth was set to $h_n=(2/3)\hat{h}^*_n$ and $h_n=(3/2)\hat{h}^*_n$.

The decomposition of the observed gender log-wage gaps for ethnic Malay workers is presented in Table~\ref{eedecomp1110}.  In keeping with the theory 
developed above, the focus is on the extent of plausible gender wage discrimination, which is identified with the difference between the estimated 
intercepts.  A striking result is the evidence provided by the proposed estimator of positive wage discrimination in 
favor of women.  In particular, Table~\ref{eedecomp1110} indicates that all three implementations of the proposed estimator imply a large, positive 
and significant difference in intercepts, while the OLS and 2-step procedures generated estimated intercept differences 
that were both insignificant.  The implementation of the H90 procedure with $b_n$ set to the .90-quantile of the estimated 
selection index generated a similarly insignificant estimate of the difference in intercepts.  The other implementation of the H90 procedure, 
along with the AS98 procedure, proved to be numerically unstable.  Table~\ref{eedecomp1110} indicates that there exists a clear difference in 
inferences regarding the extent of gender wage discrimination amongst ethnic Malay workers between estimates generated by the 
proposed estimator and those generated by established procedures.  

\FloatBarrier
\begin{landscape}
\begin{table}[H]
\pagenumbering{gobble}
{\scriptsize
\caption{Female--male log-wage decomposition, Malays.  Standard errors in parentheses}\label{eedecomp1110} 
\centering
\hspace*{-30pt}\setlength\tabcolsep{3pt}
\begin{tabular}{l|c|c|c|c|c|c|c|c}
  \hline\hline
\rule{0pt}{4ex}\multirow{2}{*}{Wage gap (overall)}  & \multicolumn{8}{c}{-0.2882} \\ 
                                                    & \multicolumn{8}{c}{(0.0425)} \\[4pt] \hline 
\rule{0pt}{4ex}\multirow{2}{*}{Female (endowment)}  & \multicolumn{8}{c}{-0.0638} \\ 
                                                    & \multicolumn{8}{c}{(0.0337)} \\[4pt] \hline 
\rule{0pt}{4ex}\multirow{2}{*}{Male (endowment)}    & \multicolumn{8}{c}{-0.0369} \\ 
                                                    & \multicolumn{8}{c}{(0.0377)}\\[4pt] 
\hline
\rule{0pt}{4ex}\multirow{2}{*}{ }             & \multicolumn{3}{c|}{\rule{0pt}{4ex} $\hat{\theta}_n$}  & \multirow{2}{*}{OLS} & \multirow{2}{*}{2-step} & \multicolumn{2}{c|}{H90} & AS98 \\[2pt] \cline{2-4}\cline{7-9} 
                               &\rule{0pt}{4ex} ($h_n=\hat{h}^*_n$) & ($h_n=(2/3)\hat{h}^*_n$) & ($h_n=(3/2)\hat{h}^*_n$) & & & ($b_n=\hat{F}^{-1}_{\bm{Z}^{\top}\hat{\bm{\gamma}}_n}(.90)$) & ($b_n=\hat{F}^{-1}_{\bm{Z}^{\top}\hat{\bm{\gamma}}_n}(.95)$) &($b_n=\hat{F}^{-1}_{\bm{Z}^{\top}\hat{\bm{\gamma}}_n}(.95)$) \\[4pt]
\hline\rule{0pt}{4ex}\multirow{2}{*}{Wage gap (selection-corrected)} & 0.9283 & 0.9328 & 0.9263 & -1.337 & -10.1628 & -0.2709 & -0.0516 & -0.1015 \\ 
                                                               & (0.7516) & (0.7515) & (0.7517) & (0.0812) & (1.0385) & (1.0617) & (NaN) & (NaN)  \\[4pt]\hline 
\rule{0pt}{4ex}\multirow{2}{*}{Female (coefficients)}          &\multicolumn{3}{c|}{0.1098}   & -1.2553  & -8.0955  &\multicolumn{2}{c|}{0.1098} & 0.1098 \\ 
                                                               &\multicolumn{3}{c|}{(0.6483)} & (0.6977) & (2.2356) &\multicolumn{2}{c|}{(0.6483)} & (0.6483) \\[4pt] \hline 
\rule{0pt}{4ex}\multirow{2}{*}{Male (coefficients)}            &\multicolumn{3}{c|}{0.0829}   & -1.2951 & -7.988    &\multicolumn{2}{c|}{0.0829} & 0.0829 \\ 
                                                               &\multicolumn{3}{c|}{(0.6457)} & (0.6989) & (2.2296) &\multicolumn{2}{c|}{(0.6457)} & (0.6457) \\[4pt] \hline 
\rule{0pt}{4ex}\multirow{2}{*}{Difference in intercepts}       & 0.8823   & 0.8867   & 0.8803   & -0.0364  & -2.1082  & -0.3169  & -0.0976 & -0.1475 \\ 
                                                               & (0.3787) & (0.3784) & (0.3789) & (0.7526) & (3.1712) & (0.8401) & (NaN)   & (NaN)  \\ 
   \hline
\end{tabular}
}
\end{table}
\end{landscape}

\section{Conclusion}\label{concl}

\noindent This paper has developed a new estimator of the intercept of a sample-selection model in which the 
joint distribution of the unobservables and the selection index is unspecified.  It has been shown that the new 
estimator can be made under mild conditions to converge in probability at an $n^{-p/(2p+1)}$-rate, where $p\ge 2$ is an 
integer that indexes the strength of certain smoothness assumptions as given above in Assumption~\ref{a2}.\ref{a2d}.  
This rate of convergence is shown to be the optimal rate of convergence for estimation of the intercept parameter in terms 
of a minimax criterion.  The new estimator is under mild conditions consistent and asymptotically normal with a rate of 
convergence that is the same regardless of the joint distribution of the unobservables and the selection 
index.  This differs from other proposals in the literature and is convenient in practice, as the extent to which 
selection is endogenous is typically unknown in applications.  In addition, the rate of convergence of the new estimator, 
unlike those of better known estimators, does not depend on assumptions regarding the relative tail behaviours of 
the determinants of selection beyond those necessary for the identification of the estimand.  This similarly 
facilitates statistical inference regarding the intercept.  Simulations presented above show the potential accuracy of 
the proposed estimator relative to that of established procedures across different model specifications.  An 
empirical example using individual labour-market data from Malaysia shows the potential of the proposed estimator to 
generate inferences regarding the extent of plausible gender wage discrimination that differ from those available from  
better known estimators.

\appendix

\section{Appendix}

\subsection{Further discussion of the finiteness of $\left.\left(\partial^p/\partial q^{p}\right) r_{U|Q}(u|q)\right|_{q=1}$ for any $u\in\mathbb{R}$}\label{lemmabound}

\noindent This appendix supplies details regarding the assertion made earlier that the conditional density $r_{U|Q}(u|q)$ given in 
(\ref{condl_dens}) satisfies $\left.\left(\partial^p/\partial q^{p}\right) r_{U|Q}(u|q)\right|_{q=1}$ for any $u\in\mathbb{R}$.  Recall in this connection that the finiteness of 
$\left.\left(\partial^p/\partial q^{p}\right) r_{U|Q}(u|q)\right|_{q=1}$ implies in turn the previously stated differentiability condition regarding the conditional mean 
function $m_{F_0}(q)$ given in (\ref{m0F}).  In particular, the differentiability of $m_{F_0}(q)$ on $(0,1)$ to $p$th order, where $p\ge 2$ is a 
constant specified in Assumption~\ref{a2}.\ref{a2d}, along with the left-continuity of the $p$th derivative of $m_{F_0}(q)$ at $q=1$, is sufficient 
to control the asymptotic bias of the proposed estimator $\hat{\theta}_n$; see Appendix~\ref{mainthmpf} below for details.

The finiteness of $\left.\left(\partial^p/\partial q^{p}\right) r_{U|Q}(u|q)\right|_{q=1}$ is a consequence firstly of the fact, developed in Lemma~\ref{bound1} in what follows, that 
identification of $\bm{\gamma}_0$ subject to Assumption~\ref{a1}.\ref{a1b} implies that $r_{U|Q}(u|q)$ is the conditional density of $U$ given $F_0(V)=q$ for any $q\in [0,1]$:

\begin{lemma}\label{bound1}
Identification of $\bm{\gamma}_0$ subject to the conditions of Assumption~\ref{a1}.\ref{a1b} implies that the random variable $F_0(V)$ 
satisfies the following:
\begin{enumerate} 
\item The distribution of $F_0(V)$ has support equal to $[0,1]$;
\item the conditional distribution of $U$ given $F_0(V)=q$ for any $q\in [0,1]$ is absolutely continuous with density given by $r_{U|Q}(u|q)$ in (\ref{condl_dens}).
\end{enumerate}
\end{lemma}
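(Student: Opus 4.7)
The plan is to prove the two parts of the lemma in sequence.

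For the first claim, I would start by invoking the standard support consequence of identification of $\bm{\gamma}_0$ in semiparametric binary-choice settings: namely, that the supports of the selection index $\bm{Z}^{\top}\bm{\gamma}_0$ and the selection-equation disturbance $V$ must share their endpoints. This identification-at-infinity requirement, familiar from \citet{Han87} and \citet{KleinSpady93}, is additionally implicit in any scheme that identifies $\theta_0$ through (\ref{theta0_id}), since the latter requires $P[V\le \bm{Z}^{\top}\bm{\gamma}_0]$ to be drivable arbitrarily close to one as $\bm{Z}^{\top}\bm{\gamma}_0$ ranges over its support. The absolute continuity of $F_0$ from Assumption~\ref{a1}.\ref{a1b}.\ref{a1b4}, together with the positivity of $f_0$ on the support of $\bm{Z}^{\top}\bm{\gamma}_0$, implies that $F_0$ is continuous and strictly increasing on that support. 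Composing with $V$ therefore produces a random variable $F_0(V)$ whose support is the continuous image of $V$'s support under $F_0$; by the matching-endpoints condition, this image equals $[0,1]$.

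For the second claim, the strict monotonicity and continuity of $F_0$ on the support of $\bm{Z}^{\top}\bm{\gamma}_0$ yield a continuous inverse $F_0^{-1}:[0,1]\to\mathbb{R}$, so the map $V\mapsto F_0(V)$ is a measurable bijection between the supports of $V$ and $F_0(V)$. The $\sigma$-algebras generated by $V$ and $F_0(V)$ therefore coincide, and conditioning on $\{F_0(V)=q\}$ is equivalent to conditioning on $\{V=F_0^{-1}(q)\}$. Integrating the conditional density $g_{U,V|\bm{x},\bm{z}}(\cdot,\cdot)$ of Assumption~\ref{a1}.\ref{a1d} against the joint distribution of $(\bm{X},\bm{Z})$ produces an absolutely continuous unconditional joint density $g_{UV}$ for $(U,V)$, and analogously an absolutely continuous marginal $g_V$ for $V$. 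The standard formula for the conditional density of $U$ given $V=v$ then reads $g_{UV}(u,v)/g_V(v)$, and substituting $v=F_0^{-1}(q)$ delivers the expression in (\ref{condl_dens}).

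The principal obstacle lies in the first claim: Assumption~\ref{a1}.\ref{a1b} does not by itself explicitly impose the large-support condition equating the endpoints of the supports of $V$ and $\bm{Z}^{\top}\bm{\gamma}_0$, so the rigorous argument must treat this large-support condition as a standing implicit prerequisite bundled into the premise ``identification of $\bm{\gamma}_0$''. Once this support condition is in hand, the remainder of the proof reduces to a routine change-of-variables calculation carried out on $[0,1]$.
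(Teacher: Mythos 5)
There is a genuine gap in your treatment of part~1, and it sits exactly where the lemma does its work. The entire content of the first claim is that the support condition on $F_0(V)$ is a \emph{consequence} of the identification of $\bm{\gamma}_0$ under Assumption~\ref{a1}.\ref{a1b} --- it is derived, not assumed. Your proposal instead declares the matching-endpoints condition to be ``a standing implicit prerequisite bundled into the premise,'' supported by an appeal to identification of $\theta_0$ via (\ref{theta0_id}) and to general results in \citet{Han87} and \citet{KleinSpady93}. That substitutes the conclusion for the argument: the premise of the lemma concerns identification of $\bm{\gamma}_0$ (injectivity of the map $\tilde{\bm{\gamma}}\mapsto P[V\le z_1+\tilde{\bm{z}}^{\top}\tilde{\bm{\gamma}}\,|\,\bm{X}=\bm{x},\bm{Z}=\bm{z}]$ on $\mathbb{R}^{l-1}$), not of $\theta_0$, and the cited identification results do not hand you the support equality for free. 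The paper closes this gap with a short contrapositive argument you are missing: writing $[v_1,v_2]$ for the support of $V$ given $(\bm{x},\bm{z})$, if $F_0(v_2)<1$ one can choose $\tilde{\bm{\gamma}}'\neq\bm{0}$ with $F_0\bigl(z_1+\tilde{\bm{z}}^{\top}\tilde{\bm{\gamma}}'\bigr)\ge F_0(v_2)$, so that $\tilde{\bm{\gamma}}'$ and $2\tilde{\bm{\gamma}}'$ both yield conditional selection probability equal to one, destroying injectivity; symmetrically, $F_0(v_1)>0$ yields two distinct parameter values with selection probability zero. Hence identification forces $F_0(v_1)=0$ and $F_0(v_2)=1$, and continuity of $F_0$ gives support $[0,1]$. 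Without some version of this argument your part~1 is an assertion, not a proof.

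Your part~2 is essentially the paper's argument --- strict monotonicity of $F_0$ on the relevant support, a change of variables giving $r_Q(q)=g_V(F_0^{-1}(q))/f_0(F_0^{-1}(q))$ and $r_{UQ}(u,q)=g_{UV}(u,F_0^{-1}(q))/f_0(F_0^{-1}(q))$, and then the ratio --- and is fine as far as it goes, though it relies on part~1 having been established to know that the conditioning event $\{F_0(V)=q\}$ is in the support for every $q\in[0,1]$.
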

\begin{proof}
\begin{enumerate}
\item Begin by observing that $\bm{\gamma}_0$ is identified up to the location and scale normalization specified in Assumption~\ref{a1}.\ref{a1b} iff 
the mapping $\tilde{\bm{\gamma}}\to P\left[\left.V\le z_1+\tilde{\bm{z}}^{\top}\tilde{\bm{\gamma}}\right|\bm{X}=\bm{x},\bm{Z}=\bm{z}\right]$ is 1--1 on $\mathbb{R}^{l-1}$ for any $\bm{x}$ in the support of $\bm{X}$ and any
$\bm{z}=[\begin{array}{cc} z_1 & \tilde{\bm{z}}^{\top}\end{array}]^{\top}$ in the support $Supp[\bm{Z}]$ of $\bm{Z}$.  We have 
\[
P\left[\left.V\le z_1+\tilde{\bm{z}}^{\top}\tilde{\bm{\gamma}}\right|\bm{X}=\bm{x}, \bm{Z}=\bm{z}\right]=P\left[\left.F_0(V)\le F_0\left(z_1+\tilde{\bm{z}}^{\top}\tilde{\bm{\gamma}}\right)\right|\bm{X}=\bm{x},\bm{Z}=\bm{z}\right]
\] 
since $F_0$ is a distribution function, so identification of $\bm{\gamma}_0$ subject to the conditions of Assumption~\ref{a1}.\ref{a1b} holds iff the mapping 
$\tilde{\bm{\gamma}}\to P\left[\left.F_0(V)\le F_0\left(z_1+\tilde{\bm{z}}^{\top}\tilde{\bm{\gamma}}\right)\right|\bm{X}=\bm{x},\bm{Z}=\bm{z}\right]$ is 1--1 on $\mathbb{R}^{l-1}$ for any 
$[\begin{array}{cc} \bm{x}^{\top} & \bm{z}^{\top}\end{array}]$ in the support of $\bm{X}$ and $\bm{Z}$.

Let the support of the conditional distribution of $V$ given $[\begin{array}{cc}\bm{X}^{\top} & \bm{Z}^{\top}\end{array}]=[\begin{array}{cc}\bm{x}^{\top} & \bm{z}^{\top}\end{array}]$ be given by the interval 
$[v_1,v_2]$ for constants $-\infty\le v_1 < v_2\le\infty$.  Suppose that $F_0(v_2)<1$.  Then writing $\bm{z}=[\begin{array}{cc} z_1 & \tilde{\bm{z}}^{\top}\end{array}]^{\top}$, there exists a 
$\tilde{\bm{\gamma}}^{\prime}\in\mathbb{R}^{l-1}$ with $\tilde{\bm{\gamma}}^{\prime}\neq\bm{0}$ such that 
$F_0\left(z_1+\tilde{\bm{z}}^{\top}\tilde{\bm{\gamma}}^{\prime}\right)\ge F_0(v_2)$, which implies that
\begin{eqnarray*}
 & & P\left[\left.F_0(V)\le F\left(z_1+\tilde{\bm{z}}^{\top}\tilde{\bm{\gamma}}^{\prime}\right)\right|\bm{X}=\bm{x},\bm{Z}=\bm{z}\right]\\
 &=& P\left[\left.F_0(V)\le F_0\left(z_1+\tilde{\bm{z}}^{\top}\left(2\tilde{\bm{\gamma}}^{\prime}\right)\right)\right|\bm{X}=\bm{x},\bm{Z}=\bm{z}\right]\\
 &=& 1,
\end{eqnarray*} 
from which it follows that $\bm{\gamma}_0$ is not identified.  A failure of identification accordingly ensues when $F_0(v_2)<1$.

Similarly, if $F_0(v_1)>0$ we have that $F_0\left(z_1+\tilde{\bm{z}}^{\top}\tilde{\bm{\gamma}}^{\prime\prime}\right)< F_0(v_1)$ for 
some $\tilde{\bm{\gamma}}^{\prime\prime}\in\mathbb{R}^{l-1}$ with $\tilde{\bm{\gamma}}^{\prime\prime}\neq\bm{0}$, so that 
\begin{eqnarray*}
 & & P\left[\left.F_0(V)\le F_0\left(z_1+\tilde{\bm{z}}^{\top}\tilde{\bm{\gamma}}^{\prime\prime}\right)\right|\bm{X}=\bm{x},\bm{Z}=\bm{z}\right]\\
 &=& P\left[\left.F_0(V)\le F_0\left(z_1+\tilde{\bm{z}}^{\top}\left(.5\tilde{\bm{\gamma}}^{\prime\prime}\right)\right)\right|\bm{X}=\bm{x},\bm{Z}=\bm{z}\right]\\
 &=& 0.
\end{eqnarray*}  
This implies a similar failure of identification when $F_0(v_1)>0$.  

It follows that identification of $\bm{\gamma}_0$ subject to Assumption~\ref{a1}.\ref{a1b} implies that the support of the 
conditional distribution of $F_0(V)$ given $[\begin{array}{cc}\bm{X}^{\top} & \bm{Z}^{\top}\end{array}]=[\begin{array}{cc} \bm{x}^{\top} & \bm{z}^{\top}\end{array}]$, for any 
$[\begin{array}{cc} \bm{x}^{\top} & \bm{z}^{\top}\end{array}]$ in the support of $\bm{X}$ and $\bm{Z}$, is $[0,1]$.  The support of the conditional distribution given 
$[\begin{array}{cc}\bm{X}^{\top} & \bm{Z}^{\top}\end{array}]=[\begin{array}{cc} \bm{x}^{\top} & \bm{z}^{\top}\end{array}]$ coincides with that of the marginal 
distribution.

\item Part~1 of this lemma shows that the distribution of $F_0(V)$ has support equal to $[0,1]$.  Assumption~\ref{a1}.\ref{a1b4} implies that the mapping 
$v\to F_0(v)$ is 1--1 and strictly monotone on the support $Supp[V]$ of $V$.  A standard argument accordingly 
shows that the density of $Q\equiv F_0(V)$ is given by 
\begin{equation}\label{marg}
r_Q(q)\equiv\frac{g_{V}\left(F_0^{-1}(q)\right)}{f_0\left(F^{-1}_0(q)\right)},
\end{equation}  
where $g_V(\cdot)$ denotes the marginal density of $V$.  Similarly, the joint density of $[\begin{array}{cc} U & Q\end{array}]$ is given by 

\begin{equation}\label{joint}
r_{UQ}(u,q)\equiv\frac{g_{UV}\left(u,F^{-1}_0(q)\right)}{f_0\left(F^{-1}_0(q)\right)},
\end{equation}  
where $g_{UV}(\cdot,\cdot)$ denotes the joint density of $U$ and $V$.

That the conditional density of $U$ given $F_0(V)=q$ for any $q\in [0,1]$ has the desired form is immediate.
\end{enumerate}
\end{proof}

It should be noted that the marginal distribution of the disturbance term $V$ in the selection equation is restricted to have a right 
tail that is related to that of the selection index $\bm{Z}^{\top}\bm{\gamma}_0$ via the finiteness for all $q\in [0,1]$ of the marginal density $r_Q(q)$ in (\ref{marg}).
For example, in the case where $V$ and $\bm{Z}^{\top}\bm{\gamma}_0$ are both normally distributed with the scale normalization $Var[V]=1$, the condition $r_Q(1)<\infty$ implies that the variance of $V$ 
is no greater than that of $\bm{Z}^{\top}\bm{\gamma}_0$, i.e., $r_Q(1)<\infty$ in this case implies that $Var\left[\bm{Z}^{\top}\bm{\gamma}_0\right]\ge 1$.  More generally, 
the condition $r_Q(1)<\infty$ rules out situations where the marginal distribution of $V$ has an upper tail that is strictly heavier than that of $\bm{Z}^{\top}\bm{\gamma}_0$.

This restriction on the relative upper-tail behaviours of the distributions of $V$ and $\bm{Z}^{\top}\bm{\gamma}_0$ 
is weaker than the restrictions on the joint distribution of $[\begin{array}{cc} V & \bm{Z}^{\top}\bm{\gamma}_0\end{array}]$ that feature in e.g., \citet{AndrewsSchafgans98} 
or \citet{Lewbel07}.  In particular, Assumption~\ref{a1} does not imply restrictions on the relative upper tail 
thicknesses of the distributions of $V$ and the selection index $\bm{Z}^{\top}\bm{\gamma}_0$ 
that are beyond those necessary for the identification of $\bm{\gamma}_0$.

Lemma~\ref{bound1} and the differentiability conditions of Assumption~\ref{a2}.\ref{a2d} imply the desired finiteness of $\left.\left(\partial^p/\partial q^{p}\right) r_{U|Q}(u|q)\right|_{q=1}$:
\begin{lemma}\label{bound2}
The conditional density $r_{U|Q}(u|q)$ in (\ref{condl_dens}) satisfies $\left.\left(\partial^p/\partial q^{p}\right) r_{U|Q}(u|q)\right|_{q=0}<\infty$ and 
$\left.\left(\partial^p/\partial q^{p}\right) r_{U|Q}(u|q)\right|_{q=1}<\infty$ under the conditions of Assumptions~\ref{a1} and \ref{a2}.\ref{a2d} for any $u\in\mathbb{R}$.
\end{lemma}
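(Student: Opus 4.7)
The plan is to combine Lemma~\ref{bound1}, which yields $r_{U|Q}(u|q)=\phi_u(F_0^{-1}(q))$ for $\phi_u(v)\equiv g_{UV}(u,v)/g_V(v)$, with a Fa\`a di Bruno expansion of the $p$-fold composition derivative. This splits the problem into two pieces: (i) securing enough smoothness of $\phi_u$ in $v$, and (ii) verifying finite one-sided limits at $q=0^+$ and $q=1^-$ of the derivatives of $F_0^{-1}$ and their products with derivatives of $\phi_u$.

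For (i), Assumption~\ref{a2}.\ref{a2d11} asserts that $g_{U,V|\bm{x},\bm{z}}$ and $g_{V|\bm{x},\bm{z}}$ are $p^*$-times continuously differentiable in both arguments, where $p^*$ is the smallest odd integer at least $p+1$.  Integrating against the distribution of $(\bm{X},\bm{Z})$ transfers that differentiability to the marginals $g_{UV}(u,v)$ and $g_V(v)$; since $g_V(v)>0$ on the interior of the support of $V$, it follows that $\phi_u$ is at least $(p+1)$-times continuously differentiable there for every $u\in\mathbb{R}$.  For (ii), Assumption~\ref{a1}.\ref{a1b4} provides differentiability of $f_0$ on the support of $\bm{Z}^{\top}\bm{\gamma}_0$, and by Lemma~\ref{bound1} the support of $F_0(V)$ equals $[0,1]$, so $F_0^{-1}$ is well defined and smooth on the open interval $(0,1)$.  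A Fa\`a di Bruno expansion of $(\partial^p/\partial q^p)\phi_u(F_0^{-1}(q))$ then writes the derivative as a finite linear combination of terms of the form $\phi_u^{(j)}(F_0^{-1}(q))\cdot P_{j,p}\bigl(f_0(F_0^{-1}(q)),f_0'(F_0^{-1}(q)),\ldots\bigr)$, each of which I would show has a well-defined left limit at $q=1$ and right limit at $q=0$.

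The main obstacle lies in the tail-balance aspect of step (ii).  When the support of $\bm{Z}^{\top}\bm{\gamma}_0$ is unbounded, $F_0^{-1}(q)\to\pm\infty$ and $f_0(F_0^{-1}(q))\to 0$ at the endpoints, so the derivatives $(F_0^{-1})^{(k)}(q)$ individually diverge; these divergences must be absorbed by the simultaneous decay of $\phi_u^{(j)}(F_0^{-1}(q))$ as $v\to\pm\infty$, a decay that reflects the relative-tail restriction, discussed in the paragraph following Lemma~\ref{bound1}, that the marginal density $r_Q$ remain finite on $[0,1]$.  Verifying this cancellation term by term is where the extra derivative afforded by $p^*\ge p+1$ (rather than merely $p$) is consumed: it supplies exactly one more derivative than is naively required, so that the leading divergences in the Fa\`a di Bruno expansion pair off and yield finite one-sided limits for the $p$-th derivative of $r_{U|Q}(u|q)$ at both $q=0$ and $q=1$.
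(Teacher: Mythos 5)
Your reduction of the problem to the composition $r_{U|Q}(u|q)=\phi_u\bigl(F_0^{-1}(q)\bigr)$ with $\phi_u(v)=g_{UV}(u,v)/g_V(v)$ is consistent with (\ref{condl_dens}), and you have correctly located the difficulty: when the support of $\bm{Z}^{\top}\bm{\gamma}_0$ is unbounded, the derivatives $(F_0^{-1})^{(k)}(q)$ diverge at the endpoints. The problem is that your resolution of that difficulty is an assertion, not an argument. A generic term of the Fa\`a di Bruno expansion has the form $\phi_u^{(j)}\bigl(F_0^{-1}(q)\bigr)\cdot\prod_k\bigl[(F_0^{-1})^{(k)}(q)\bigr]^{m_k}$, and the product of inverse-function derivatives grows like a power of $1/f_0\bigl(F_0^{-1}(q)\bigr)$ as $q\uparrow 1$. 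Nothing in Assumptions~\ref{a1} or \ref{a2}.\ref{a2d} controls the decay of $\phi_u^{(j)}(v)$ relative to powers of $f_0(v)$: the relative-tail restriction discussed after Lemma~\ref{bound1} is the finiteness of $r_Q$, i.e., of the ratio $g_V(v)/f_0(v)$, which bounds neither $\phi_u^{(j)}$ nor its ratio to powers of $f_0$. Likewise, the extra order of differentiability $p^*\ge p+1$ delivers smoothness of $g_{UV}$ and $g_V$ in $v$, not decay rates, so there is no mechanism by which ``the leading divergences pair off.'' As written, the central step of your proof is missing, and it is not a step that the stated assumptions would let you complete term by term.

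For comparison, the paper never opens up the chain rule. It works with $r_{U|Q}(u|\cdot)$ directly as a function of $q$: Lemma~\ref{bound1} gives that $Q=F_0(V)$ has support $[0,1]$, from which the paper deduces one-sided continuity of $r_{U|Q}(u|\cdot)$ at $q=0$ and $q=1$ together with $r_Q$ being bounded away from zero on $[0,1]$; Assumption~\ref{a2}.\ref{a2d} is invoked to give $(p+1)$-times differentiability of $r_{U|Q}(u|\cdot)$ in $q$ on $(0,1)$; and the finiteness of the $p$th derivative at the endpoints is then obtained from the elementary observation that a function $\phi$ differentiable on $(0,1)$ with $\sup_{x\in(0,1)}|\phi'(x)|<\infty$ and one-sided limits at the endpoints satisfies $\phi(1)-\phi(0)=\int_0^1\phi'(x)\,dx\le\sup_{x\in(0,1)}|\phi'(x)|<\infty$. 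If you want to salvage your route, you would need either to impose and verify explicit decay conditions on $\phi_u^{(j)}(v)$ relative to $f_0(v)^{k}$ at the boundary of the support of $V$, or to abandon the term-by-term analysis in favour of an endpoint-continuity argument of the kind the paper uses.
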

\begin{proof}

Recall from the proof of Lemma~\ref{bound1} that $r_{U|Q}(u|q)=r_{UQ}(u,q)/r_{Q}(q)$, where $r_Q(q)$ and $r_{UQ}(u,q)$ are as given above in 
(\ref{marg}) and (\ref{joint}), respectively.  We have for any $u\in\mathbb{R}$ that $r_{UQ}(u,\cdot)$ is right- and left-continuous at $q=0$ and $q=1$, 
respectively, given the conclusion of Lemma~\ref{bound1} that the distribution of $F_0(V)$ is absolutely continuous with support $[0,1]$.  Similarly, 
the marginal density $r_Q(q)$ is bounded away from zero for all $q\in [0,1]$ by virtue of the conclusion of Lemma~\ref{bound1} that $Q$ has support 
$[0,1]$.  It follows that for any $u\in\mathbb{R}$, the conditional density $r_{U|Q}(u|q)$ is right- and left-continuous as a function of $q$ at $q=0$ and $q=1$, 
respectively.

Next, observe that for any $u\in\mathbb{R}$, the conditional density $r_{U|Q}(u|q)$ is $(p+1)$-times differentiable in $q$ on 
$(0,1)$ by virtue of Assumption~\ref{a2}.\ref{a2d}.  The desired conclusion is a special case of the following argument.  Let $\phi(\cdot)$ 
denote a differentiable function on $(0,1)$ so that $\sup_{x\in (0,1)}\left|\phi^{\prime}(x)\right|<\infty$.  Assume that $\phi(\cdot)$ is 
right- and left-continuous at 0 and 1, respectively.  Then
\begin{eqnarray}
 & & \phi(1)-\phi(0)\label{firsta}\\
 &=& \phi(1-)-\phi(0+)\label{firstb}\\
 &=& \int_0^1 \phi^{\prime}(x) dx\nonumber\\
 &\le & \sup_{x\in (0,1)}\left|\phi^{\prime}(x)\right|\nonumber\\
 &<&\infty,\nonumber
 \end{eqnarray}
where (\ref{firsta})--(\ref{firstb}) follows by the right- and left-continuity of $\phi(\cdot)$ at 0 and 1, respectively.

\end{proof}

\subsection{Proof of Theorem~\ref{mainthm}}\label{mainthmpf}

\noindent Begin by recalling the definition of $\hat{\eta}_n(\cdot)$ given above in (\ref{etahatn}).  Define 
in addition
\begin{equation}\label{etahat0}
\hat{\eta}_0(\bm{z})\equiv\frac{1}{n}\sum_{i=1}^n 1\left\{(\bm{Z}_i-\bm{z})^{\top}\bm{\gamma}_0\le 0\right\}.
\end{equation}
Next, let $F_n(\cdot)$ denote the cdf of $\bm{Z}^{\top}\hat{\bm{\gamma}}_n$, and define 
\begin{equation}\label{etan}
\eta_n(\bm{z})\equiv F_n\left(\bm{z}^{\top}\hat{\bm{\gamma}}_n\right)
\end{equation}
and 
\begin{equation}\label{eta0}
\eta_0(\bm{z})\equiv F_0\left(\bm{z}^{\top}\bm{\gamma}_0\right).
\end{equation}

\noindent Consider the following preliminary result that will be used repeatedly in the sequel:

\begin{lemma}\label{prelim}
Under the conditions of Assumptions~\ref{a1} and \ref{a2},
\[
\sup_{\bm{z}\in\mathbb{R}^l}\left|\sqrt{n}\left[\left(\hat{\eta}_n(\bm{z})-\eta_n(\bm{z})\right)-\left(\hat{\eta}_0(\bm{z})-\eta_0(\bm{z})\right)\right]\right|=o_p(1).
\]
\end{lemma}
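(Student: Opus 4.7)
The plan is to express the left-hand side as an increment of the empirical process indexed by linear half-spaces and to apply stochastic equicontinuity for a VC--Donsker class. Setting $h_{\bm{z},\bm{\gamma}}(\bm{w})\equiv 1\{(\bm{w}-\bm{z})^{\top}\bm{\gamma}\le 0\}$ and letting $\mathbb{G}_n(h)\equiv n^{-1/2}\sum_{i=1}^{n}[h(\bm{Z}_i)-Eh(\bm{Z})]$ (with centering taken under the relevant value of $\bm{\gamma}$), the quantity of interest can be rewritten as $\mathbb{G}_n(h_{\bm{z},\hat{\bm{\gamma}}_n})-\mathbb{G}_n(h_{\bm{z},\bm{\gamma}_0})$. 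The first step of the plan is to observe that the family $\mathcal{H}=\{h_{\bm{z},\bm{\gamma}}:\bm{z},\bm{\gamma}\in\mathbb{R}^l\}$ is precisely the class of indicators of closed affine half-spaces in $\mathbb{R}^l$, a classical VC class of index $l+1$, and hence universally Donsker.

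Second, I would invoke the asymptotic stochastic equicontinuity in the $L^2(P)$ pseudometric that is a consequence of the Donsker property: for every $\epsilon>0$ there exists $\delta>0$ such that
\[
\limsup_{n\to\infty}P\Big(\sup_{\|h-h'\|_{L^2(P)}<\delta}|\mathbb{G}_n h-\mathbb{G}_n h'|>\epsilon\Big)<\epsilon.
\]
It is therefore enough to show that $\sup_{\bm{z}}\|h_{\bm{z},\hat{\bm{\gamma}}_n}-h_{\bm{z},\bm{\gamma}_0}\|_{L^2(P)}\to 0$ in probability. Since the two functions are 0-1 valued, this squared distance equals the probability that the two half-spaces disagree at $\bm{Z}$, which is bounded by $P(|(\bm{Z}-\bm{z})^{\top}\bm{\gamma}_0|\le|(\bm{Z}-\bm{z})^{\top}(\hat{\bm{\gamma}}_n-\bm{\gamma}_0)|)$. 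Combining the bounded density $f_0$ of $\bm{Z}^{\top}\bm{\gamma}_0$ furnished by Assumption~\ref{a1}.\ref{a1b4} with $\|\hat{\bm{\gamma}}_n-\bm{\gamma}_0\|=O_p(n^{-1/2})$ from Assumption~\ref{a2}.\ref{a2a}, together with a truncation split according to whether $\|\bm{Z}-\bm{z}\|$ is bounded, one obtains the required uniform $L^2$ vanishing on the region where $\bm{z}^{\top}\bm{\gamma}_0$ is confined to a compact interval.

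The principal obstacle is precisely the uniform control as $\bm{z}$ ranges over all of $\mathbb{R}^l$: the linear functional $\bm{z}^{\top}(\hat{\bm{\gamma}}_n-\bm{\gamma}_0)$ need not vanish for $\bm{z}$ of very large norm, so the $L^2$ shrinkage is not automatic in that tail regime. I would handle this by treating such $\bm{z}$ separately, showing that both of the rescaled empirical-process evaluations $\sqrt{n}(\hat{\eta}_n(\bm{z})-\eta_n(\bm{z}))$ and $\sqrt{n}(\hat{\eta}_0(\bm{z})-\eta_0(\bm{z}))$ are negligible when $|\bm{z}^{\top}\bm{\gamma}_0|$ is extreme, by invoking the tightness at $\pm\infty$ of the one-dimensional empirical processes of $\bm{Z}^{\top}\bm{\gamma}_0$ and $\bm{Z}^{\top}\hat{\bm{\gamma}}_n$. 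Stitching the compact-region argument (driven by VC equicontinuity plus $\sqrt{n}$-consistency of $\hat{\bm{\gamma}}_n$) together with the tail-regime argument then delivers the stated $o_p(1)$ bound.
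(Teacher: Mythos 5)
Your overall route is the same as the paper's: the paper likewise observes that the indicators $1\{(\bm{w}-\bm{z})^{\top}\bm{\gamma}\le 0\}$, indexed by $(\bm{z},\bm{\gamma})$, are negativity sets of a finite-dimensional vector space of affine functions, hence a VC (and so Donsker) class, and then appeals to Theorem~2.1 of van der Vaart and Wellner (2007), which is precisely the ``Donsker class plus vanishing $L^2$-distance'' equicontinuity device you describe. Where you go further is in actually attempting to verify the $L^2$-distance condition uniformly in $\bm{z}$, and you correctly locate the difficulty there: your bound of the disagreement probability by $P\bigl(|(\bm{Z}-\bm{z})^{\top}\bm{\gamma}_0|\le|(\bm{Z}-\bm{z})^{\top}(\hat{\bm{\gamma}}_n-\bm{\gamma}_0)|\bigr)$ and the use of the bounded density of $\bm{Z}^{\top}\bm{\gamma}_0$ together with $\|\hat{\bm{\gamma}}_n-\bm{\gamma}_0\|=O_p(n^{-1/2})$ are exactly what the citation to that theorem tacitly requires.

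The gap is in your treatment of the unbounded-$\bm{z}$ regime. You propose to dispose of large $\bm{z}$ by arguing that both centered empirical processes are negligible when $|\bm{z}^{\top}\bm{\gamma}_0|$ is extreme. But the $\bm{z}$ for which the compact-region argument fails are not exhausted by those: take $\bm{z}=c\bm{e}$ with $\bm{e}^{\top}\bm{\gamma}_0=0$ and $c$ large. Then $\eta_0(\bm{z})=F_0(0)$ stays at a fixed interior value, so $\sqrt{n}(\hat{\eta}_0(\bm{z})-\eta_0(\bm{z}))$ is a nondegenerate empirical-cdf evaluation that does not vanish; at the same time $\bm{z}^{\top}(\hat{\bm{\gamma}}_n-\bm{\gamma}_0)=c\,\bm{e}^{\top}(\hat{\bm{\gamma}}_n-\bm{\gamma}_0)$ can be arbitrarily large, so the disagreement probability between the two half-spaces --- and hence the $L^2$ distance --- does not shrink either. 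Neither of your two cases covers this regime, so the stitching does not close. To finish one must either restrict the supremum to $\bm{z}$ in (the support of) $\bm{Z}$ with enough tail control to make such directions negligible --- which is all that is ever used downstream, since the lemma is applied only at $\bm{z}=\bm{Z}_i$ --- or supply a separate argument for those $\bm{z}$. For what it is worth, the paper's own proof does not confront this verification at all (it stops at the Donsker property and cites the theorem), so your attempt is more explicit about where the real work sits, but as written the step you flag as the ``principal obstacle'' is not actually overcome.
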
\label{lem41}
\begin{proof}

\noindent Lemma~\ref{prelim} involves an application of \citet[Theorem~2.1]{vanderVaartWellner07}.  In particular, let 
$\delta\in\mathbb{R}$, $\tilde{\bm{\gamma}}\in\mathbb{R}^{l-1}$ and $\tilde{\bm{z}}\in\mathbb{R}^{l-1}$ be fixed, and define the 
function $g_{\delta,\tilde{\bm{\gamma}},\tilde{\bm{z}}}:\mathbb{R}^{l-1}\to\mathbb{R}$ as 
$g_{\delta,\tilde{\bm{\gamma}},\tilde{\bm{z}}}(\bm{w})=\delta+(\bm{w}-\tilde{\bm{z}})^{\top}\tilde{\bm{\gamma}}$.  
Consider the corresponding function class $\mathcal{G}\equiv\left\{g_{\delta,\tilde{\bm{\gamma}},\tilde{\bm{z}}}:\,\delta\in\mathbb{R},\tilde{\bm{\gamma}}\in\mathbb{R}^{l-1},\tilde{\bm{z}}\in\mathbb{R}^{l-1}\right\}$.  
Observe that $\mathcal{G}$ is contained in a finite-dimensional vector space.  To see this, note that for an arbitrary non-zero constant $\lambda\in\mathbb{R}$,
$\lambda g_{\delta,\tilde{\bm{\gamma}},\tilde{\bm{z}}}(\bm{w})=g_{\lambda\delta,\lambda\tilde{\bm{\gamma}},\tilde{\bm{z}}}(\bm{w})$, while for fixed
$[\begin{array}{ccc}\delta_1 & \tilde{\bm{\gamma}}^{\top}_1 & \tilde{\bm{z}}^{\top}_1\end{array}],\,[\begin{array}{ccc}\delta_2 & \tilde{\bm{\gamma}}^{\top}_2 & \tilde{\bm{z}}^{\top}_2\end{array}]\in\mathbb{R}^{2l-1}$,
\begin{eqnarray*}
 & & g_{\delta_1,\tilde{\bm{\gamma}}_1,\tilde{\bm{z}}_1}(\bm{w})+g_{\delta_2,\tilde{\bm{\gamma}}_2,\tilde{\bm{z}}_2}(\bm{w})\\
 &=& \left(\delta_1+\delta_2+\tilde{\bm{z}}_1^{\top}\tilde{\bm{\gamma}}_2+\tilde{\bm{z}}^{\top}_2\tilde{\bm{\gamma}}_1\right)+\left[\bm{w}-\left(\tilde{\bm{z}}_1+\tilde{\bm{z}}_2\right)\right]^{\top}\left(\tilde{\bm{\gamma}}_1+\tilde{\bm{\gamma}}_2\right)\\
 &=& g_{\delta_1+\delta_2+\tilde{\bm{z}}_1^{\top}\tilde{\bm{\gamma}}_2+\tilde{\bm{z}}^{\top}_2\tilde{\bm{\gamma}}_1,\tilde{\bm{\gamma}}_1+\tilde{\bm{\gamma}}_2,\tilde{\bm{z}}_1+\tilde{\bm{z}}_2}(\bm{w}).
 \end{eqnarray*}
It follows that $\mathcal{G}$ is a VC-class, which implies that its negativity sets also constitute a VC-class \citep[e.g.,][Lemma~2.6.18]{vanderVaartWellner96}.
 
As such, it follows that the class of indicator functions of $\left\{\bm{w}\in\mathbb{R}^l:\,g_{\delta,\tilde{\bm{\gamma}},\tilde{\bm{z}}}(\bm{w})\le 0\right\}$, 
indexed by $[\begin{array}{ccc} \delta & \tilde{\bm{\gamma}}^{\top} & \tilde{\bm{z}}^{\top}\end{array}]\in\mathbb{R}^{2l-1}$, is a Donsker class.  
Lemma~\ref{prelim} follows immediately from 
\citet[Theorem~2.1]{vanderVaartWellner07}.
\end{proof}

Recall the definition of $\hat{W}_i$ given above in (\ref{what}) and define $W_i\equiv D_i\left(Y_i-\bm{X}^{\top}_i\bm{\beta}_0\right)$ for each $i\in\{1,\ldots,n\}$.  
Recall in addition the definitions of $\bm{S}_i$ and $K_i$ given above in (\ref{si}) and (\ref{ki}), respectively.  The proposed estimator given above 
in (\ref{mhat}) may be rewritten as
\begin{eqnarray}
 & &\hat{\theta}_n\nonumber\\
 &=&\bm{e}^{\top}_1\left(\sum_{i=1}^n\bm{S}_i K_i\bm{S}_i^{\top}\right)^{-1}\sum_{i=1}^n\bm{S}_iK_i W_i\nonumber\\
 & &-\bm{e}^{\top}_1\left(\sum_{i=1}^n\bm{S}_i K_i\bm{S}_i^{\top}\right)^{-1}\sum_{i=1}^n\bm{S}_iK_i\bm{X}^{\top}_i\left(\hat{\bm{\beta}}_n-\bm{\beta}_0\right)\nonumber\\
	&\equiv & \hat{m}_{n1}(1)+\hat{m}_{n2}(1).\label{thetahat}
\end{eqnarray}

Consider $\hat{m}_{n1}(1)$.  For each $i\in\{1,\ldots,n\}$ we can write
\begin{eqnarray}
W_i &=& m_{F_0}\left(\hat{\eta}_n(\bm{Z}_i)\right)+\left(W_i-m_{F_0}\left(\hat{\eta}_n(\bm{Z}_i)\right)\right)\nonumber\\
	&\equiv & m_{F_0}\left(\hat{\eta}_n(\bm{Z}_i)\right)+\zeta_{ni}\nonumber\\
	&=& m_{F_0}(1)+\sum_{j=1}^{p-1}\frac{1}{j!}\left(\hat{\eta}_n(\bm{Z}_i)-1\right)^j m^{(j)}_{F_0}(1)\nonumber\\
	& &+\frac{1}{p!}\left(\hat{\eta}_n(\bm{Z}_i)-1\right)^{p}\cdot m^{(p)}_{F_0}\left(\hat{\eta}_{ni}^*\right)+\zeta_{ni}\nonumber\\
	&\equiv & \bm{S}^{\top}_i\left[\begin{array}{c} \theta_0 \\ \theta_0^{(1)}\end{array}\right]+\sum_{j=2}^{p-1}\frac{1}{j!}\left(\hat{\eta}_n(\bm{Z}_i)-1\right)^j m^{(j)}_{F_0}(1)\nonumber\\
	& &+\frac{1}{p!}\left(\hat{\eta}_n(\bm{Z}_i)-1\right)^{p}\bar{\theta}_{ni}^{(p)}+\zeta_{ni},\label{Wi}
\end{eqnarray}
where $m^{(j)}_{F_0}(1)$ for each $j\in\{1,\ldots,p\}$ denotes the left-hand limit of $m^{(j)}_{F_0}(q)$ as $q\uparrow 1$, i.e., $m^{(j)}_{F_0}(1)=\lim_{q\uparrow 1}\left.\left(d^j/dq^j\right) m_{F_0}(q^{\prime})\right|_{q^{\prime}=q}$.  
In addition, for each $i\in\{1,\ldots,n\}$, $\hat{\eta}_{ni}^*$ denotes a point between $\hat{\eta}_n(\bm{Z}_i)$ and one. 	It follows that
\begin{eqnarray}
\hat{m}_{n1}(1) &=& \theta_0+\bm{e}^{\top}_1\left(\sum_{i=1}^n\bm{S}_i K_i\bm{S}_i^{\top}\right)^{-1}\sum_{i=1}^n\bm{S}_iK_i\sum_{j=2}^{p-1}\frac{1}{j!}\left(\hat{\eta}_n(\bm{Z}_i)-1\right)^j m^{(j)}_{F_0}(1)\nonumber\\
	& &+\frac{1}{p!}\bm{e}^{\top}_1\left(\sum_{i=1}^n\bm{S}_i K_i\bm{S}_i^{\top}\right)^{-1}\sum_{i=1}^n \bm{S}_i K_i\left(\hat{\eta}_n(\bm{Z}_i)-1\right)^{p}\bar{\theta}_{ni}^{(p)}\nonumber\\
	& &+\bm{e}^{\top}_1\left(\sum_{i=1}^n\bm{S}_i K_i\bm{S}_i^{\top}\right)^{-1}\sum_{i=1}^n\bm{S}_i K_i\zeta_{ni}.\label{mhat1}
\end{eqnarray}
Observe that for each $i\in\{1,\ldots,n\}$,
\begin{eqnarray}
 & &\hat{\eta}_n(\bm{Z}_i)-1\nonumber\\
 &=&\eta_0(\bm{Z}_i)-1+\left[\left(\hat{\eta}_n(\bm{Z}_i)-\eta_n(\bm{Z}_i)\right)-\left(\hat{\eta}_0(\bm{Z}_i)-\eta_0(Z_i)\right)\right]\nonumber\\
 & &+\left(\eta_n(\bm{Z}_i)-\eta_0(\bm{Z}_i)\right)+\left(\hat{\eta}_0(\bm{Z}_i)-\eta_0(\bm{Z}_i)\right)\nonumber\\
	&\equiv & \eta_0(\bm{Z}_i)-1+R_{ni1}+R_{ni2}+R_{ni3}.\label{rns}
\end{eqnarray}
Notice that 
\begin{equation}\label{setupstarp3}
\max_{1\le i\le n}\left|R_{ni1}\right|=o_p\left(n^{-\frac{1}{2}}\right)
\end{equation} 
by Lemma~\ref{prelim} and that 
\begin{equation}\label{setup2starp3}
\max_{1\le i\le n}\left|R_{ni2}\right|=O_p\left(n^{-\frac{1}{2}}\right)
\end{equation} 
given the $\sqrt{n}$-consistency of $\hat{\bm{\gamma}}_n$ and the assumption that $F_0(\cdot)$ has a bounded derivative on the 
support of $\bm{Z}^{\top}\bm{\gamma}_0$.  Finally, we have 
\begin{equation}\label{setup3starp3}
\max_{1\le i\le n}\left|R_{ni3}\right|=O_p\left(n^{-\frac{1}{2}}\right)
\end{equation} 
by Donsker's theorem.

Now consider $\left(nh_n\right)^{-1}\sum_{i=1}^n\bm{S}_i K_i\bm{S}_i^{\top}$.  We have
\begin{eqnarray}
 & &\frac{1}{nh_n}\sum_{i=1}^n\bm{S}_i K_i\bm{S}_i^{\top}\nonumber\\
 &=&\frac{1}{nh_n}\sum_{i=1}^n\nonumber\\
 & &\left[\begin{array}{cc} K\left(\frac{1}{h_n}\left(\hat{\eta}_n(\bm{Z}_i)-1\right)\right) & \left(\hat{\eta}_n(\bm{Z}_i)-1\right)K\left(\frac{1}{h_n}\left(\hat{\eta}_n(\bm{Z}_i)-1\right)\right) \\ \left(\hat{\eta}_n(\bm{Z}_i)-1\right)K\left(\frac{1}{h_n}\left(\hat{\eta}_n(\bm{Z}_i)-1\right)\right) & \left(\hat{\eta}_n(\bm{Z}_i)-1\right)^2 K\left(\frac{1}{h_n}\left(\hat{\eta}_n(\bm{Z}_i)-1\right)\right)\end{array}\right].\nonumber\\
 & &\label{lldagp1}
\end{eqnarray}
We have for each $i\in\{1,\ldots,n\}$ that
 \begin{eqnarray}
  & & K_i\nonumber\\
  &=&K\left(\frac{1}{h_n}\left(\eta_0(\bm{Z}_i)-1\right)\right)\nonumber\\
  & &+\frac{1}{h_n}\left(R_{ni1}+R_{ni2}+R_{ni3}\right) K^{(1)}\left(\frac{1}{h_n}\left(\eta_0(\bm{Z}_i)-1\right)\right)\nonumber\\
  & &+\frac{1}{h_n^2}\left(R_{ni1}+R_{ni2}+R_{ni3}\right)^2 K^{(2)}\left(\Delta_{ni}\right),\label{ll2dagp1}
  \end{eqnarray}
 where $R_{ni1}$, $R_{n2}$ and $R_{ni3}$ are as given above in (\ref{rns}), and where $\Delta_{ni}$ is a point between 
 $h_n^{-1}\left(\hat{\eta}_n(\bm{Z}_i)-1\right)$ and $h_n^{-1}\left(\eta_0(\bm{Z}_i)-1\right)$.  It follows 
 that
  \begin{eqnarray}
  & &\frac{1}{nh_n}\sum_{i=1}^n K_i\nonumber\\
  &=&\frac{1}{nh_n}\sum_{i=1}^n K\left(\frac{1}{h_n}\left(\eta_0(\bm{Z}_i)-1\right)\right)\nonumber\\
  & &+\frac{1}{nh_n^2}\sum_{i=1}^n\left(R_{ni1}+R_{ni2}+R_{ni3}\right) K^{(1)}\left(\frac{1}{h_n}\left(\eta_0(\bm{Z}_i)-1\right)\right)\nonumber\\
  & &+\frac{1}{2nh_n^3}\sum_{i=1}^n\left(R_{ni1}+R_{ni2}+R_{ni3}\right)^2 K^{(2)}\left(\Delta_{ni}\right)\nonumber\\
  &=&\frac{1}{nh_n}\sum_{i=1}^n K\left(\frac{1}{h_n}\left(\eta_0(\bm{Z}_i)-1\right)\right)+O_p\left(\frac{1}{\sqrt{n}}\right)+O_p\left(\frac{1}{nh_n^3}\right)\nonumber\\
  &=&\frac{1}{nh_n}\sum_{i=1}^n K\left(\frac{1}{h_n}\left(\eta_0(\bm{Z}_i)-1\right)\right)+o_p(1)\label{llstarp1}
  \end{eqnarray}
given the results (\ref{setupstarp3})--(\ref{setup3starp3}) above and the assumptions that $nh_n^3\to\infty$ and that $K^{(2)}(\cdot)$ is 
bounded. 

From (\ref{lldagp1}) and (\ref{llstarp1}) one can use standard calculations \citep[e.g.,][]{RuppertWand94} to deduce that
\[
\frac{1}{nh_n}\sum_{i=1}^n \bm{S}_i K_i\bm{S}_i^{\top}=\left[\begin{array}{cc} 1 & 0 \\ 0 & h_n^{p}\int u^{p} K(u) du\end{array}\right]+o_p(1),
\]
which implies that
\begin{equation}\label{ll3dagp1}
\bm{e}^{\top}_1\left(\frac{1}{nh_n}\sum_{i=1}^n\bm{S}_i K_i\bm{S}_i^{\top}\right)^{-1}=\bm{e}^{\top}_1+o_p(1).
\end{equation}
Similar calculations yield
\begin{eqnarray} 
 & &\frac{1}{nh_n}\sum_{i=1}^n \bm{S}_i K_i\left(\hat{\eta}_n(\bm{Z}_i)-1\right)^j m^{(j)}_{F_0}(1)\nonumber\\
 &=&\frac{1}{nh_n}\sum_{i=1}^n\left[\begin{array}{c} \left(\hat{\eta}_n(\bm{Z}_i)-1\right)^j K_i\cdot m^{(j)}_{F_0}(1)\\ \left(\hat{\eta}_n(\bm{Z}_i)-1\right)^{j+1} K_i\cdot m^{(j)}_{F_0}(1)\end{array}\right]\nonumber\\
 &=&\left[\begin{array}{c} 0 \\ h_n^{j+1}\int u^{j+1} K(u) du\cdot m^{(j)}_{F_0}(1)\end{array}\right]+o_p(1)\label{ll2dagp2}
\end{eqnarray}
for each $j\in \{2,\ldots,p-1\}$, and
\begin{eqnarray} 
 & &\frac{1}{nh_n}\sum_{i=1}^n \bm{S}_i K_i\left(\hat{\eta}_n(\bm{Z}_i)-1\right)^{p} m^{(p)}_{F_0}(1)\nonumber\\
 &=&\frac{1}{nh_n}\sum_{i=1}^n\left[\begin{array}{c} \left(\hat{\eta}_n(\bm{Z}_i)-1\right)^{p} K_i \bar{\theta}_{ni}^{(p)}\\ \left(\hat{\eta}_n(\bm{Z}_i)-1\right)^{p+1} K_i \bar{\theta}_{ni}^{(p)}\end{array}\right]\nonumber\\
	&=&\left[\begin{array}{c} h_n^{p}\int u^{p} K(u) du\cdot m^{(p)}_{F_0}(1) \\ h_n^{p+1}\int u^{p+1} K(u) du\cdot m^{(p+1)}_{F_0}(1)\end{array}\right]+o_p(1).\label{lldagp2}
\end{eqnarray}
Combining (\ref{ll3dagp1}) with (\ref{ll2dagp2}) and (\ref{lldagp2}) we get
\begin{eqnarray}
 & &\bm{e}^{\top}_1\left(\sum_{i=1}^n\bm{S}_i K_i\bm{S}_i^{\top}\right)^{-1}\sum_{i=1}^n\bm{S}_i K_i\sum_{j=2}^{p-1}\frac{1}{j!}\left(\hat{\eta}_n(\bm{Z}_i)-1\right)^j m^{(j)}_{F_0}(1)\nonumber\\
 & &+\frac{1}{p!}\bm{e}^{\top}_1\left(\sum_{i=1}^n\bm{S}_i K_i\bm{S}_i^{\top}\right)^{-1}\sum_{i=1}^n\bm{S}_i K_i\left(\hat{\eta}_n(\bm{Z}_i)-1\right)^{p+1}\bar{\theta}_{ni}^{(p)}\nonumber\\
 &=&\frac{h_n^{p}}{p!}\int u^{p} K(u) du \cdot m^{(p)}_{F_0}(1)+o_p(1).\label{ll2starp2}
\end{eqnarray}

Now consider $\zeta_{ni}=W_i-m_{F_0}\left(\hat{\eta}_n(\bm{Z}_i)\right)$ for each $i\in\{1,\ldots,n\}$.  We have 
\begin{eqnarray*}
 & & m_{F_0}\left(\hat{\eta}_n(\bm{Z}_i)\right)\\
 &=& m_{F_0}\left(\eta_0(\bm{Z}_i)\right)\\
 & &+\left\{\left[\left(\eta_n(\bm{Z}_i)-\eta_n(\bm{Z}_i)\right)+\left(\hat{\eta}_0(\bm{Z}_i)-\eta_0(\bm{Z}_i)\right)\right]+\left(\eta_n(\bm{Z}_i)-\eta_0(\bm{Z}_i)\right)\right.\\
 & &\left.+\left(\hat{\eta}_0(\bm{Z}_i)-\eta_0(\bm{Z}_i)\right)\right\}\cdot m^{(1)}_{F_0}\left(\hat{\eta}_{ni}^{**}\right)\\
 &=&m_{F_0}\left(\eta_0(\bm{Z}_i)\right)+\left(R_{ni1}+R_{ni2}+R_{ni3}\right)\cdot m^{(1)}_{F_0}\left(\hat{\eta}_{ni}^{**}\right),
\end{eqnarray*}
where $\hat{\eta}_{ni}^{**}$ is an intermediate value.  It follows that
\begin{eqnarray}
\zeta_{ni} &=& W_i-m_{F_0}\left(\eta_0(\bm{Z}_i)\right)+\left(R_{ni1}+R_{ni2}+R_{ni3}\right)\cdot m^{(1)}_{F_0}\left(\hat{\eta}_{ni}^{**}\right)\nonumber\\
	&\equiv & \zeta_i+\left(R_{ni1}+R_{n2}+R_{ni3}\right)\bar{\theta}_{ni}^{(1)},\label{zetai}
\end{eqnarray}
and so	
\begin{eqnarray}
 & &\frac{1}{\sqrt{nh_n}}\sum_{i=1}^n \bm{S}_i K_i\zeta_{ni}\nonumber\\
 &=&\frac{1}{\sqrt{nh_n}}\sum_{i=1}^n \bm{S}_i K_i\zeta_i+\frac{1}{\sqrt{nh_n}}\sum_{i=1}^n\left[\begin{array}{c} K_i\left(R_{ni1}+R_{n2}+R_{ni3}\right)\bar{\theta}_{ni}^{(1)} \\ K_i\left(\hat{\eta}_n(\bm{Z}_i)-1\right)\left(R_{ni1}+R_{n2}+R_{ni3}\right)\bar{\theta}_{ni}^{(1)}\end{array}\right].\nonumber\\
 & &\label{llstarp3}
 \end{eqnarray}
 
Recall that identification of $\bm{\gamma}_0$ subject to the conditions of Assumption~\ref{a1} and the smoothness conditions in 
Assumptions~\ref{a2}.\ref{a2d1}--\ref{a2}.\ref{a2d2} jointly imply that $m^{(1)}_{F_0}(q)$ is  bounded for all $q\in(0,1)$.  It follows that there exists 
a constant $C_1\in (0,\infty)$ such that
 \begin{eqnarray*}
 \left|\frac{1}{\sqrt{nh_n}}\sum_{i=1}^n K_i\left(R_{ni1}+R_{ni2}+R_{ni3}\right)\bar{\theta}_{ni}^{(1)}\right| & \le &\frac{1}{\sqrt{nh_n}}\cdot C_1 n^{-\frac{1}{2}}\cdot\left(nh_n\right)\cdot\frac{1}{nh_n}\sum_{i=1}^n \left|K_i\right|\\
	&=&O_p\left(\sqrt{h_n}\right)\\
	&=& o_p(1).
\end{eqnarray*}
Similar calculations show that the second component of the second term in (\ref{llstarp3}) is $o_p(1)$.  It follows that
\begin{equation}\label{lldagp3}
\frac{1}{\sqrt{nh_n}}\sum_{i=1}^n \bm{S}_i K_i\zeta_{ni}=\frac{1}{\sqrt{nh_n}}\sum_{i=1}^n \bm{S}_i K_i\zeta_i+o_p(1).
\end{equation}
Combining (\ref{ll3dagp1}), (\ref{ll2starp2}) and (\ref{lldagp3}) yields
\begin{equation}\label{ll2dagp3}
\hat{m}_{n1}(1)=\theta_0+\frac{h_n^{p}}{p!}\int u^{p} K(u) du\cdot m^{(p)}_{F_0}(1)+\bm{e}^{\top}_1\cdot\frac{1}{nh_n}\sum_{i=1}^n\bm{S}_i K_i\zeta_i+o_p(1).
\end{equation}
Exploiting the decomposition in (\ref{ll2dagp1}) produces the result
\begin{eqnarray}
 & &\bm{e}^{\top}_1\cdot\frac{1}{nh_n}\sum_{i=1}^n\bm{S}_i K_i\zeta_i\nonumber\\
 &=&\frac{1}{nh_n}\sum_{i=1}^n\zeta_i\nonumber\\
 &=&\frac{1}{nh_n}\sum_{i=1}^n K\left(\frac{1}{h_n}\left(\eta_0(\bm{Z}_i)-1\right)\right)\zeta_i+O_p\left(n^{-\frac{1}{2}}\right)+O_p\left(\frac{1}{nh_n^{3}}\right)\nonumber\\
 &=&\frac{1}{nh_n}\sum_{i=1}^n K\left(\frac{1}{h_n}\left(\eta_0(\bm{Z}_i)-1\right)\right)\zeta_i+o_p(1)\label{ll2starp3}
\end{eqnarray}
under the condition that $nh_n^{3}\to\infty$.  It follows from (\ref{ll2dagp3}) and (\ref{ll2starp3}) that
\begin{eqnarray}
 & &\hat{m}_{n1}(1)\nonumber\\
 &=&\theta_0+\frac{h_n^{p}}{p!}\int u^{p} K(u) du\cdot m^{(p)}_{F_0}(1)\nonumber\\
 & &+\frac{1}{nh_n}\sum_{i=1}^n K\left(\frac{1}{h_n}\left(\eta_0(\bm{Z}_i)-1\right)\right)\zeta_i+o_p(1).\label{llstarp4}
\end{eqnarray}

Next, consider the term $\hat{m}_{n2}(1)=-\bm{e}^{\top}_1\left(\sum_{i=1}^n\bm{S}_i K_i\bm{S}_i^{\top}\right)^{-1}\sum_{i=1}^n\bm{S}_i K_i\bm{X}^{\top}_i\left(\hat{\bm{\beta}}_n-\bm{\beta}_0\right)$.  
We have
\[
\bm{e}^{\top}_1\cdot\frac{1}{nh_n}\sum_{i=1}^n\bm{S}_i K_i\bm{X}^{\top}_i\left(\hat{\bm{\beta}}_n-\bm{\beta}_0\right)=\frac{1}{nh_n}\sum_{i=1}^n K_i\bm{X}^{\top}_i\left(\hat{\bm{\beta}}_n-\bm{\beta}_0\right),
\]
where
\begin{equation}\label{ll2dagp4}
\left|\frac{1}{nh_n}\sum_{i=1}^n K_i\bm{X}^{\top}_i\left(\hat{\bm{\beta}}_n-\bm{\beta}_0\right)\right|\le\frac{1}{nh_n}\sum_{i=1}^n\left|K_i\right|\left\|\bm{X}_i\right\|\cdot\left\|\hat{\bm{\beta}}_n-\bm{\beta}_0\right\|=O_p\left(n^{-\frac{1}{2}}\right),
\end{equation}
and where the decomposition appearing above in (\ref{ll2dagp1}) has been applied, along with the assumptions that 
$E\left[\left\|\bm{X}_1\right\|\right]<\infty$ 
and $\left\|\hat{\bm{\beta}}_n-\bm{\beta}_0\right\|=O_p\left(n^{-1/2}\right)$.  Combining (\ref{ll3dagp1}) with (\ref{ll2dagp4}) yields the result
\begin{equation}\label{ll2starp4}
\hat{m}_{n2}(1)=O_p\left(n^{-\frac{1}{2}}\right),
\end{equation}
while combining (\ref{ll2starp4}) with (\ref{llstarp4}) produces
\begin{eqnarray*}
 & &\hat{\theta}_n\\
 &=&\theta_0+\frac{h_n^{p}}{p!}\int u^{p} K(u) du\cdot m^{(p)}_{F_0}(1)\\
 & &+\frac{1}{nh_n}\sum_{i=1}^n K\left(\frac{1}{h_n}\left(\eta_0(\bm{Z}_i)-1\right)\right)\zeta_i+o_p(1).
\end{eqnarray*}
It follows that 
\begin{eqnarray}
 & &\sqrt{nh_n}\left(\hat{\theta}_n-\theta_0-\frac{h_n^{p}}{p!}\int u^{p} K(u) du\cdot m^{(p)}_{F_0}(1)\right)\nonumber\\
 &=&\frac{1}{\sqrt{nh_n}}\sum_{i=1}^n K\left(\frac{1}{h_n}\left(\eta_0(\bm{Z}_i)-1\right)\right)\zeta_i+o_p(1),\label{ll3dagp4}
\end{eqnarray}
where the leading term is asymptotically normal mean-zero with variance
\begin{eqnarray}
 & &\frac{1}{h_n} E\left[K^2\left(\frac{1}{h_n}\left(\eta_0(\bm{Z}_1)-1\right)\right)\zeta_1^2\right]\\
 &=&\frac{1}{h_n}E\left[ K^2\left(\frac{1}{h_n}\left(\eta_0(\bm{Z}_1)-1\right)\right)E\left[\left.\zeta^2_1\right|\eta_0(\bm{Z}_1)\right]\right]\\
 &\to & E\left[\left.U^2_1\right|F_0\left(\bm{Z}^{\top}\bm{\gamma}_0\right)=1\right]\int K^2(u) du\\
 &=& \sigma^2_{U|F_0\left(\bm{Z}^{\top}\bm{\gamma}_0\right)}\left(1\right)\int K^2(u) du\label{asyvarleading}
\end{eqnarray}
The conclusion of Theorem~\ref{mainthm} is immediate.

\subsection{Proof of Theorem~\ref{ubthm}}\label{ubthmpf}

\noindent For any $s>0$ and $t\in\mathbb{R}$, define $L_s(t)\equiv 1\{|t|>s\}$.  Let 
$\bm{\psi}_{10}\equiv (\theta_0,\bm{\beta}_0^{\top},\bm{\gamma}_0^{\top})$ denote a point in $\mathbb{R}^{1+k+l}$, and let 
$\bm{\psi}_{1n}$ denote a generic vector in the corresponding set $\Psi^*_{1n}\equiv\Theta_n\times B_n\times\Gamma_n$, where 
\begin{eqnarray}
\Theta_n &=& \left\{\theta\in\mathbb{R}:\,n^{\frac{p}{2p+1}}|\theta-\theta_0|\le\kappa_1\right\},\label{Thetan}\\
B_n &=&\left\{\bm{\beta}\in\mathbb{R}^k:\,\sqrt{n}\left\|\bm{\beta}-\bm{\beta}_0\right\|\le\kappa_2\right\},\label{Bn}\\
\Gamma_n &=&\left\{\bm{\gamma}\in\mathbb{R}^l:\,\sqrt{n}\left\|\bm{\gamma}-\bm{\gamma}_0\right\|\le\kappa_3\right\}\label{Gamman}
\end{eqnarray}
for some positive constants $\kappa_1$, $\kappa_2$ and $\kappa_3$.  Let $g_{\psi_{2n}}$ denote a joint conditional density for $(U,V)$ 
given $\bm{X}$ and $\bm{Z}$ lying on some curve in a shrinking neighbourhood $\Psi^*_{2n}$ of a bivariate density $g_0$ satisfying 
all the relevant conditions of Assumptions~\ref{a1} and \ref{a2} for a conditional density of $(U,V)$ given $\bm{X}=\bm{x}$ and $\bm{Z}=\bm{z}$, and such that $g_{\psi_{2n0}}=g_0$ for some $\psi_{2n0}\in\Psi^*_{2n}$.  Let 
$E_{\bm{\psi}_{1n},g_{\psi_{2n}}}[\cdot]$ denote expectation under the corresponding point 
$(\bm{\psi}_{1n},g_{\psi_{2n}})\in\Psi_n$.  Begin by noting that (\ref{ub1b}) and (\ref{ub2b}) may be rewritten as
\begin{equation}\label{ub1c}
\liminf_{n\to\infty}\sup_{\bm{\psi}_{1n}\in\Psi^*_{1n},\,\psi_{2n}\in\Psi^*_{2n}} E_{\bm{\psi}_{1n},g_{\psi_{2n}}}\left[L_s\left(n^{\frac{p}{2p+1}}\left(\theta_n-\theta\right)\right)\right]>0
\end{equation}
and
\begin{equation}\label{ub2c}
\lim_{s\to 0}\liminf_{n\to\infty}\sup_{\bm{\psi}_{1n}\in\Psi^*_{1n},\,\psi_{2n}\in\Psi^*_{2n}} E_{\bm{\psi}_{1n},g_{\psi_{2n}}}\left[L_s\left(n^{\frac{p}{2p+1}}\left(\theta_n-\theta\right)\right)\right]=1,
\end{equation}
respectively.  

Consider the generalization of the H\'{a}jek--Le Cam asymptotic minimax theorem \citep[e.g.,][Theorem~12.1]{IbragimovHasminskii81} 
given in \citet[inequality (II.12.18)]{IbragimovHasminskii81}.  One can deduce from 
\citet[inequality (II.12.18)]{IbragimovHasminskii81} that (\ref{ub1c})--(\ref{ub2c}) hold, thus implying 
(\ref{ub1b})--(\ref{ub2b}) and (\ref{ub1})--(\ref{ub2}), if for some $\bm{\psi}_{10}\in\mathbb{R}^{1+k+l}$ and $\psi_{20}\in\mathbb{R}$, there 
exists a parametrization $\psi_{2n}\to g_{\psi_{2n}}$ on $\Psi^*_{2n}$ with $g_{\psi_{20}}=g_0$ such that 
the conditional joint distribution of $(D,Y)$ given $\bm{X}$ and $\bm{Z}$ is locally asymptotically normal (LAN) at 
$[\begin{array}{cc}\bm{\psi}^{\top}_{10} & \psi_{20}\end{array}]$ in the sense of Condition~\ref{lancond} given below.  

Let $\psi_{2n}\to g_{\psi_{2n}}$ be such a parametrization of the conditional joint density of $[\begin{array}{cc} U & V\end{array}]$ given 
$\bm{X}$ and $\bm{Z}$, and let $l\left(\left.\bm{\psi}_{1n},\psi_{2n};D,Y\right|\bm{X},\bm{Z}\right)$ denote the conditional log-likelihood 
of $(D,Y)$ given $\bm{X}$ and $\bm{Z}$ evaluated at the point 
$(\bm{\psi}_{1n},g_{\psi_{2n}})\in\Psi^*_{1n}\times\Psi^*_{2n}$.  Let 
\[
\left\{(D_i,Y_i,\bm{X}^{\top}_i,\bm{Z}^{\top}_i):\,i=1,\ldots,n\right\}
\]
denote a sample of ordered $(2+k+l)$-tuples generated by (\ref{outcome})--(\ref{observation}).  The LAN condition is specified as follows:

\begin{condition}[LAN]\label{lancond}
The point $(\bm{\psi}_{10},g_{0})$, where 
$\bm{\psi}_{10}=(\theta_0,\bm{\beta}_0^{\top},\bm{\gamma}_0^{\top})^{\top}\in\mathbb{R}^{1+k+l}$, identifies the 
conditional joint distribution of each $(D_i,Y_i)$ given $\bm{X}_i$ and $\bm{Z}_i$.  In addition, the point
$\bm{\psi}_{1n}\equiv (\theta_n,\bm{\beta}_n^{\top},\bm{\gamma}_n^{\top})^{\top}$ is such that 
$n^{p/(2p+1)}\left(\theta_n-\theta_0\right)\to\omega^{p^*}_1$ for some $\omega_1\neq 0$, where $p^*\ge 3$ is the odd integer specified above in Assumption~\ref{a2}.\ref{a2d11}.  
In addition, $\sqrt{n}[(\bm{\beta}_n-\bm{\beta}_0)^{\top},(\bm{\gamma}_n-\bm{\gamma}_0)^{\top}]^{\top}\to\bm{\omega}_2$ 
for some $\bm{\omega}_2\neq\bm{0}$, while $\psi_{2n}$ is such that $\sqrt{n}\psi_{2n}\to\omega_3$ for some constant 
$\omega_3\neq 0$.

There exists a random $(2+k+l)$-vector $\bm{S}_{n0}$ and a $[(2+k+l)\times (2+k+l)]$-matrix $\bm{I}_0$ of full rank such that the conditional 
distribution $\left.\bm{S}_{n0}\right|(\bm{X}^{\top},\bm{Z}^{\top})\stackrel{d}{\to} N_{2+k+l}\left(\bm{0},\bm{I}_0\right)$ and
\begin{eqnarray*}
 & &\sum_{i=1}^n\left( l\left(\left.\bm{\psi}_{1n},\psi_{2n};D_i,Y_i\right|\bm{X}_i,\bm{Z}_i\right)-l\left(\left.\bm{\psi}_{10},0;Y_i,D_i\right|\bm{X}_i,\bm{Z}_i\right)\right)\\
 &=&[\begin{array}{ccc} \omega_1^{p^*} &\bm{\omega}_2 & \omega_3\end{array}]\bm{S}_{n0}-\frac{1}{2}[\begin{array}{ccc} \omega_1^{p^*} &\bm{\omega}_2 & \omega_3\end{array}]\bm{I}_0\left[\begin{array}{c} \omega_1^{p^*} \\ \bm{\omega}_2 \\ \omega_3\end{array}\right]+o_p(1).
\end{eqnarray*}
\end{condition}

It follows that Theorem~\ref{ubthm} is proved if for a given $\bm{\psi}_{10}\in\mathbb{R}^{1+k+l}$, one can exhibit a 
parametrization $\psi_{2n}\to g_{\psi_{2n}}$ on a shrinking neighbourhood $\Psi^*_{2n}$ of $g_0$ such that  
the corresponding conditional log likelihood of $(D,Y)$ given $[\begin{array}{cc} \bm{X}^{\top} & \bm{Z}^{\top}\end{array}]$ satisfies 
Condition~\ref{lancond} at the point $[\begin{array}{cc}\bm{\psi}^{\top}_{10} & 0\end{array}]$.

In this connection consider arbitrary points $\theta, \theta_0\in\mathbb{R}$ and $\bm{\beta},\bm{\beta}_0\in\mathbb{R}^k$.  Let $\delta_1\equiv\theta-\theta_0$ and $\bm{\delta}_2\equiv\bm{\beta}-\bm{\beta}_0$, and let 
$\Delta_{1n}$ and $\Delta_{2n}$ denote neighbourhoods of the origin given by
\begin{eqnarray}
\Delta_{1n} &=& \left\{\delta_1:\,n^{\frac{p}{2p+1}}\left|\delta_1\right|\le \kappa_1\right\},\label{Delta1n}\\
\Delta_{2n} &=& \left\{\bm{\delta}_2:\,n^{\frac{1}{2}}\left\|\bm{\delta}_2\right\|\le \kappa_2\right\}\label{Delta2n}
\end{eqnarray}
for positive constants $\kappa_1$ and $\kappa_2$.  Next, let $g_{\bm{0}U|V,\bm{X},\bm{Z}}(\cdot|\cdot)$ denote a conditional joint density of $U$ given 
$[\begin{array}{ccc} V & \bm{X}^{\top} & \bm{Z}^{\top}\end{array}]$ that satisfies all relevant conditions of Assumptions~\ref{a1} and \ref{a2}.  Let $\eta_1(u|\bm{x},\bm{z})$ be a non-constant measurable function such 
that 
\[
E\left[\left.\eta_1\left(\left.U\right|\bm{X},\bm{Z}\right)\right|D=1,\bm{X},\bm{Z}\right] = 0
\]
and 
\[
E\left[\left.\eta^2_1\left(\left.U\right|\bm{X},\bm{Z}\right)\right|D=1,\bm{X},\bm{Z}\right] < \infty
\]
with probability one.  Let $\Delta_{3n}$ be a neighbourhood of the origin on $\mathbb{R}$ given by
\begin{equation}\label{Delta3n}
\Delta_{3n} = \left\{\delta_3:\,n^{\frac{1}{2}}\left|\delta_3\right|\le \kappa_3\right\}
\end{equation}
for some positive constant $\kappa_3$.  Define the following curve on $\Delta_{1n}\times\Delta_{2n}\times\Delta_{3n}$ parameterized by 
$(\delta_1,\bm{\delta}^{\top}_2,\delta_3)$ and passing through $g_{\bm{0}U|V,\bm{X},\bm{Z}}(y-\theta_0-\bm{x}^{\top}\bm{\beta}_0|v,\bm{x},\bm{z})$:
\begin{eqnarray}
 & & g_{\bm{\delta}U|V}\left(\left.y-\theta_0-\bm{x}^{\top}\bm{\beta}_0\right|v,\bm{x},\bm{z}\right))\nonumber\\
 &=& \left(1+\delta_3\eta_1(y-\theta_0-\bm{x}^{\top}\bm{\beta}_0|\bm{x},\bm{z})\right)\nonumber\\
 & &\cdot g_{\bm{0}U|V}\left(\left.y-\theta_0-\frac{1}{p^*!}\eta_2\left(\left.y-\theta_0-\bm{x}^{\top}\beta_0\right|\bm{x},\bm{z}\right)\delta_1^{p^*}\right.\right.\nonumber\\
 & & -\bm{x}^{\top}(\bm{\beta}_0+\bm{\delta}_2)\left|v-\frac{1}{p^*!}\eta_2(y-\theta_0-\bm{x}^{\top}\beta_0|\bm{x},\bm{z})\delta_1^{p^*},\bm{x},\bm{z}\right),\label{ratestarp1}
\end{eqnarray}
where $\eta_2(u|\bm{x},\bm{z})$ is a non-constant function such that $E\left[\left.\eta^2_2\left(\left.U\right|\bm{X},\bm{Z}\right)\right|D=1,\bm{X},\bm{Z}\right]<\infty$ and where 
\[
E\left[\left.\eta_2(U|\bm{X},\bm{Z})\cdot\left.\frac{\partial^{p^*}}{\partial\delta_1^{p^*}}l(\bm{\delta})\right|_{\bm{\delta}=\bm{0}}\,\right|D=1,\bm{X},\bm{Z}\right]=0
\]
for $l(\bm{\delta})$ denoting the sub-model conditional likelihood function given below in (\ref{sublike}).

Now let $(d,y,\bm{x}^{\top},\bm{z}^{\top})\in\{0,1\}\times\mathbb{R}^{2+k+l}$ be a point in the support of $(D,Y,\bm{X}^{\top},\bm{Z}^{\top})$.  Let $\bm{\gamma},\bm{\gamma}_0\in\mathbb{R}^l$ be arbitrary points, and 
define $\bm{\delta}_4\equiv\bm{\gamma}-\bm{\gamma}_0$.  Let $\Delta_{4n}$ be a neighbourhood of the origin given by
\begin{equation}\label{Delta4n}
\Delta_{4n} = \left\{\bm{\delta}_4:\,n^{\frac{1}{2}}\left\|\bm{\delta}_4\right\|\le\kappa_4\right\}
\end{equation}
for some positive constant $\kappa_4$, and let $\Delta_n\equiv\Delta_{1n}\times\Delta_{2n}\times\Delta_{3n}\times\Delta_{4n}$ and 
$\bm{\delta}\equiv[\begin{array}{cccc}\delta_1 & \bm{\delta}_2^{\top} & \delta_3 & \bm{\delta}^{\top}_4\end{array}]^{\top}$.

Let $g_{\bm{0}V}(\cdot|\cdot)$ denote a conditional density of $V$ given $\bm{X}$ and $\bm{Z}$ satisfying all relevant conditions of Assumptions~\ref{a1} and \ref{a2}.  For a given $n$ the 
conditional log-likelihood of $(d,y)$ given $\bm{x}$ and $\bm{z}$ of the submodel indexed by $\left(\bm{\delta},g_{\bm{\delta} U|V,\bm{X},\bm{Z}}\right)$, where 
$g_{\bm{\delta} U|V,\bm{X},\bm{Z}}(y-\theta_0-\bm{x}^{\top}\bm{\beta}_0|v,\bm{x},\bm{z})$ is as given above in (\ref{ratestarp1}), is 
\begin{eqnarray}
 & & l(\bm{\delta})\nonumber\\ 
 &\equiv & l\left(\left.\bm{\delta};d,y\right|\bm{x},\bm{z}\right)\nonumber\\
	&\equiv & d\log \int_{-\infty}^{\bm{z}^{\top}\left(\bm{\gamma}_0+\bm{\delta}_4\right)} g_{\bm{\delta} U|V}\left(\left.y-\theta_0-\bm{x}^{\top}\bm{\beta}_0\right|v,\bm{x},\bm{z}\right)g_{\bm{0}V}(v|\bm{x},\bm{z})dv\nonumber\\
	& &+(1-d)\log\int_{\bm{z}^{\top}\left(\bm{\gamma}_0+\bm{\delta}_4\right)}^{\infty} g_{\bm{0}V}(v|\bm{x},\bm{z})dv.\label{sublike}
\end{eqnarray}

Observe from (\ref{sublike}) that $\left.\left(\partial^m/\partial\delta_1^m\right)l(\bm{\delta})\right|_{\bm{\delta}=\bm{0}}\equiv 0$ for each $m\in\{1,\ldots,p^*-1\}$ and all $(d,y,\bm{x}^{\top},\bm{z}^{\top})$, while 
$s^{(p^*)}_{\delta_1}(\bm{0})\equiv\left.\left(\partial^{p^*}/\partial\delta_1^{p^*}\right)l(\bm{\delta})\right|_{\bm{\delta}=\bm{0}}$ is both nonzero with positive probability and linearly independent, with probability one, 
of the submodel scores corresponding to $\bm{\delta}_2$, $\delta_3$ and $\bm{\delta}_4$.  In particular, for 
$\bm{s}_{\delta_3}(\bm{0})\equiv\left.\left(\partial/\partial\delta_3\right)l(\bm{\delta})\right|_{\bm{\delta}=\bm{0}}$ we have 
\[
E\left[\left.s^{(p^*)}_{\delta_1}(\bm{0})\right|D=1,\bm{X},\bm{Z}\right]=E\left[\left.s_{\delta_3}(\bm{0})\right|D=1,\bm{X},\bm{Z}\right]=0
\]
and 
\[
E\left[\left.s^{(p^*)}_{\delta_1}(\bm{0})s_{\delta_3}(\bm{0})\right|D=1,\bm{X},\bm{Z}\right]=0.
\]  
Similarly, for $\bm{s}_{\bm{\delta}_2}(\bm{0})\equiv\left.\left(\partial/\partial\bm{\delta}_2\right)l(\bm{\delta})\right|_{\bm{\delta}=\bm{0}}$ and 
$\bm{s}_{\bm{\delta}_4}(\bm{0})\equiv\left.\left(\partial/\partial\bm{\delta}_4\right)l(\bm{\delta})\right|_{\bm{\delta}=\bm{0}}$, one can show that
$E\left[\left.s^{(p^*)}_{\delta_1}(\bm{0})\bm{s}_{\bm{\delta}_2}(\bm{0})\right|D=1,\bm{X},\bm{Z}\right]=\bm{0}_{k\times 1}$ and 
$E\left[\left.s^{(p^*)}_{\delta_1}(\bm{0})\bm{s}_{\bm{\delta}_4}(\bm{0})\right|D=1,\bm{X},\bm{Z}\right]=\bm{0}_{l\times 1}$, which indicates that $s^{(p^*)}_{\delta_1}(\bm{0})$ is almost surely conditionally 
uncorrelated given $\bm{X}$ and $\bm{Z}$ with the submodel scores corresponding to $\bm{\delta}_2$, $\delta_3$ and $\bm{\delta}_4$. 

In what follows, Condition~\ref{lancond} is shown to apply to a condensed version of the submodel with conditional log-likelihood 
given in (\ref{sublike}).  This simplification involves assuming that the finite-dimensional nuisance parameters $\bm{\beta}_0$ and $\bm{\gamma}_0$ are known, in which case the argument $\bm{\delta}$ appearing in 
(\ref{sublike}) reduces to the ordered pair $\bm{\delta}=[\begin{array}{cc}\delta_1 & \delta_3\end{array}]^{\top}=[\begin{array}{cc}\theta-\theta_0 & \delta_3\end{array}]^{\top}$.  In addition, the set $\Delta_n$ is 
understood to have the form $\Delta_n=\Delta_{1n}\times\Delta_{3n}$, where $\Delta_{1n}$ and $\Delta_{3n}$ are as given above in (\ref{Delta1n}) and (\ref{Delta3n}), respectively.  It is  
shown that the the family of conditional joint distributions of $(d,y)$ given $(\bm{x}^{\top},\bm{z}^{\top})$ and indexed by 
$(\bm{\delta},g_{\bm{\delta}U|V})$ for $\bm{\delta}\in\Delta_n$ is LAN at the point $\bm{\delta}=\bm{0}_{2\times 1}$.  The 
argument for the general case in which Condition~\ref{lancond} is shown to apply to the conditional log-likelihood appearing 
in (\ref{sublike}) in which $\bm{\beta}_0$ and $\bm{\gamma}_0$ are both unknown is similar, although rather more notationally complex.

For $\bm{\delta}=(\delta_1,\delta_3)$ as discussed above and $(j_1,j_2)$ denoting an ordered pair of non-negative integers, 
define the derivatives $l^{(j_1,j_2)}(\bm{\delta})\equiv\left(\partial^{j_1+j_2}/\partial\delta_1^{j_1}\partial\delta_3^{j_2}\right) l(\bm{\delta})$ 
and also $l_0^{(j_1,j_2)}\equiv l^{(j_1,j_2)}(\bm{0})$, where $l(\bm{\delta})$ is now taken to be the analogue of the conditional log-likelihood given in (\ref{sublike}) corresponding to the 
submodel in which $\bm{\delta}_2$ and $\bm{\delta}_4$ are both fixed.  Suppose $\delta_{1n}$ and $\delta_{3n}$ 
are such that $[\begin{array}{cc}(n^{p/(2p+1)}\delta_{1n})^{1/p^*} & \sqrt{n}\delta_{3n}\end{array}]\to [\begin{array}{cc} \omega_1 & \omega_3\end{array}]$ for some 
$[\begin{array}{cc} \omega_1 & \omega_3\end{array}]\neq\bm{0}$.  Let 
$\bm{\delta}_{n}\equiv [\begin{array}{cc}\delta_{1n} & \delta_{3n}\end{array}]^{\top}$.  For $j_1,j_2\ge 0$ with $j_1+j_2=2p^*+1$, define 
$R_0^{(j_1+j_2)}\equiv l^{(j_1,j_2)}(\bar{\bm{\delta}})-l_0^{(j_1,j_2)}$, for some point $\bar{\bm{\delta}}\in\Delta_n$ 
such that $\left\|\bar{\bm{\delta}}\right\|<\left\|\bm{\delta}_{n}\right\|$.  

Observe from previous discussion that $l_0^{(1,0)}=\cdots = l_0^{(p^*-1,0)}=0$.    It follows that
\begin{eqnarray*}
 & &l(\bm{\delta}_{n})-l(\bm{0})\\
 &=&\omega_1^{p^*}\left[n^{-\frac{1}{2}}\cdot\frac{l_0^{(p^*,0)}}{p^*!}+n^{-\frac{1}{2p^*}}\left(n^{-\frac{1}{2}}\cdot\frac{l_0^{(p^*+1,0)}}{(p^*+1)!}\omega_1\right)+n^{-\frac{1}{2p^*}}\left(\sum_{j_1=2}^{p^*-1} n^{-\frac{1}{2}}\frac{l_0^{(p^*+j_1,0)}}{(p^*+j_1)!}n^{\frac{1-j_1}{2p^*}}\omega_1^{j_1}\right)\right.\\
 & &\left.+n^{-1}\frac{l_0^{(2p^*,0)}}{(2p^*)!}\omega_1^{p^*}+n^{-\frac{1}{2p^*}}\left(n^{-1}\frac{l_0^{(2p^*+1,0)}}{(2p^*+1)!}\omega_1^{p^*+1}+n^{-1}\cdot\frac{R_0^{(2p^*+1,0)}}{(2p^*+1)!}\omega_1^{p^*+1}\right)\right]\\
 & &+\omega_3\left\{n^{-\frac{1}{2}}\cdot l_0^{(0,1)}+n^{-\frac{1}{2p^*}}\cdot\left[n^{-\frac{1}{2}}l_0^{(1,1)}\omega_1+\left(\sum_{j_1=2}^{p^*-1}n^{-\frac{1}{2}}\frac{l_0^{(j_1,1)}}{j_1!}n^{\frac{1-j_1}{2p^*}}\omega_1^{j_1}\right)\right]\right.\\
 & &\left.+n^{-1}\frac{l_0^{(p^*,1)}}{p^*!}\omega^{p^*}_1+n^{-\frac{1}{2p^*}}\left[n^{-1}\frac{l_0^{(p^*+1,1)}}{(p^*+1)!}\omega_1^{p^*+1}+\left(\sum_{j_1=p^*+2}^{2p^*}n^{-1}\frac{l_0^{(j_1,1)}}{j_1!}n^{\frac{1-j_1}{2p^*}}\omega_1^{j_1}\right)\right.\right.\\
 & &\left.\left.+n^{-1}\frac{R_0^{(2p^*,1)}}{(2p^*)!}n^{\frac{1-p^*}{2p^*}}\omega_1^{2p^*}\right]\right.\\
 & &\left.+n^{-1}\frac{l_0^{(0,2)}}{2}\omega_3+n^{-\frac{1}{2p^*}}\left[n^{-1}\frac{l_0^{(1,2)}}{2}\omega_1\omega_3+\left(\sum_{j_1=2}^{2p^*-1}n^{-1}\frac{l_0^{(j_1,2)}}{(2+j_1)!}n^{\frac{1-j_1}{2p^*}}\omega_1^{j_1}\omega_3{2+j_1\choose j_1}\right)\right.\right.\\
 & &\left.\left.+n^{-1}\frac{R_0^{(2p^*-1,2)}}{(2p^*+1)!}n^{\frac{2-2p^*}{2p^*}}\omega_1^{2p^*-1}\omega_3{ 2p^*+1\choose 2p^*-1}\right.\right.\\
 & &\left.\left.+\left(\sum_{m=3}^{2p^*+1}\sum_{j_1+j_2=m:\, j_2\ge 3, j_1\ge 0} n^{-1}\frac{l_0^{(j_1,j_2)}}{m!}n^{\frac{(1-j_1)+(2-j_2)p^*}{2p^*}}\omega_1^{j_1}\omega_3^{j_2-1}{m\choose j_1}\right)\right.\right.\\
 & &\left.\left.+\left(\sum_{j_1+j_2=2p^*+1:\,j_2\ge 3, j_1\ge 0} n^{-1}\frac{R_0^{(j_1,j_2)}}{(2p^*+1)!}n^{\frac{1-j_1+(2-j_2)p^*}{2p^*}}\omega_1^{j_1}\omega_3^{j_2-1}{2p^*+1\choose j_1}\right)\right]\right\}\\
 &\equiv & \omega_1^{p^*}[A_{1n}+n^{-\frac{1}{2p^*}}A_{2n}+n^{-\frac{1}{2p^*}}A_{3n}+A_{4n}+n^{-\frac{1}{2p^*}}(A_{5n}+A_{6n})]\\
 & &+\omega_3\{A_{7n}+n^{-\frac{1}{2p^*}}[A_{8n}+A_{9n}]+A_{10n}+n^{-\frac{1}{2p^*}}[A_{11n}+A_{12n}+A_{13n}]+A_{14n}\\
 & &+n^{-\frac{1}{2p^*}}[A_{15n}+A_{16n}+A_{17n}+A_{18n}+A_{19n}]\}.
\end{eqnarray*} 
Let $f_{\bm{\delta}}\left(\left.y,d\right|\bm{x},\bm{z}\right)\equiv\exp\left(l(\bm{\delta})\right)$, where $\bm{\delta}=[\begin{array}{cc} \delta_1 & \delta_3\end{array}]^{\top}$, denote the joint 
conditional density of $(Y,D)$ given $(\bm{X}^{\top},\bm{Z}^{\top})=(\bm{x}^{\top},\bm{z}^{\top})$ corresponding to the condensed version of the conditional log-likelihood in (\ref{sublike}) where 
$\bm{\delta}_2$ and $\bm{\delta}_4$ are fixed. Define
\begin{eqnarray*}
I_{011} &\equiv & E\left[\left(\left.\frac{\partial^{p^*}}{\partial\delta^{p^*}_1}\log f_{\bm{\delta}}\left(\left.Y,D\right|\bm{x},\bm{z}\right)\right|_{\bm{\delta}=\bm{0}}\right)^2\right];\\
I_{033} &\equiv & E\left[\left(\left.\frac{\partial}{\partial\delta_3}\log f_{\bm{\delta}}\left(\left.Y,D\right|\bm{x},\bm{z}\right)\right|_{\bm{\delta}=\bm{0}}\right)^2\right]
\end{eqnarray*}
and
\begin{eqnarray*}
I_{031} &\equiv & I_{013}\equiv E\left[\left.\frac{\partial^{p^*}}{\partial\delta^{p^*}_1}\log f_{\bm{\delta}}\left(\left.Y,D\right|\bm{x},\bm{z}\right)\right|_{\bm{\delta}=\bm{0}}\cdot\left.\frac{\partial}{\partial\delta_3}\log f_{\bm{\delta}}\left(\left.Y,D\right|\bm{x},\bm{z}\right)\right|_{\bm{\delta}=\bm{0}}\right],
\end{eqnarray*}
where each expectation is taken at $\bm{\delta}=\bm{0}$, and let
\[
\bm{I}_0\equiv\left[\begin{array}{cc} I_{011} & I_{013} \\ I_{031} & I_{033}\end{array}\right].
\]

For any ordered pair of nonnegative integers $(r_1,r_2)$ with $3\le r_1+r_2\le 2p^*+1$, one can exploit the form of the 
parametrization of the conditional density $g_{U|V,\bm{X},\bm{Z}}(y-\theta_0-x\beta_0|v,\bm{x},\bm{z})$ given above in (\ref{ratestarp1}) to deduce that
\[
E\left[\left(\left.\frac{\partial^{r_1+r_2}}{\partial\delta^{r_1}_1\partial\delta^{r_2}_3} f_{\bm{\delta}}\left(\left.Y,D\right|\bm{x},\bm{z}\right)\right|_{\bm{\delta}=\bm{0}}\right)^2\right]<\infty,
\]
where the expectation is also taken at $\bm{\delta}=\bm{0}$.  The parametrization of the function $g_{U|V,\bm{X},\bm{Z}}(\cdot|\cdot)$ given in 
(\ref{ratestarp1}) allows one to apply \citet[Corollary~1, p. 268]{RotnitzkyCoxBottaiRobins00} to deduce the following:
\begin{itemize}
\item $A_{3n}=O_p\left(n^{-1/(2p^*)}\right)=o_p(1)$.
\item $A_{9n}=O_p\left(n^{-1/(2p^*)}\right)=o_p(1)$.
\item $A_{12n}=O_p\left(n^{-1/(2p^*)}\right)=o_p(1)$.
\item $A_{16n}=O_p\left(n^{-1/(2p^*)}\right)=o_p(1)$.
\item $A_{18n}=O_p\left(n^{-1/p^*}\right)=o_p(1)$.
\item $A_{6n}=O_p\left(n^{-1/(2p^*)}\right)=o_p(1)$.
\item $A_{13n}=o_p\left(n^{-1/p^*}\right)=o_p(1)$.
\item $A_{17n}=o_p\left(n^{-2/p^*}\right)=o_p(1)$.
\item $A_{19n}=o_p\left(n^{-1/p^*}\right)=o_p(1)$.
\end{itemize}

Now define 
\begin{eqnarray*}
C_{011} &\equiv & E\left[\left.\frac{\partial^{p^*}}{\partial\delta^{p^*}_1}\log f_{\bm{\delta}}\left(\left.Y,D\right|\bm{x},\bm{z}\right)\right|_{\bm{\delta}=\bm{0}}\cdot\left.\frac{\partial^{p^*+1}}{\partial\delta^{p^*+1}_1}\log f_{\bm{\delta}}\left(\left.Y,D\right|\bm{x},\bm{z}\right)\right|_{\bm{\delta}=\bm{0}}\right];\\
C_{013} &\equiv & E\left[\left.\frac{\partial^{p^*}}{\partial\delta^{p^*}_1}\log f_{\bm{\delta}}\left(\left.Y,D\right|\bm{x},\bm{z}\right)\right|_{\bm{\delta}=\bm{0}}\cdot\left.\frac{\partial^{p^*-1}}{\partial\delta_1\partial\delta_3}\log f_{\bm{\delta}}\left(\left.Y,D\right|\bm{x},\bm{z}\right)\right|_{\bm{\delta}=\bm{0}}\right];\\
C_{031} &\equiv & E\left[\left.\frac{\partial}{\partial\delta_3}\log f_{\bm{\delta}}\left(\left.Y,D\right|\bm{x},\bm{z}\right)\right|_{\bm{\delta}=\bm{0}}\cdot\left.\frac{\partial^{p^*+1}}{\partial\delta^{p^*+1}_1}\log f_{\bm{\delta}}\left(\left.Y,D\right|\bm{x},\bm{z}\right)\right|_{\bm{\delta}=\bm{0}}\right];
\end{eqnarray*}
and
\begin{eqnarray*}
C_{033} &\equiv & E\left[\left.\frac{\partial}{\partial\delta_3}\log f_{\bm{\delta}}\left(\left. Y,D\right|\bm{x},\bm{z}\right)\right|_{\bm{\delta}=\bm{0}}\cdot\left.\frac{\partial^{p^*-1}}{\partial\delta_1\partial\delta_3}\log f_{\bm{\delta}}\left(\left.Y,D\right|\bm{x},\bm{z}\right)\right|_{\bm{\delta}=\bm{0}}\right],
\end{eqnarray*}
where each expectation is taken at $\bm{\delta}=\bm{0}$, and let
\[
\bm{C}_0\equiv\left[\begin{array}{cc} C_{011} & C_{013} \\ C_{031} & C_{033}\end{array}\right].
\]
A further application of \citet[Corollary~1]{RotnitzkyCoxBottaiRobins00} yields the following:
\begin{itemize}
\item $A_{4n}=\omega_1^{p^*}\left[-(1/2)\cdot I_{011}+o_p\left(n^{-1/(2p^*)}\right)\right]$.
\item $A_{5n}=\omega_1^{p^*+1}\left(-C_{011}+o_p\left(n^{-1/(2p^*)}\right)\right)$.
\item $A_{10n}=\omega_1^{p^*}\left(-I_{013}+o_p\left(n^{-1/(2p^*)}\right)\right)$.
\item $A_{11n}=\omega_1^{p^*+1}\left[-\left(C_{013}+C_{031}\right)+o_p\left(n^{-1/(2p^*)}\right)\right]$.
\item $A_{14n}=\omega_3\left[-(1/2)\cdot I_{013}+o_p\left(n^{-1/(2p^*)}\right)\right]$.
\item $A_{15n}=\omega_1\omega_3\left(-C_{033}+o_p\left(n^{-1/(2p^*)}\right)\right)$.
\end{itemize}
Collecting terms, one gets $l(\bm{\delta}_{n})-l(\bm{0})=G_{n0}\left(\omega_1^{p^*},\omega_3\right)+H_{n0}\left(\omega_1,\omega_3\right)$, 
where
\begin{eqnarray*}
G_{n0}\left(\omega_1^{p^*},\omega_3\right) &=&\omega_1^{p^*}\cdot n^{-\frac{1}{2}}\cdot\frac{l_0^{(p^*,0)}}{(p^*)!}+\omega_3\cdot n^{-\frac{1}{2}}\cdot l_0^{(0,1)}-\frac{1}{2}[\begin{array}{cc} \omega_1^{p^*} & \omega_3\end{array}]\bm{I}_0\left[\begin{array}{c} \omega_1^{p^*} \\ \omega_3\end{array}\right]\\
	&=&[\begin{array}{cc}\omega_1^{p^*} & \omega_3\end{array}]\cdot n^{-\frac{1}{2}}\left[\begin{array}{c} \frac{1}{(p^*)!}l_0^{(p^*,0)} \\ l_0^{(0,1)}\end{array}\right]-\frac{1}{2}[\begin{array}{cc} \omega_1^{p^*} & \omega_3\end{array}]\bm{I}_0\left[\begin{array}{c} \omega_1^{p^*} \\ \omega_3\end{array}\right],
\end{eqnarray*}
and where 
\[
H_{n0}\left(\omega_1,\omega_3\right)=n^{-\frac{1}{2p^*}}\omega_1\left(T_{n0}\left(\omega_1^{p^*},\omega_3\right)+o_p(1)\right)+O_p\left(n^{-\frac{1}{2}}\right),
\]
where
\[
T_{n0}\left(\omega_1^{p^*},\omega_3\right)=\omega_1^{p^*}\cdot n^{-\frac{1}{2}}\cdot\frac{l_0^{(p^*+1,0)}}{(p^*+1)!}+\omega_3\cdot n^{-\frac{1}{2}}\cdot l_0^{(1,1)}-[\begin{array}{cc} \omega_1^{p^*} & \omega_3\end{array}]\bm{C}_0\left[\begin{array}{c} \omega_1^{p^*} \\ \omega_3\end{array}\right].
\]
An application of \citet[Corollary~1]{RotnitzkyCoxBottaiRobins00} and the Cram\'{e}r--Wold device yields
\[
n^{-\frac{1}{2}}\left[\begin{array}{c} \frac{1}{p^*!}l_0^{(p^*,0)}\\ l_0^{(0,1)}\end{array}\right]\stackrel{d}{\to} N_2\left(\bm{0},\bm{I}_0\right).
\]
In addition, \citet[Corollary~1]{RotnitzkyCoxBottaiRobins00} also implies that 
\[
n^{-\frac{1}{2}}\cdot\frac{l_0^{(p^*+1,0)}}{(p^*+1)!}=O_p(1)
\]
and that
\[
n^{-\frac{1}{2}}l_0^{(1,1)}=O_p(1).
\]
It follows that $T_{n0}\left(\omega_1^{p^*},\omega_3\right)=O_p(1)$ and that $H_{n0}\left(\omega_1,\omega_3\right)=O_p\left(n^{-1/(2p^*)}\right)=o_p(1)$.

In summary, we have the first-order representation
\[
l(\bm{\delta}_{n})-l(\bm{0})=[\begin{array}{cc} \omega_1^{p^*} & \omega_3\end{array}]\cdot n^{-\frac{1}{2}}\left[\begin{array}{c}\frac{1}{p^*!}l_0^{(p^*,0)}\\ l_0^{(0,1)}\end{array}\right]-\frac{1}{2}[\begin{array}{cc}\omega_1^{p^*} & \omega_3\end{array}]\bm{I}_0\left[\begin{array}{c} \omega_1^{p^*} \\ \omega_3\end{array}\right]+o_p(1),
\]
where for $(\bm{X}^{\top},\bm{Z}^{\top})=(\bm{x}^{\top},\bm{z}^{\top})$, 
\[
n^{-\frac{1}{2}}\left[\begin{array}{c} \frac{1}{p^*!}l_0^{(p^*,0)}\\ l_0^{(0,1)}\end{array}\right]\stackrel{d}{\to} N_2\left(\bm{0},\bm{I}_0\right).
\]
It follows that the condensed version of the submodel with conditional log-likelihood given above in (\ref{sublike}) and where $\bm{\delta}_2$ and 
$\bm{\delta}_4$ are fixed is LAN at the point $[\begin{array}{cc}\delta_1 & \delta_3\end{array}]^{\top}=\bm{0}$, and as such, 
satisfies Condition~\ref{lancond}.  The general case in which $\bm{\delta}_2$ and $\bm{\delta}_4$ are unknown follows {\it mutatis mutandis}.  

\subsection{Proof of Theorem~\ref{achthm}}

\noindent The approach taken involves showing that 
$\sqrt{nh_n^*}(\hat{\theta}^*_n-\theta)=O_p(1)$ as $n\to\infty$ uniformly across 
sequences $\left\{\psi_n\right\}\equiv\left\{(\bm{\psi}_{1n},g)\right\}\subset \Psi_n$, where the set $\Psi_n$ is as given above in (\ref{Psin}).  
In particular, for each $n$, $\bm{\psi}_{1n}=(\theta,\bm{\beta}^{\top},\bm{\gamma}^{\top})^{\top}\in\mathbb{R}^{1+k+l}$, while $g$ denotes the joint 
conditional density of $(U,V)$ given $\bm{X}$ and $\bm{Z}$.
 
It suffices to show that for any $\epsilon>0$ there exists a constant $\delta(\epsilon)\in (0,\infty)$ such that
\begin{equation}\label{unifstarp1}
\lim_{n\to\infty}\sup_{\psi_n\in\Psi_n} P_{\psi_n}\left[\sqrt{nh_n^*}|\hat{\theta}_n^*-\theta|>\delta(\epsilon)\right]<\epsilon,
\end{equation}
where $P_{\psi_n}[\cdot]$ denotes probability measure under $\psi_n$.  One can show using Chebyshev's inequality that the 
following conditions suffice for (\ref{unifstarp1}) to hold with $\delta(\epsilon)=4/\epsilon$, in particular:
\begin{eqnarray}
\lim_{n\to\infty}\sqrt{nh_n^*}\sup_{\psi_n\in\Psi_n}\left|E_{\psi_n}\left[\theta^*_n\right]-\theta\right| &<& \infty;\label{unifonep1}\\
\lim_{n\to\infty} nh_n^*\cdot\sup_{\psi_n\in\Psi_n} Var_{\psi_n}\left[\theta^*_n\right] &<& \infty,\label{uniftwop1}
\end{eqnarray}
where $E_{\psi_n}[\cdot]$ and $Var_{\psi_n}[\cdot]$ respectively denote expectation and variance under a given 
$\psi_n\in\Psi_n$.

To see that (\ref{unifonep1}) and (\ref{uniftwop1}) jointly imply (\ref{unifstarp1}), note that 
$\left|\hat{\theta}_n^*-\theta\right|\le\left|\hat{\theta}^*_n-E_{\psi_n}\left[\hat{\theta}^*_n\right]\right|+\left|E_{\psi_n}\left[\hat{\theta}^*_n\right]-\theta\right|$,
so
\begin{eqnarray*}
 & &P_{\psi_n}\left[\sqrt{nh_n^*}\left|\hat{\theta}^*_n-\theta\right|>\delta(\epsilon)\right]\\
 &\le & P_{\psi_n}\left[\sqrt{nh_n^*}\left|\hat{\theta}^*_n-E_{\psi_n}\left[\hat{\theta}_n^*\right]\right|+\sqrt{nh_n^*}\left|E_{\psi_n}\left[\hat{\theta}^*_n\right]-\theta\right|>\delta(\epsilon)\right]\\
 &\le & P_{\psi_n}\left[\sqrt{nh_n^*}\left|\hat{\theta}^*_n-E_{\psi_n}\left[\hat{\theta}_n^*\right]\right|>\frac{\delta(\epsilon)}{2}\right]+P_{\psi_n}\left[\sqrt{nh_n^*}\left|E_{\psi_n}\left[\hat{\theta}^*_n\right]-\theta\right|>\frac{\delta(\epsilon)}{2}\right]\\
 &\le & \frac{4}{\delta^2(\epsilon)}Var_{\psi_n}\left[\sqrt{nh_n^*}\hat{\theta}^*_n\right]+\frac{2\sqrt{nh_n^*}\left|E_{\psi_n}\left[\hat{\theta}^*_n\right]-\theta\right|}{\delta(\epsilon)},
 \end{eqnarray*}
 from which it follows that
\begin{eqnarray*}
 & &\sup_{\psi_n\in\bar{\Psi}_n} P_{\psi_n}\left[\sqrt{nh^*_n}\left|\hat{\theta}^*_n-\theta\right|>\delta(\epsilon)\right]\\
 &\le &\frac{4nh_n^*}{\delta^2(\epsilon)}\sup_{\psi\in\Psi_n} Var_{\psi}\left[\hat{\theta}^*_n\right]+\frac{2\sqrt{nh_n^*}}{\delta(\epsilon)}\sup_{\psi_n\in\bar{\Psi}_n}\left|E_{\psi_n}\left[\hat{\theta}^*_n\right]-\theta\right|.
\end{eqnarray*}
In what follows, (\ref{unifonep1}) and (\ref{uniftwop1}) are proved in sequence.

\subsubsection{Proof of (\ref{unifonep1})}

\noindent  Recall the expression for $\hat{m}_{n1}(1)$ given above in (\ref{mhat1}).  In particular, consider the first bias 
term appearing in (\ref{mhat1}).  The assumption of a $p$th-order kernel, the uniform boundedness of 
$\hat{\eta}_n(\cdot)$ over $n\ge 1$, the boundedness of $K(\cdot)$ and of
$m^{(j)}_{F_0}(\cdot)$ for each $j\in\{0,1,\ldots,p\}$ imply via the bounded convergence 
theorem that for each sequence $\left\{\psi_n:\,\psi_n\in\Psi_n\right\}$, 
\begin{eqnarray}
 & &\sqrt{nh_n^*}\nonumber\\
 & &\cdot E_{\psi_n}\left[\bm{e}^{\top}_1\left(\sum_{i=1}^n\bm{S}_i K_i\bm{S}_i^{\top}\right)^{-1}\sum_{i=1}^n\bm{S}_iK_i\left[\sum_{j=2}^{p-1}\frac{1}{j!}\left(\hat{\eta}_n(\tilde{\bm{Z}}_i)-1\right)^j m^{(j)}_{F_0}(1)\right.\right.\nonumber\\
 & &\left.\left.+\frac{1}{p!}\left(\hat{\eta}_n(\tilde{\bm{Z}}_i)-1\right)^{p}\bar{\theta}_{ni}^{(p)}\right]\right]\nonumber\\
 &=& O(1) \label{unifstarp2}
\end{eqnarray}
in view of the assumption that 
$\sqrt{nh_n^*}\cdot (h_n^*)^{p}=\sqrt{n(h_n^*)^{2p+1}}=\sqrt{c}<\infty $.

Notice that the expectation in (\ref{unifstarp2}) does not depend on $g$, while $K_i$ is nonzero only for 
those observations $i$ such that $1-\hat{\eta}_n(\bm{Z}_i)\le h_n$.  It follows that the bound in (\ref{unifstarp2}) 
is uniform in $\Psi_n$, i.e., 
\begin{eqnarray}
 & &\lim_{n\to\infty}\sqrt{nh_n^*}\nonumber\\
 & &\cdot \sup_{\psi_n\in\Psi_n}\left|E_{\psi_n}\left[\bm{e}^{\top}_1\left(\sum_{i=1}^n\bm{S}_i K_i\bm{S}_i^{\top}\right)^{-1}\sum_{i=1}^n\bm{S}_iK_i\left[\sum_{j=2}^{p-1}\frac{1}{j!}\left(\hat{\eta}_n(\bm{Z}_i)-1\right)^j m^{(j)}_{F_0}(1)\right.\right.\right.\nonumber\\
 & &\left.\left.\left.+\frac{1}{p!}\left(\hat{\eta}_n(\bm{Z}_i)-1\right)^{p}\bar{\theta}_{ni}^{(p)}\right]\right]\right|\nonumber\\
 &<& \infty. \label{unifstarp2b}
\end{eqnarray}

Next, consider that $\left\|\hat{\bm{\beta}}_n-\bm{\beta}_0\right\|=O_p\left(n^{-1/2}\right)$ by Assumption~\ref{a2}, 
so there exists a constant $C_1\in (0,\infty)$ such that 
$\sqrt{n}\left\|\hat{\bm{\beta}}_n-\bm{\beta}_0\right\|\le C_1$ with probability 
approaching one as $n\to\infty$.  Let $A_{n1}$ denote the event in which 
$\sqrt{n}\left\|\hat{\bm{\beta}}_n-\bm{\beta}_0\right\|\le C_1$.  Let $\mathcal{A}_{n1}$ be the $\sigma$-algebra generated by 
$A_{n1}$.  We have for any sequence $\left\{\psi_n:\,\psi_n\in\Psi_n\right\}$ that
\begin{eqnarray}
 & &\sqrt{nh_n^*}\left|E_{\psi_n}\left[\left.\bm{e}^{\top}_1\left(\sum_{i=1}^n\bm{S}_i K_i\bm{S}_i^{\top}\right)^{-1}\sum_{i=1}^n\bm{S}_i K_i\bm{X}^{\top}_i\left(\hat{\bm{\beta}}_n-\bm{\beta}_0\right)\right|\mathcal{A}_{n1}\right]\right|\nonumber\\ 
 &\le & \sqrt{h^*_n} E_{\psi_n}\left[\left.\left\|\left(\sum_{i=1}^n\bm{S}_i K_i\bm{S}_i^{\top}\right)^{-1}\right\|\sum_{i=1}^n\left\|\bm{S}_i\right\|\left| K_i\right|\left\|\bm{X}_i\right\|\right|\mathcal{A}_{n1}\right]\cdot C_1 \nonumber\\
 &=& O_p\left(\sqrt{h_n^*}\right)\nonumber\\
 &=& o_p(1),\label{unifstarp3}
\end{eqnarray}
where use has been made of the assumption that $E\left[\left\|\bm{X}_1\right\|\right]<\infty$, as well as of the uniform 
boundedness of $\hat{\eta}_n(\cdot)$ over all $n$, the boundedness of $K(\cdot)$ and the bounded convergence theorem.  From (\ref{unifstarp3}) 
it follows that there exists a random variable $M_{n1}=O_p\left(\sqrt{h_n^*}\right)$ such that
\begin{eqnarray*}
 & &\sqrt{nh_n^*}\sup_{\psi_n\in\Psi_n}\left|E_{\psi_n}\left[\left.\bm{e}^{\top}_1\left(\sum_{i=1}^n\bm{S}_i K_i\bm{S}_i^{\top}\right)^{-1}\sum_{i=1}^n\bm{S}_i K_i\bm{X}^{\top}_i\left(\hat{\bm{\beta}}_n-\bm{\beta}_0\right)\right|\mathcal{A}_{n1}\right]\right|\\
 &\le & M_{n1}.
\end{eqnarray*}
But $P_{\psi_n}\left[A_{n1}\right]\to 1$ as $n\to\infty$, so
\begin{eqnarray*}
 & &\sqrt{nh_n^*}\sup_{\psi_n\in\Psi_n}\left|E_{\psi_n}\left[\bm{e}^{\top}_1\left(\sum_{i=1}^n\bm{S}_i K_i\bm{S}_i^{\top}\right)^{-1}\sum_{i=1}^n\bm{S}_i K_i\bm{X}^{\top}_i\left(\hat{\bm{\beta}}_n-\bm{\beta}_0\right)\right]\right|\\
 &=& O\left(\sqrt{h_n^*}\right)\\
 &=&o(1).
\end{eqnarray*}

Next, note that the uniform boundedness of $\hat{\eta}_n(\cdot)$ over $n$ and the boundedness of $K(\cdot)$ and 
$m^{(1)}_{F_0}(\cdot)$ 
imply via the bounded convergence theorem that
\begin{equation}\label{unifstarp4}
\sqrt{nh_n^*}E_{\psi_n}\left[\bm{e}^{\top}_1\left(\sum_{i=1}^n\bm{S}_i K_i\bm{S}_i^{\top}\right)^{-1}\sum_{i=1}^n \bm{S}_i K_i\left(\zeta_{ni}-\zeta_i\right)\right]=o(1)
\end{equation}
for all sequences $\left\{\psi_n:\,\psi_n\in\Psi_n\right\}$, where $\zeta_{ni}$ and $\zeta_i$ are as given above in (\ref{Wi}) 
and (\ref{zetai}), respectively.

The expectation in (\ref{unifstarp4}) does not depend upon the conditional joint density $g$ of $(U,V)$ given $\bm{X}$ and $\bm{Z}$, 
while $K_i$ is nonzero only for those 
observations $i$ where $1-\hat{\eta}_n(\bm{Z}_i)\le h_n$.  It follows that
\begin{equation}\label{unif2starp4}
\sqrt{nh_n^*}\sup_{\psi_n\in\Psi_n}\left|E_{\psi_n}\left[\bm{e}^{\top}_1\left(\sum_{i=1}^n\bm{S}_i K_i\bm{S}_i^{\top}\right)^{-1}\sum_{i=1}^n\bm{S}_i K_i\left(\zeta_{ni}-\zeta_i\right)\right]\right|=o(1).
\end{equation}
Next, consider that the uniform boundedness of $\hat{\eta}_n(\cdot)$ over all $n$ and the boundedness of $K(\cdot)$  
imply that there exists constants $M_{n2}, M_{n3}<(0,\infty)$ not depending on $\psi_n$ such that for every  $\psi_n\in\Psi_n$,
\begin{eqnarray}
 & &\left|E_{\psi_n}\left[\bm{e}^{\top}_1\left(\sum_{i=1}^n\bm{S}_i K_i\bm{S}_i\right)^{-1}\sum_{i=1}^n\bm{S}_i K_i\zeta_i\right]-\frac{1}{nh_n^*} E_{\psi_n}\left[\bm{e}^{\top}_1\sum_{i=1}^n\bm{S}_i K_i\zeta_i\right]\right|\nonumber\\
 &\le & M_{n2}\cdot\frac{1}{h_n^*}\left|E_{\psi_n}\left[ K\left(\frac{1}{h_n^*}\left(\eta_0(\bm{Z}_1)-1\right)\right)\zeta_1\right]\right|\nonumber\\
 &\le & M_{n3}(h_n^*)^p, \label{unifdagp4}
\end{eqnarray} 
where use has been made of the assumptions that $E\left[\left\|\bm{X}_1\right\|\right]<\infty$, $E\left[U_1\right]=0$, that 
$U_1$ and $\eta_0(\bm{Z}_1)$ are independent and that $K(\cdot)$ is a kernel of $p$-th order.  Since (\ref{unifdagp4}) 
holds for every $\psi_n\in\Psi_n$, we find that
\begin{eqnarray}
 & &\sup_{\psi_n\in\Psi_n}\left|E_{\psi_n}\left[\bm{e}^{\top}_1\left(\sum_{i=1}^n\bm{S}_i K_i\bm{S}_i^{\top}\right)^{-1}\sum_{i=1}^n\bm{S}_i K_i\zeta_i\right]-\frac{1}{nh_n^*} E_{\psi_n}\left[\bm{e}^{\top}_1\sum_{i=1}^n\bm{S}_i K_i\zeta_i\right]\right|\nonumber\\
 &\le & M_{n3}\left(h^*_n\right)^{p}.\label{unif2dagp4}
\end{eqnarray}
A similar calculation shows that there exists a constant $M_{n4}\in (0,\infty)$ such that
\begin{equation}\label{unif3starp4}
\sup_{\psi_n\in\Psi_n}\left|\frac{1}{nh_n^*} E_{\psi_n}\left[\bm{e}^{\top}_1\sum_{i=1}^n\bm{S}_i K_i\zeta_i\right]\right|\le M_{n4}\left(h^*_n\right)^{p}.
\end{equation}
Combine (\ref{unif2starp4}), (\ref{unif2dagp4}) and (\ref{unif3starp4}) with the assumption that $n(h_n^*)^{2p+1}=c<\infty$ to deduce that
\begin{equation}\label{lastconv}
\sqrt{nh_n^*}\sup_{\psi_n\in\Psi_n}\left|E_{\psi_n}\left[\bm{e}^{\top}_1\left(\sum_{i=1}^n\bm{S}_i K_i\bm{S}_i^{\top}\right)^{-1}\sum_{i=1}^n\bm{S}_i K_i\zeta_{ni}\right]\right|=0.
\end{equation}
The desired bound, i.e., (\ref{unifonep1}), follows from (\ref{unifstarp2b}) and (\ref{lastconv}). 

\subsubsection{Proof of (\ref{uniftwop1})}

\noindent Arguments similar to those used in the proof of (\ref{unifonep1}) based on the uniform boundedness of 
$\hat{\eta}_n(\cdot)$ over all $n$ and on the boundedness of $K(\cdot)$ and of $m^{(j)}_{F_0}(\cdot)$ 
for each $j\in\{0,1,\ldots,p\}$ show that
\begin{eqnarray}
 & &nh_n^*\sup_{\psi_n\in\Psi_n}\left|Var_{\psi_n}\left[\hat{\theta}_n^*\right]\right.\nonumber\\
 & &\left.-E_{\psi_n}\left[\bm{e}^{\top}_1\left(\sum_{i=1}^n\bm{S}_i K_i\bm{S}_i^{\top}\right)^{-1}\left(\sum_{i=1}^n\bm{S}_i K_i\zeta_i\right)\left(\sum_{i=1}^n\bm{S}^{\top}_i K_i\zeta_i\right)\right.\right.\nonumber\\
 & &\left.\left.\cdot\left(\sum_{i=1}^n\bm{S}_i K_i\bm{S}^{\top}_i\right)^{-1}\bm{e}_1\right]\right|\nonumber\\
 &=&o(1).\label{unifstarp5}
\end{eqnarray}
Next, consider that the uniform boundedness of $\hat{\eta}_n(\cdot)$ over all $n$ and the boundedness of 
$K(\cdot)$, and of $m^{(1)}_{F_0}(\cdot)$ imply that there exists constants $M_{n5}, M_{n6}\in (0,\infty)$ not depending on $\psi_n$ 
such that for every sequence $\left\{\psi_n:\,\psi_n\in\Psi_n\right\}$:
\begin{eqnarray}
 & &E_{\psi_n}\left[\bm{e}^{\top}_1\left(\sum_{i=1}^n\bm{S}_i K_i\bm{S}_i^{\top}\right)^{-1}\left(\sum_{i=1}^n\bm{S}_i K_i\zeta_i\right)\left(\sum_{i=1}^n\bm{S}_i^{\top} K_i\zeta_i\right)\right.\nonumber\\
 & &\left.\cdot\left(\sum_{i=1}^n\bm{S}_i K_i\bm{S}_i^{\top}\right)^{-1}\bm{e}_1\right]\nonumber\\
 &\le & M_{n5}\cdot\frac{1}{n(h_n^*)^2} E_{\psi_n}\left[ K^2\left(\frac{1}{h_n^*}\left(\eta_0(\bm{Z}_1)-1\right)\right)\zeta_1^2\right]\nonumber\\
 &\le & M_{n6}\cdot\frac{1}{nh_n^*},\label{unif2starp5} 
\end{eqnarray}
where use has been made of the assumptions that $E_{\psi_n}\left[U_1^2\right]<\infty$ for all $\left\{\psi_n\right\}$, and that 
$\int K^2(u) du<\infty$.

It follows from (\ref{unif2starp5}) that
\begin{eqnarray}
 & & nh_n^*\sup_{\psi_n\in\Psi_n} E_{\psi_n}\left[\bm{e}^{\top}_1\left(\sum_{i=1}^n\bm{S}_i K_i\bm{S}_i^{\top}\right)^{-1}\left(\sum_{i=1}^n\bm{S}_i K_i\zeta_i\right)\left(\sum_{i=1}^n\bm{S}_i^{\top} K_i\zeta_i\right)\right.\nonumber\\
 & &\left.\cdot\left(\sum_{i=1}^n\bm{S}_i K_i\bm{S}_i^{\top}\right)^{-1}\bm{e}_1\right]\nonumber\\
 &<& \infty.\label{unifdagp5}
\end{eqnarray}
Combining (\ref{unifstarp5}) and (\ref{unifdagp5}) enables one to deduce that $nh_n^* \sup_{\psi_n\in\Psi_n} Var_{\psi_n}\left[\hat{\theta}_n^*\right]<\infty$.
This completes the proof of (\ref{uniftwop1}).

\bibliography{chuan}
\bibliographystyle{cje}

\end{document}